\newtheorem{defi}{Definition}[section]
\newtheorem{theorem}{Theorem}[section]
\newtheorem{corollary}[theorem]{Corollary}
\begin{document}
\label{firstpage}

\title{Evaluating Link Prediction Methods}

\author[Y. Yang et al]{Yang Yang$^{1,2}$, Ryan N. Lichtenwalter$^{1,2}$, and Nitesh V. Chawla$^{1,2}$\\ $^1$Department of Computer Science and Engineering, \\
384 Fitzpatrick Hall, University of Notre Dame, Notre Dame, IN 46556, USA; \\
$^2$Interdisciplinary Center for Network Science \& Applications, \\
384 Nieuwland Hall of Science, University of Notre Dame, Notre Dame, IN 46556, USA \\
\{yyang1, rlichten, nchawla\}@nd.edu }
 
\maketitle

\begin{abstract}
Link prediction is a popular research area with important applications in a variety of disciplines, including biology, social science, security, and medicine. The fundamental requirement of link prediction is the accurate and effective prediction of new links in networks. While there are many different methods proposed for link prediction, we argue that the practical performance potential of these methods is often unknown because of challenges in the evaluation of link prediction, which impact the reliability and reproducibility of results. We describe these challenges, provide theoretical proofs and empirical examples demonstrating how current methods lead to questionable conclusions, show how the fallacy of these conclusions is illuminated by methods we propose, and develop recommendations for consistent, standard, and applicable evaluation metrics. We also recommend the use of precision-recall threshold curves and associated areas in lieu of receiver operating characteristic curves due to complications that arise from extreme imbalance in the link prediction classification problem.
\end{abstract}

\begin{keywords}
Link Prediction and Evaluation; Sampling; Class Imbalance; Threshold Curves; Temporal Effects on Link Prediction
\end{keywords}

\section{Introduction}
Link prediction generally stated is the task of predicting relationships in a network \cite{sarukkai:2000,getoor:2003,liben-nowell:2003,taskar:2003,hasan:2005,huang:2005}. Typically it is approached specifically as the task of predicting new links given some set of existing nodes and links. 
Existing nodes and links may be present from a prior time period, where general link prediction is useful to anticipate future behavior \cite{scripps:2008,leroy:2010,lichtenwalter:2010}. Alternatively, existing nodes and links may also represent some portion of the topology in a network whose exact topology is difficult to measure. In this case, link prediction can identify or substantially narrow possibilities that are difficult or expensive to determine through direct experimentation \cite{martinez:1999,sprinzak:2003,szilagyi:2005}. Thus, even in domains where link prediction seems impossibly difficult or offers a high ratio of false positives to true positives, it may be useful \cite{clauset:2008,clauset:2009}. Formally, we can state the link prediction problem as below (first implicitly defined in the work of \cite{liben-nowell:2007}):
\begin{defi}
In a network $G=(V,E)$, $V$ is the set of nodes and $E$ is the set of edges. The link prediction task is to predict whether there is or will be a link $e(u, v)$ between a pair of nodes $u$ and $v$, where $u, v \in V$ and $e(u, v) \notin E$.
\end{defi}
\indent {Generally the link prediction problem falls into two categories:}
\begin{itemize}
\item {
Predict the links that will be added to an observed network in the future. In this scenario, the link prediction task is applicable to predicting future friendship or future collaboration, for instance, and it is also informative for exploring mechanisms underlying network evolution.
}

\item {
Infer missing links from an observed network. The prediction of missing links is mostly used to identify lost or hidden links, such as inferring unobserved protein-protein interactions.
}

\end{itemize}
The prediction of future links considers network evolution while the inference of missing links considers a static network \cite{liben-nowell:2007}. In either of the two scenarios, instances in which the link forms or is shown to exist compose the positive class, and instances in which the link does not form or is shown not to exist compose the negative class. The negative class represents the vast majority of the instances, and the positive class is a small minority.


\subsection{The Evaluation Conundrum} 

Link prediction entails all the complexities of evaluating ordinary binary classification for imbalanced class distributions, but it also includes several new parameters and intricacies that make it fundamentally different. Real-world networks are often very large and sparse, involving many millions or billions of nodes and roughly the same number of edges. Due to the resulting computational burden, \textit{test set sampling} is common in link prediction evaluation \cite{liben-nowell:2003,hasan:2005,murata:2007}. Such sampling, when not properly reflective of the original distribution, can greatly increase the likelihood of biased evaluations that do not meaningfully indicate the true performance of link predictors. The selected \textit{evaluation metric} can have a tremendous bearing on the apparent quality and ranking of predictors even with proper testing distributions \cite{raeder:2010}. The \textit{directionality} of links also introduces issues that do not exist in typical classification tasks. Finally, for tasks involving network evolution, such as predicting the appearance of links in the future, the classification process involves \textit{temporal aspects}. Training and testing set constructs must appropriately address these nuances.

To better describe the intricacies of evaluation in the link prediction problem, we first depict the framework for evaluation \cite{wang:2007,omadadhain:2005a,omadadhain:2005b,lichtenwalter:2010} in Figure~\ref{fig:linkpred}. Computations occur within network snapshots based on particular segments of data. Comparisons among predictors require that evaluation encompasses precisely the same set of instances whether the predictor is unsupervised or supervised. We construct four network snapshots:

\begin{itemize}
	\item Training features: data from some period in the past, $G_{t-x}$ up to $G_{t-1}$, from which we derive feature vectors for training data.
	\item Training labels: data from $G_{t}$, the last training-observable period, from which we derive class labels, whether the link forms or not, for the training feature vectors.
	\item Testing features: data from some period in the past up to $G_{t}$, from which we derive feature vectors for testing data. Sometimes it may be ideal to maintain the window size that we use for the training feature vector, so we commence the snapshot at $G_{t-x+1}$. In other cases, we might want to be sure not to ignore effects of previously existing links, so we commence the snapshot at $G_{t-x}$.
	\item Testing labels: data from $G_{t+1}$, from which we derive class labels for the testing feature vector. This data is strictly excluded from inclusion in any training data.
\end{itemize}

A classifier is constructed from the training data and evaluated on the testing data. There are always strictly divided training and testing sets, because $G_{t+1}$ is never observable in training.

Note that for supervised methods, the division of the training data into a feature and label network is not strictly necessary. Edges from the training data may be used to calculate features for instances with a positive class label, and missing edges in the training data might be used to calculate features for instances with negative class labels (Figure~\ref{fig:toy_example}). Nonetheless, division into a feature and label network may increase both the freshness and power of the data being modeled, because the links that appear are recent and selected according to the underlying evolution process. Testing data must \emph{always} be divided in feature and label networks where labels are drawn from unobserved data in $G_{t+1}$.


\begin{figure}
	\centering
	\includegraphics[width=0.75\linewidth]{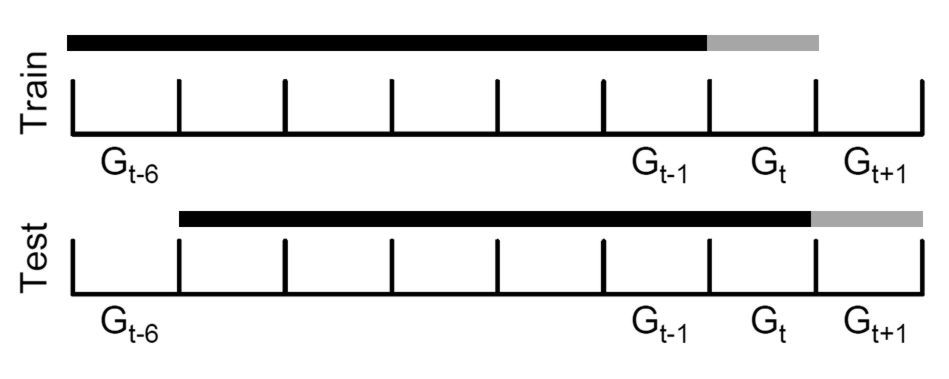}
	\caption{Link prediction and evaluation. The black color indicates snapshots of the network from which link prediction features are calculated (feature network). The gray color indicates snapshots of the network from which the link prediction instances are labeled (label network). We can observe all links at or before time $t$, and we aim to predict future links that will occur at time $t+1$.}
	\label{fig:linkpred}
\end{figure}

\begin{figure}
	\centering
	\includegraphics[width=3.8in]{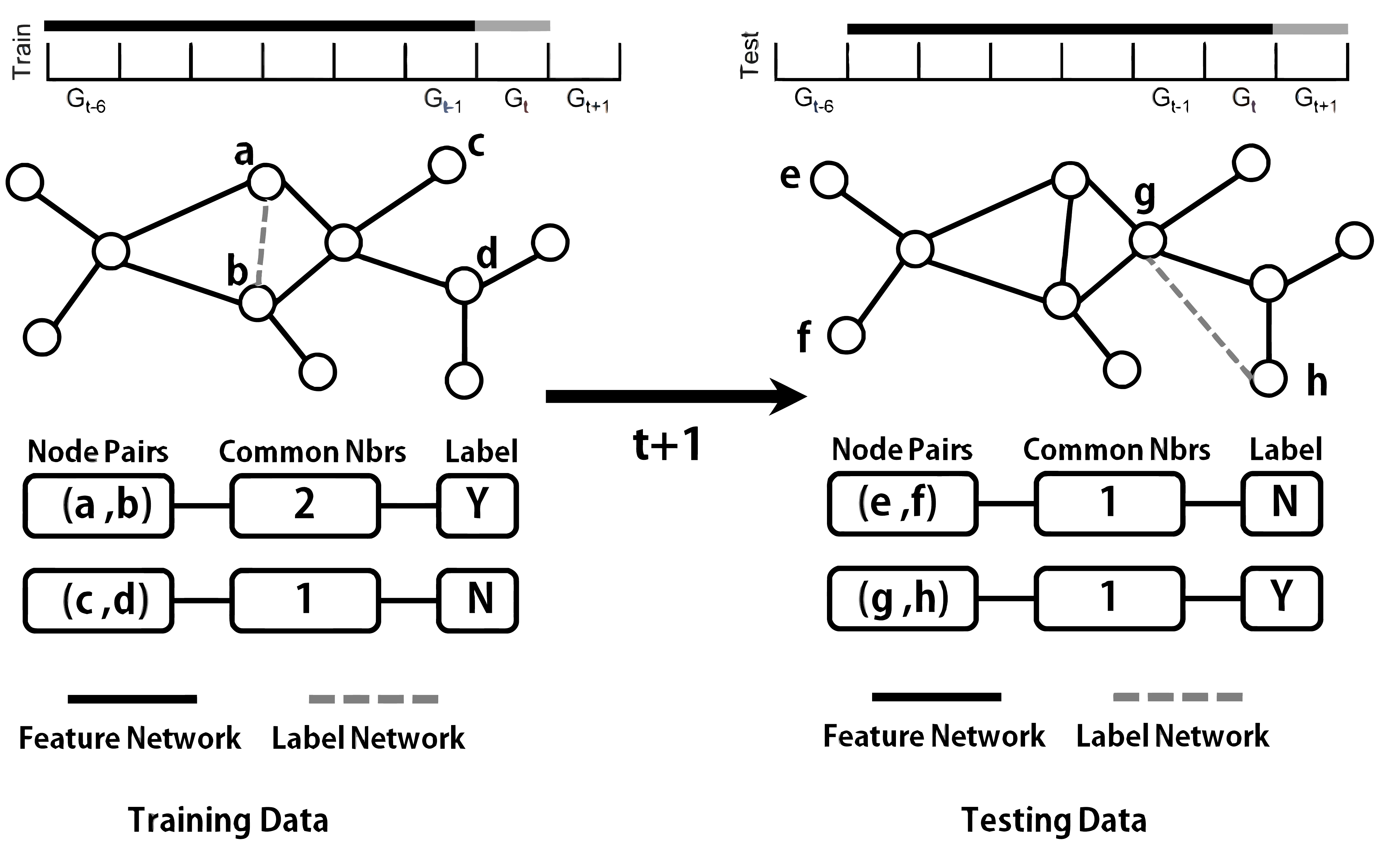}
	\caption{Link prediction and evaluation toy example. This graph provides a toy example for the construction of training and testing sets in the link prediction problem. Black links correspond to the black period in Figure~\ref{fig:linkpred}, from which link prediction features were extracted. Gray links correspond to the gray future period in Figure~\ref{fig:linkpred}, which is used to label link prediction instances. Based on the black observed network we find that the \textit{common neighbors} values for $\left(a,b\right)$ and $\left(c,d\right)$ are $2$ and $1$ respectively. Based on the gray label network we know that the link between $a$ and $b$ forms and the link between $c$ and $d$ does not form. The same methodology is applied to construct the testing data.}
	\label{fig:toy_example}
\end{figure}

\textbf{Test set sampling} is a common practice in link prediction evaluation \cite{wang:2007,lichtenwalter:2010,hasan:2005,leskovec:2010,murata:2007,liben-nowell:2003,narayanan:2011,scripps:2008,scellato:2011}. Link prediction should be evaluated with a complete, unsampled test set whenever possible. Randomly removing and subsequently predicting ``test edges'' should be a last resort in networks where multiple snapshots do not exist. Even in networks that do not evolve, such as protein-protein interaction networks, it is possible to use high-confidence and low-confidence edges to construct different networks to more effectively evaluate models. Removing and predicting edges can remove information from the original network in unpredictable ways \cite{stumpf:2005}, and the removed information has the potential to affect prediction methods differently. More significantly, in longitudinal networks, randomly sampling edges for testing from a single data set in a supervised approach reflects prediction performance with respect to a random process instead of the true underlying regulatory mechanism.

The reason sampling seems necessary, and a primary reason link prediction is such a challenging domain within which to evaluate and interpret performance, is \textbf{extreme class imbalance}. We extensively analyze issues related to sampling in Section \ref{sec:sampling} and cover the significance of class imbalance on evaluation in Section~\ref{sec:imbalance}. Fairly and effectively evaluating a link predictor in the face of imbalance requires determining which \textbf{evaluation metric} to use (Sections \ref{sec:evaluation}, \ref{sec:topk} and \ref{sec:prcurve}), whether to restrict the enormous set of potential predictions, and how best to restrict the set if so.

Another issue is \textbf{directionality}, for which there is no analog in typical classification tasks. In undirected networks, the same method may predict two different results for a link between $v_a$ and $v_b$ depending on the order in which vertices are presented. There must be one final judgment of whether the link will form, but that judgment differs depending on arbitrary assignment of source and target identities. We expand upon this in Section \ref{sec:directionality}.

All of these issues impede the production of fair, comparable results across published methods. Perhaps even more importantly, they interfere with rendering judgments of performance that indicate what we might really expect of our prediction methods in deployment scenarios. It is difficult to compare from one paper to the next, and many frequently employed evaluation methods produce results that are unfairly favorable to a particular method or otherwise unrepresentative of expected deployment performance. We seek to provide a reference for important issues to consider and a set of recommendations to those performing link prediction research.

\subsection{Contributions}
We explore a range of issues and challenges that are germane to the evaluation of link predictors. Some of them are discussed in our previous work \cite{lichtenwalter:2012}, of which this paper is a substantial expansion. We introduce several formalisms entirely absent from the previous work and provide much more principled coverage of underlying issues and challenges with evaluating link prediction methods. In addition to more complete coverage of previously published topics, the extension includes the following significant advances over existing work:

\begin{itemize}
\item Additional data sets for stronger empirical demonstration.
\item Theoretical proof of several statements surrounding evaluation.
\item Discussion of the advantages and disadvantages of another popular metric of link prediction evaluation, the top $K$ predictive rate.
\item Exploration of evaluation characteristics when considering link prediction according to temporal distance. This is related to the existing study of evaluation characteristics when considering link prediction according to geodesic distance.
\end{itemize}

The point of this work is not to illustrate the superiority of one method of link prediction over another, which distinguishes it from most previous link prediction publications. Our objective is to identify fair and effective methods to evaluate link prediction performance. Overall, \textbf{our contributions} are summarized as follows:
\begin{itemize}
	\item We discuss the challenges in evaluating link prediction algorithms.
	\item We empirically demonstrate the effects of test set sampling on link prediction evaluation and offer related proofs.
	\item We demonstrate that commonly used evaluation metrics lead to deceptive conclusions with respect to link prediction results. We additionally show that precision-recall curves are a fair and consistent way to view, understand, and compare link prediction results.
	\item We propose guidelines for a fair and effective framework for link prediction evaluation.
\end{itemize}

\section{Preliminaries}
\subsection{Data and Methods}
\label{sec:data}
We report all results on four publicly available longitudinal data sets. We will hence refer to these data sets as Condmat \cite{newman:2001}, DBLP \cite{deng:2011}, Enron \cite{leskovec:2009} and Facebook \cite{viswanath:2009}. They are constructed by moving through sequences of collaboration events (Condmat and DBLP) or communication events (Enron and Facebook). In Condmat and DBLP each collaboration of $k$ individuals forms an undirected $k$-clique with weights in inverse linear proportion to $k$. The detailed information about these data sets are presented in Table~\ref{tab:network_properties}. These networks are weighted and undirected.

\begin{table}
\caption{{Network Characteristics}}
\centering
\resizebox{4.2in}{!}{
\begin{tabular}{|l|l|l|l|l|l|l|l|} \hline
\textbf{Networks} & Condmat & DBLP & Enron & Facebook \\ \hline \hline
Nodes & 13,873 & 3,215 & 16,922 & 1,829\\ \hline
Edges & 55,269 & 9,816 & 34,825 & 13,838 \\ \hline
Density & 5.74e-4 & 1.90e-3  & 2.00e-4 & 8.27e-3 \\ \hline
\end{tabular}
}
\label{tab:network_properties}
\end{table}

We use three different link prediction methods, and each method represents a different modeling approach. The preferential attachment predictor \cite{barabasi:1999,barabasi:2002,newman:2001b} uses degree product and represents predictors based on node statistics. The Adamic/Adar predictor \cite{adamic:2001} represents common neighbors predictors. The PropFlow predictor \cite{lichtenwalter:2010} represents predictors based on paths and random walks.

\emph{We emphasize here that the point of this work is not to illustrate the superiority of one method of link prediction over another. It is instead to demonstrate that the described effects and arguments have real impacts on performance evaluation. If we show that the effects pertain in at least one network, it follows that they may exist in others and must be considered}.

\subsection{Definitions and Terminology}
\label{sec_glossary}
\textbf{Network:} A network is represented as $G=(V,E)$, where $V$ is the set of nodes and $E$ is the set of edges. For two nodes $u,v \in V$, $e(u, v) \in E$ if there is a link between nodes $u$ and $v$.\\
\textbf{Neighbors:} In a network $G=(V,E)$, for a node $u$, $\Gamma(u) = \{v | (u, v) \in E\}$ represents the set of neighbors of node $u$. \\
\textbf{Link Prediction:} The link prediction task in a network $G=(V,E)$ is to determine whether there is or will be a link $e(u, v)$ between a pair of nodes $u$ and $v$, where $u, v \in V$ and $e(u, v) \notin E$.\\
\textbf{Common Neighbors:} For two nodes, $u$ and $v$ with sets of neighbors $\Gamma(u)$ and $\Gamma(v)$ respectively, the set of their common neighbors is defined as $\Gamma(u) \cap \Gamma(v)$, and the cardinality of the set is $|\Gamma(u) \cap \Gamma(v)|$. Often as $|\Gamma(u) \cap \Gamma(v)|$ grows, the likelihood that $u$ and $v$ will be connected also increases \cite{newman:2001b}.\\
\textbf{Adamic/Adar:} In the link prediction problem, the Adamic/Adar \cite{adamic:2001} metric is defined as below, where $n \in \Gamma(u) \cap \Gamma(v)$ is the set of common neighbors of $u$ and $v$:
\begin{align*}
&AA(u, v) = \sum_{n \in \Gamma(u) \cap \Gamma(v)} \frac{1}{\log{|\Gamma(n)}|}
\end{align*}
\textbf{Preferential Attachment:} The Preferential Attachment \cite{barabasi:1999} metric is the multiplication of the degrees of nodes $u$ and $v$:
\begin{align*}
&PA(u, v) = |\Gamma(u)| |\Gamma(v)|
\end{align*}
\textbf{PropFlow:} The PropFlow \cite{lichtenwalter:2010} method corresponds to the probability that a restricted, outward-moving random walk starting at $u$ ends at $v$ using link weights as transition probabilities. It produces a score $\text{PropFlow}(u,v)$ that can serve as an estimation of the likelihood of new links.\\
\textbf{Geodesic Distance:} The shortest path length $\ell$ between two given nodes $u$ and $v$.\\
\textbf{Prediction Terminology:} \textbf{TP} stands for \textit{true positives}, \textbf{TN} stands for \textit{true negatives}, \textbf{FP} stands for \textit{false positives}, and \textbf{FN} stands for \textit{false negatives}. \textbf{P} stands for \textit{positive instances}, and \textbf{N} stands for \textit{negative instances}.\\
\textbf{Sensitivity/true positive rate:}
\begin{align*}
\text{Sensitivity} = \frac{|TP|}{|TP|+|FN|}
\end{align*}\\
\textbf{Specificity/true negative Rate:}
\begin{align*}
\text{Specificity} = \frac{|TN|}{|FP|+|TN|}
\end{align*}\\
\textbf{Precision:}
\begin{align*}
\text{Precision} = \frac{|TP|}{|TP|+|FP|}
\end{align*}\\
\textbf{Recall:}
\begin{align*}
\text{Recall} = \frac{|TP|}{|TP|+|FN|}
\end{align*}\\
\textbf{Fallout/false positive rate:} The false positive rate (fallout in information retrieval) is defined as below:
\begin{align*}
\text{Fallout} = \frac{|FP|}{|FP|+|TN|}
\end{align*}\\
\textbf{Accuracy:}
\begin{align*}
\text{Accuracy} = \frac{|TP|+|TN|}{|P|+|N|}
\end{align*}\\
\textbf{Top $K$ Predictive Rate/R-precision:} The top $K$ predictive rate is the percentage of correctly classified positive samples among the top $K$ instances in the ranking produced by a link predictor $\mathcal{P}$. We denote the top $K$ predictive rate as $TPR_{K}$, where $K$ is a definable threshold. $TPR_{K}$ is equivalent to R-precision in information retrieval.\\
\textbf{ROC:} The receiver operating characteristic (ROC) represents the performance trade-off between true positives and false positives at different decision boundary thresholds \cite{mason:2002,fawcett:2004}.\\
\textbf{AUROC:} Area under the ROC curve.\\
\textbf{Precision-recall Curve:} Precision-recall curves are also threshold curves. Each point corresponds to a different score threshold with a different precision and recall value \cite{davis:2006}. \\
\textbf{AUPR:} Area under the precision-recall curve.

\section{Evaluation Metrics and Existing Challenges}
\label{sec:evaluation}
Evaluation metrics typically used in link prediction overlap those used in any binary classification task. They can be divided into two broad categories: fixed-threshold metrics and threshold curves. Fixed-threshold metrics suffer from the limitation that some estimate of a reasonable threshold must be available in the score space. In research contexts, where we are curious about performance without necessarily being attached to any particular domain or deployment, such estimates are generally unavailable. Threshold curves, such as the receiver operating characteristic (ROC) curve \cite{mason:2002,fawcett:2004} and derived curves like cost curves \cite{drummond:2006} and precision-recall curves \cite{davis:2006}, provide alternatives in these cases.

\subsection{Fixed-threshold Metrics}
Fixed-threshold metrics rely on different types of thresholds: prediction score, percentage of instances, or number of instances. In link prediction specifically, there are additional constraints. Some link prediction methods produce poorly calibrated scores. For instance, it often may not hold that two vertices with a degree product of 10,000 are 10 times as likely to form a new link as two with a degree product of 1,000. This is not critical when the goal of the model is to rank; however, when the scores of link predictors are poorly calibrated, it is difficult to select an appropriate threshold for any fixed-threshold performance evaluation metric. A concrete and detailed example is provided in Section~\ref{sec:topk}, where we find that even a minor change of threshold value can lead to a completely different evaluation of link prediction models.

In Figure~\ref{fig:density} we provide the probability density functions and cumulative density functions for two prediction methods (Adamic/Adar and Preferential Attachment) on DBLP. For ease of interpretation, we use the inverse values of Preferential Attachment and Adamic/Adar, so smaller values of Preferential Attachment and Adamic/Adar indicate higher likelihood of occurrence. In Figure~\ref{fig:density} we observe that preferential Attachment and Adamic/Adar have different types of distributions. This makes it difficult to identify a meaningful threshold based on normalized prediction score. Any attempt to select a value-based threshold will produce an unfair comparison between these two prediction methods.

\begin{figure}
	\centering
	\vspace{-0.4cm}
	\subfloat[AA-KDF]{
		\includegraphics[width=0.4\linewidth]{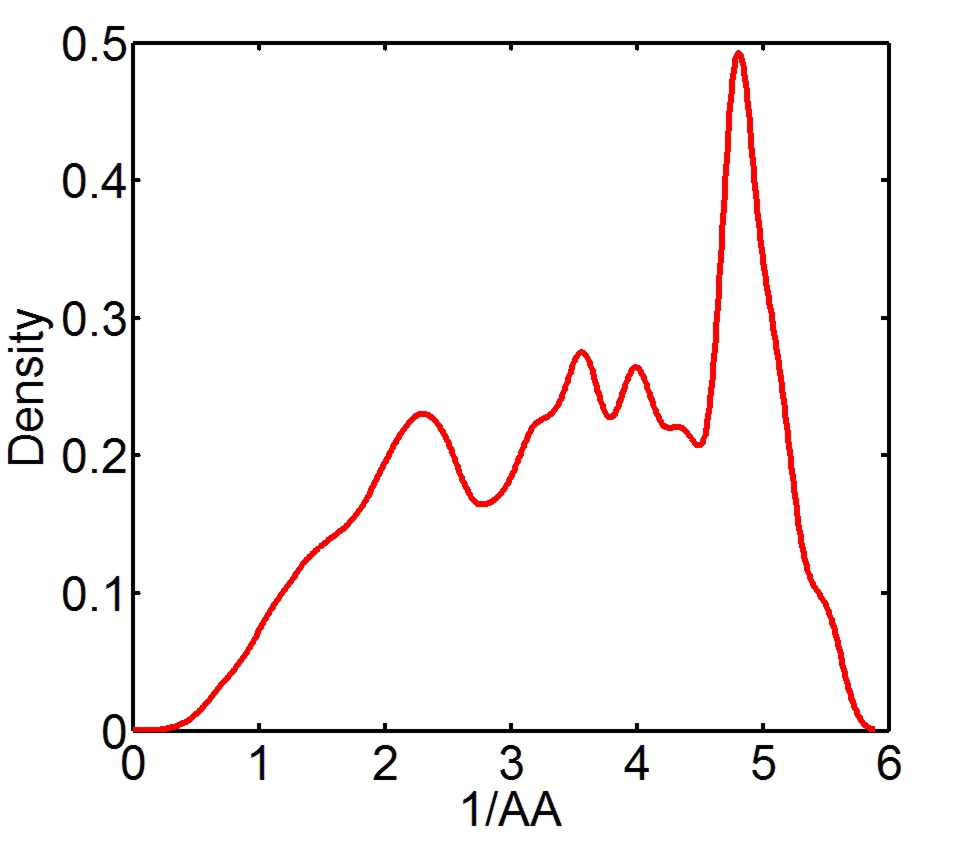}
	}
	\subfloat[PA-KDF]{
		\includegraphics[width=0.4\linewidth]{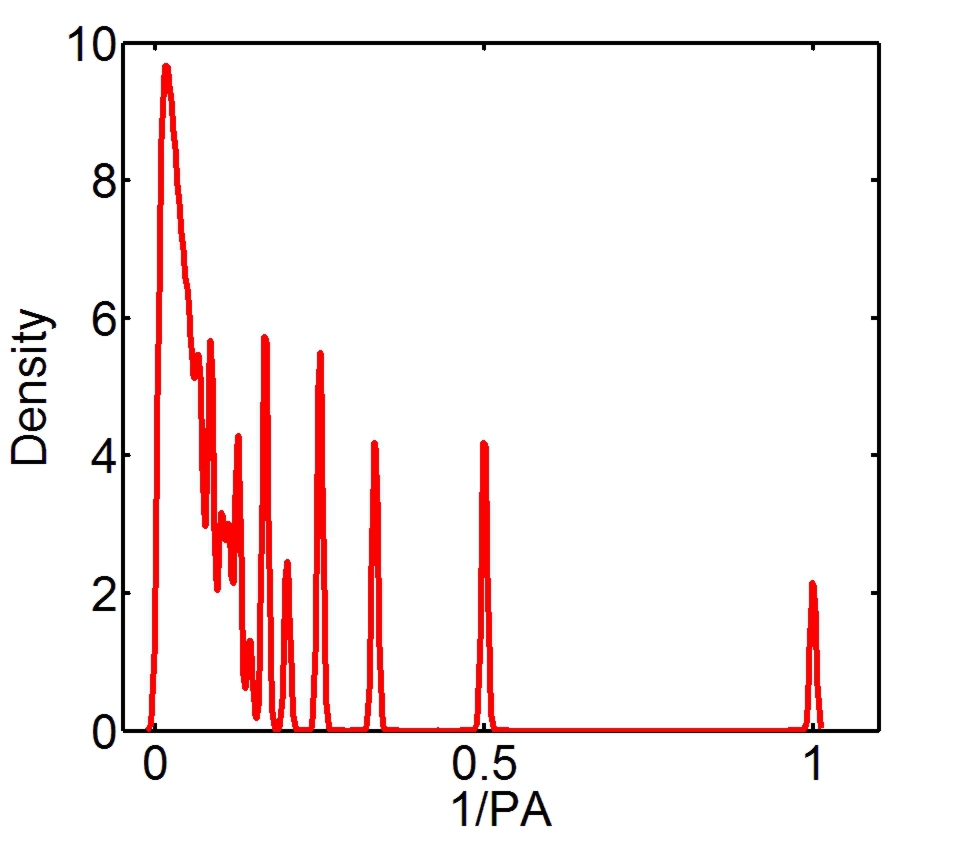}
	}\\
	\subfloat[AA-CDF]{
		\includegraphics[width=0.4\linewidth]{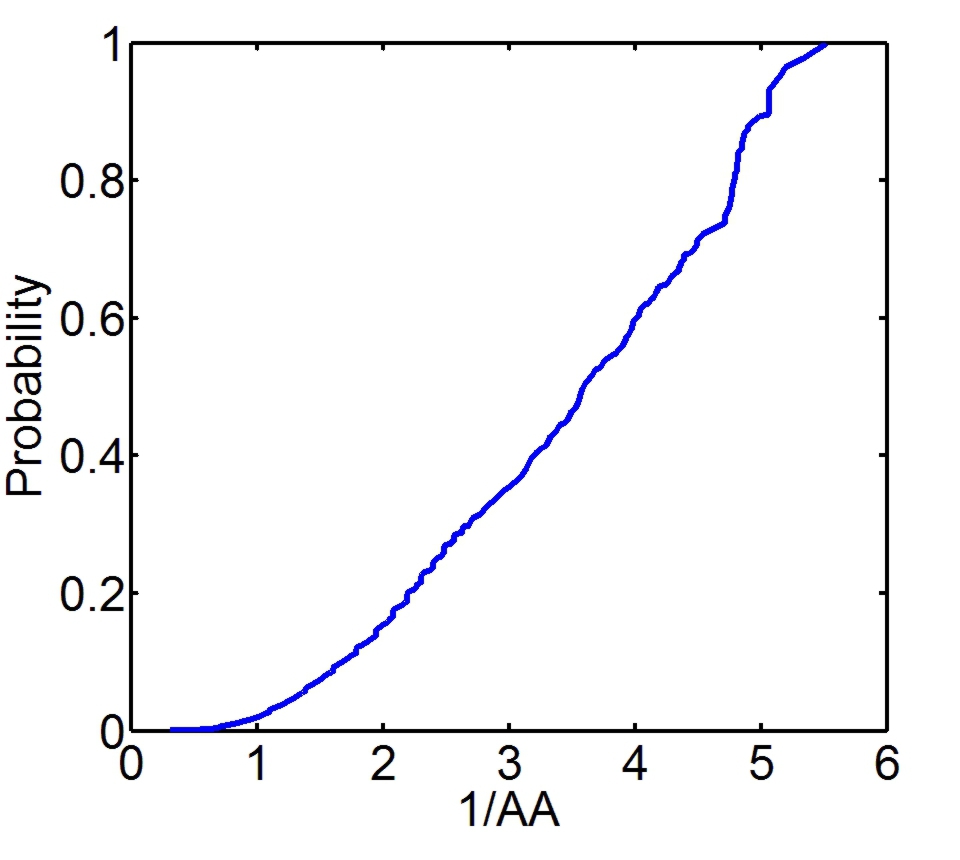}
	}
	\subfloat[PA-CDF]{
		\includegraphics[width=0.4\linewidth]{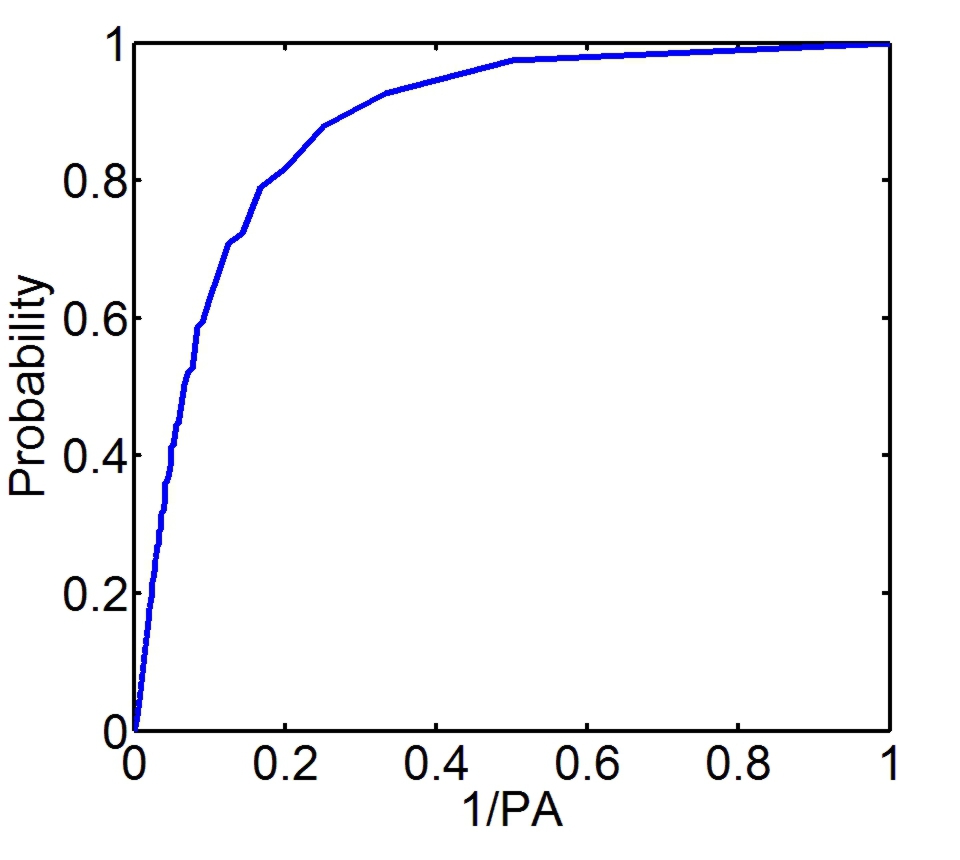}
	}	
	\caption{Kernel Density Function and Cumulative Density Function}
	\label{fig:density}
\end{figure}

A cardinality-based threshold is also problematic. We shall presently advocate exploring links by geodesic distance $\ell$. Within this paradigm, it makes little sense to speak about the $TPR_{K}$ results as a percentage of the total number of potential links in $\ell$. Resources to explore potential links in model deployment scenarios are unlikely to change because the potential positives happen to span a larger distance. It is appropriate to use percentages with $TPR_{K}$ only when $\ell=2$ or $\ell \leq \infty$ and only when there is a reasonable expectation that $K$ is logical within the domain. On the other hand, when using an absolute number of instances, if the data does not admit a trivially simple class boundary, we set classifiers up for unstable evaluation by presenting them with class ratios of millions to one and taking infinitesimal $K$. In Section~\ref{sec:topk} we discuss $TPR_{K}$ evaluation in greater detail.

While \textit{accuracy} \cite{hasan:2005,taskar:2003,fletcher:2011}, \textit{precision} \cite{wang:2007,hasan:2005,omadadhain:2005a,taskar:2003,huang:2005,Wang:2011}, \textit{recall} \cite{hasan:2005,yin:2010,taskar:2003,huang:2005,hopcroft:2011}, and top $K$ equivalents \cite{wang:2007,omadadhain:2005a,backstrom:2011,Wang:2011,dong:2012} are used commonly in link prediction literature, we need to be cautious when results come only in terms of fixed thresholds.

\begin{figure}
	\centering
	\subfloat[ROC Curve]{
		\includegraphics[width=0.40\linewidth]{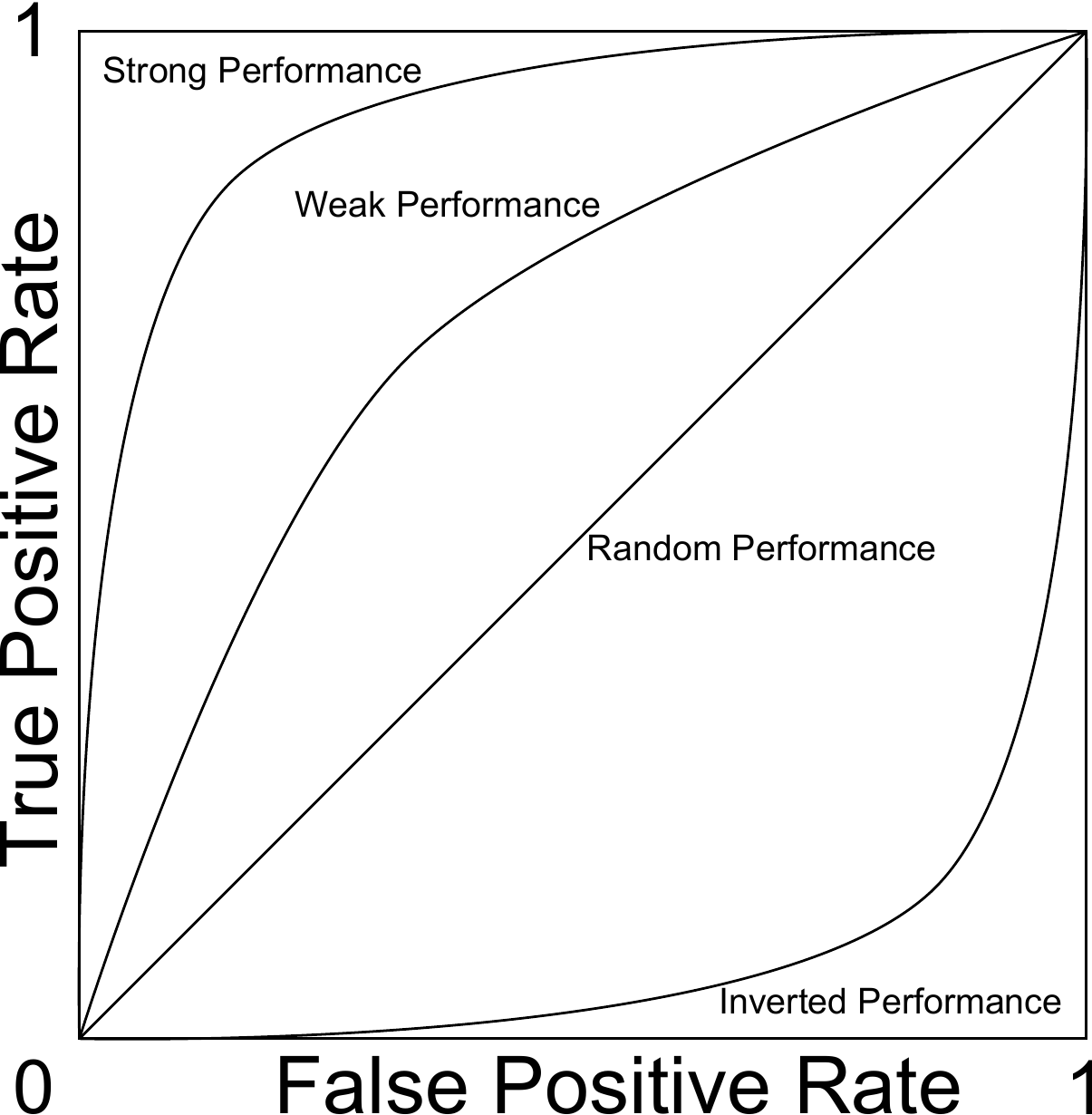}
	}
	\hspace{0.5cm}
	\subfloat[Precision-Recall Curve] {
		\includegraphics[width=0.40\linewidth]{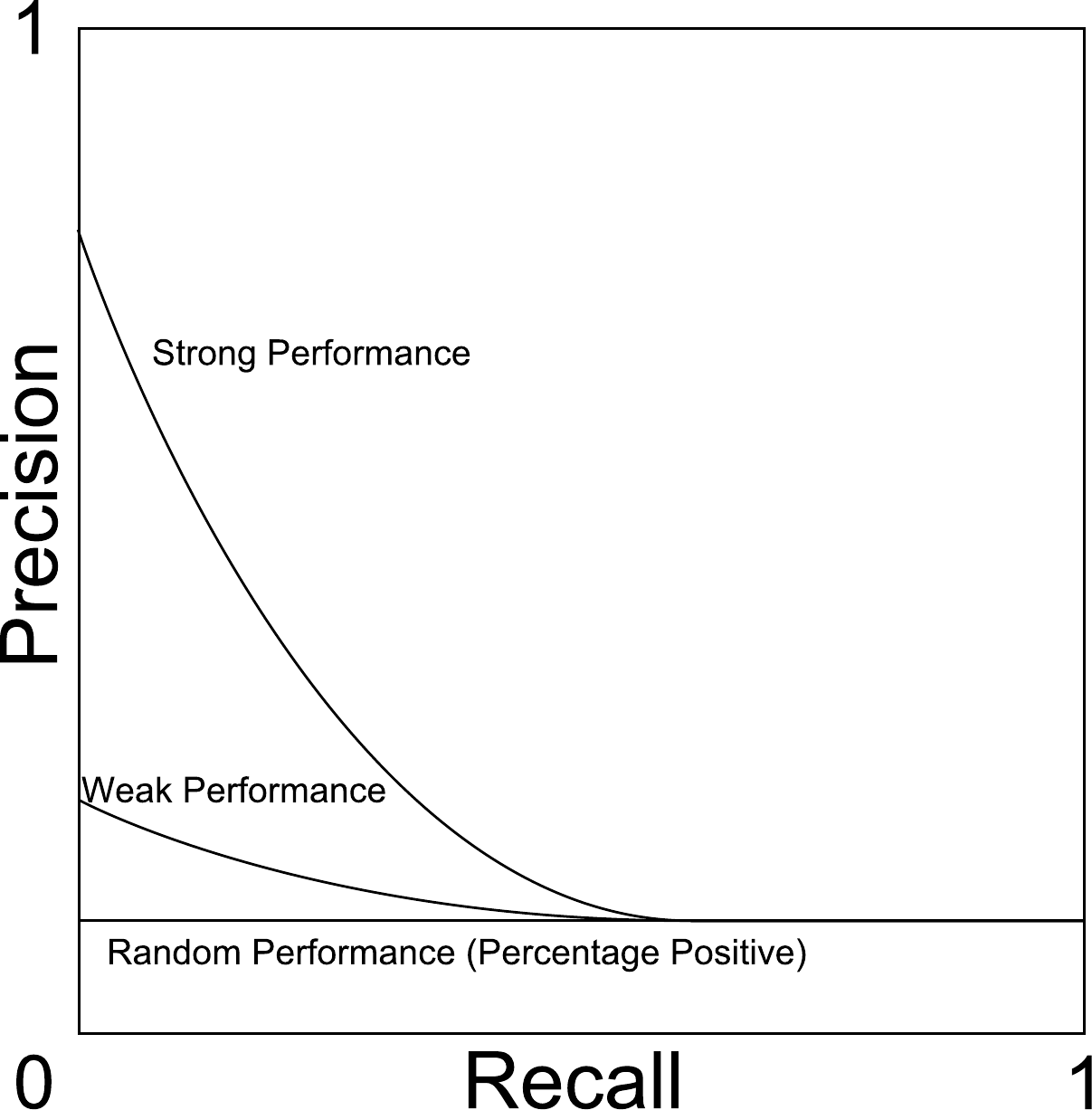}
	}
	\caption{A visualization of the two threshold curves used throughout this paper.}
	\label{fig:curves}
\end{figure}

\subsection{Threshold Curves}
Due to the rarity of cases when researchers are in possession of reasonable fixed thresholds, threshold curves are commonly used in the binary classification community to express results. They are especially popular when the class distribution is highly imbalanced and hence are used increasingly commonly in link prediction evaluation \cite{clauset:2008,wang:2007,lichtenwalter:2010,backstrom:2011,davis:2011,goldberg:2003,Sun:2012,yang:2012}. Threshold curves admit their own scalar measures, which serve as a single summary statistic of performance. The ROC curve shows the true positive rate with respect to the false positive rate at all classification thresholds, and its area (AUROC) is equivalent to the probability of a randomly selected positive instance appearing above a randomly selected negative instance in score space. The precision-recall (PR) curve shows precision with respect to recall at all classification thresholds. It is connected to ROC curves such that one precision-recall curve dominates another if and only if the corresponding ROC curves show the same domination \cite{davis:2006}. We will use ROC curves and precision-recall curves to illustrate our points and eventually argue for the use of precision-recall curves and areas. Figure \ref{fig:curves} illustrates a depiction of the two curve metrics.


\subsection{Class Imbalance}
\label{sec:imbalance}
In typical binary classification tasks, classes are approximately balanced. As a result, we can calculate expectations for baseline classifier performance. For instance, the expected accuracy $\left( \frac{TP+TN}{TP+FP+TN+FN} \right)$, precision $\left( \frac{TP}{TP+FP} \right)$, recall $\left( \frac{TP}{TP+FN} \right)$, AUROC, and AUPR of a random classifier, an all-positive classifier, and an all-negative classifier are 0.5.

Binary classification problems that exhibit class imbalance do not share this property, and the link prediction domain is an extreme example. The expectation for each of the classification metrics diverges for random and trivial classifiers. Accuracy is problematic because its value approaches unity for trivial predictors that always return false. Correct classification of rare positive instances is simultaneously more important since those instances represent exceptional cases of high relative interest. Classification is an exercise in optimizing some measure of performance, so we must not select a measure of performance that leads to a useless result. ROC curves offer a baseline random performance of 0.5 and penalize trivial predictors when appropriate. Optimizing ROC optimizes the production of class boundaries that maximize $TP$ while minimizing $FP$. Precision and precision-recall curves offer baseline performance calibrated to the class balance ratio, and this can present a soberer view of performance.

\subsection{Directionality}
\label{sec:directionality}
In undirected networks, an additional methodological parameter pertains in the task of evaluation, which is rarely reported in literature. In directed networks, an \textit{ordered} pair of vertices uniquely specifies a prediction, because order implies edge direction. In undirected networks, the lack of directionality renders the order ambiguous since two pairs map to one edge. For any given edge, there are two potentially different prediction outputs. For some prediction methods, such as those based on node properties or common neighbors, the prediction output remains the same irrespective of ordering, but this is not true in general. Most notably, many prediction methods based on paths and walks, such as PropFlow \cite{lichtenwalter:2010} and Hitting Time \cite{liben-nowell:2003}, implicitly depend on notions of source and target.
\begin{defi} In an undirected network $G=(V, E)$, for a link prediction method $\mathcal{P}$, if there exists a pair of nodes $u$ and $v$ such that $\mathcal{P} (u,v) \neq \mathcal{P} (v,u)$, then $\mathcal{P}$ is said to be \textit{directional}.
\end{defi}

Contemplate Figure \ref{fig:directionality} with the goal of predicting a link between $v_a$ and $v_b$. Consider the percentage of $\ell=2$ paths starting from $v_a$ that reach $v_b$ versus the percentage of $\ell=2$ paths starting from $v_b$ that reach $v_a$. Clearly, all $\ell=2$ paths originating at $v_b$ reach $v_a$ whereas only a third of the $\ell=2$ paths originating at $v_a$ reach $v_b$. In a related vein, consider the probability of reaching one vertex from another in random walks. Clearly all walks starting at $v_b$ that travel at least two hops must reach $v_a$ whereas the probability of reaching $v_b$ from $v_a$ in two hops is lower. Topological prediction outputs may diverge whenever $v_a$ and $v_b$ are in different automorphism orbits within a shared connected component. 

\begin{figure}
	\centering
	\includegraphics[width=0.3\linewidth]{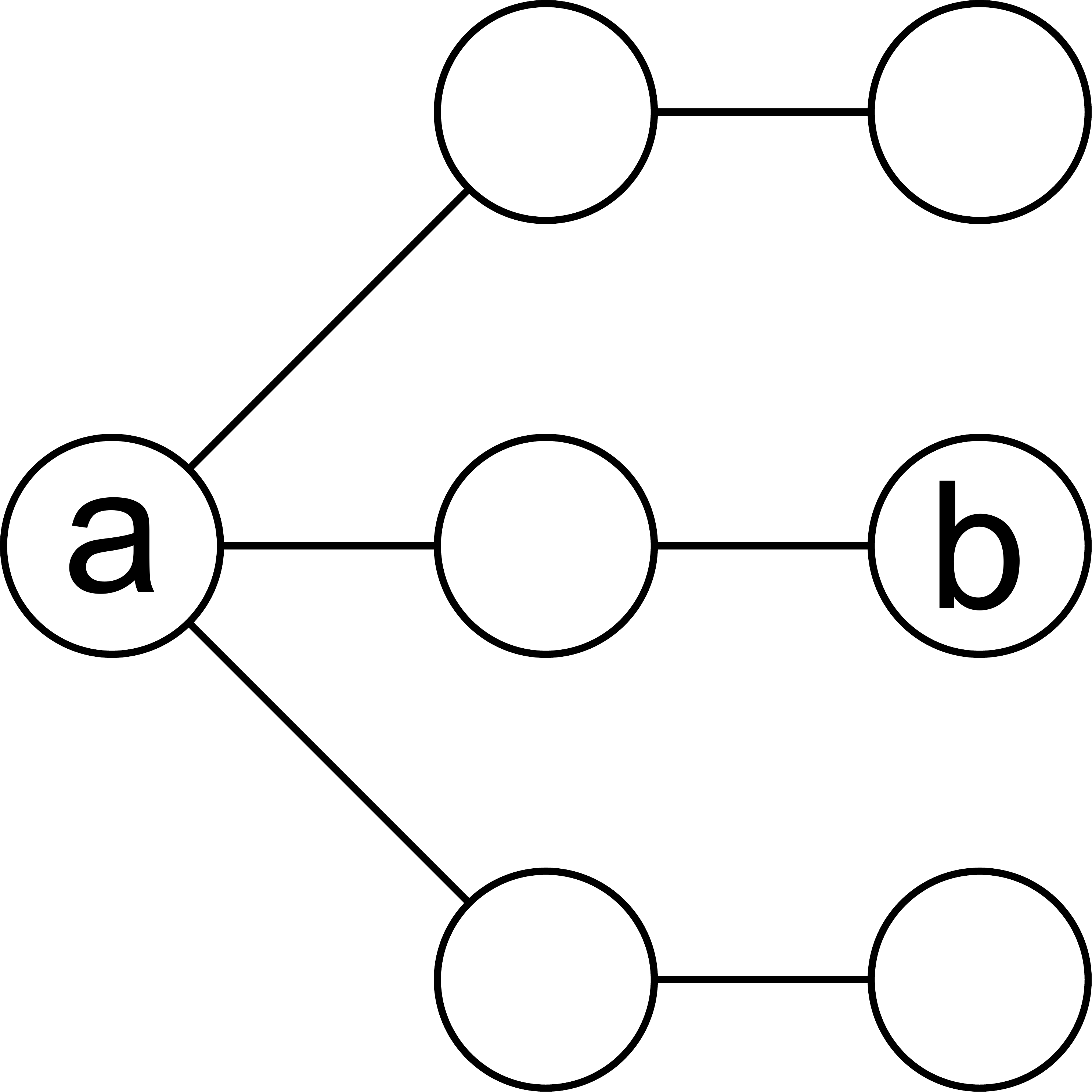}
	\caption{A simple undirected graph example to illustrate that predictions for a link from $v_a$ to $v_b$ may differ according to which vertex is considered as the source.}
	\label{fig:directionality}
\end{figure}

This raises the question of how to determine the final output of a method that produces two different outputs depending on the input. Any functional mapping from two values to a single value suffices. Selection of an optimal method depends on both the predictor and the scenario and is outside the scope of this paper. Nonetheless, it is important for reasons of reproducibility not to neglect this question when describing results for directional predictors.

The approach consistent with the process for directed networks would be to generate a ranked list of scores that includes predictions with each node alternately serving as source and destination. This approach is workable in a deployment scenario, since top ranked outputs may be selected as predicted links regardless of the underlying source and target. It is not feasible as a research method for presenting results, however, because the meaning of the resulting threshold curves is ambiguous. There is also no theoretical reason to suspect any sort of averaging effect in the construction of threshold curves.

To emphasize this empirically, we computed AUROC in the undirected Condmat data set for two methods using the PropFlow predictor (predicting links within 2-hop distance). The PropFlow predictor is directional, so for two nodes $u$ and $v$ it is possible that $\text{PropFlow}\left(u, v\right) \neq \text{PropFlow}\left(v, u\right)$. The first method includes a prediction in the output for both underlying orderings, and the resulting area is 0.610. The second method computes the arithmetic mean of the predictions from the two underlying orderings to produce a single final prediction for the rankings, and the resulting area is 0.625.
\begin{figure}
	\centering
	\includegraphics[width=0.5\linewidth]{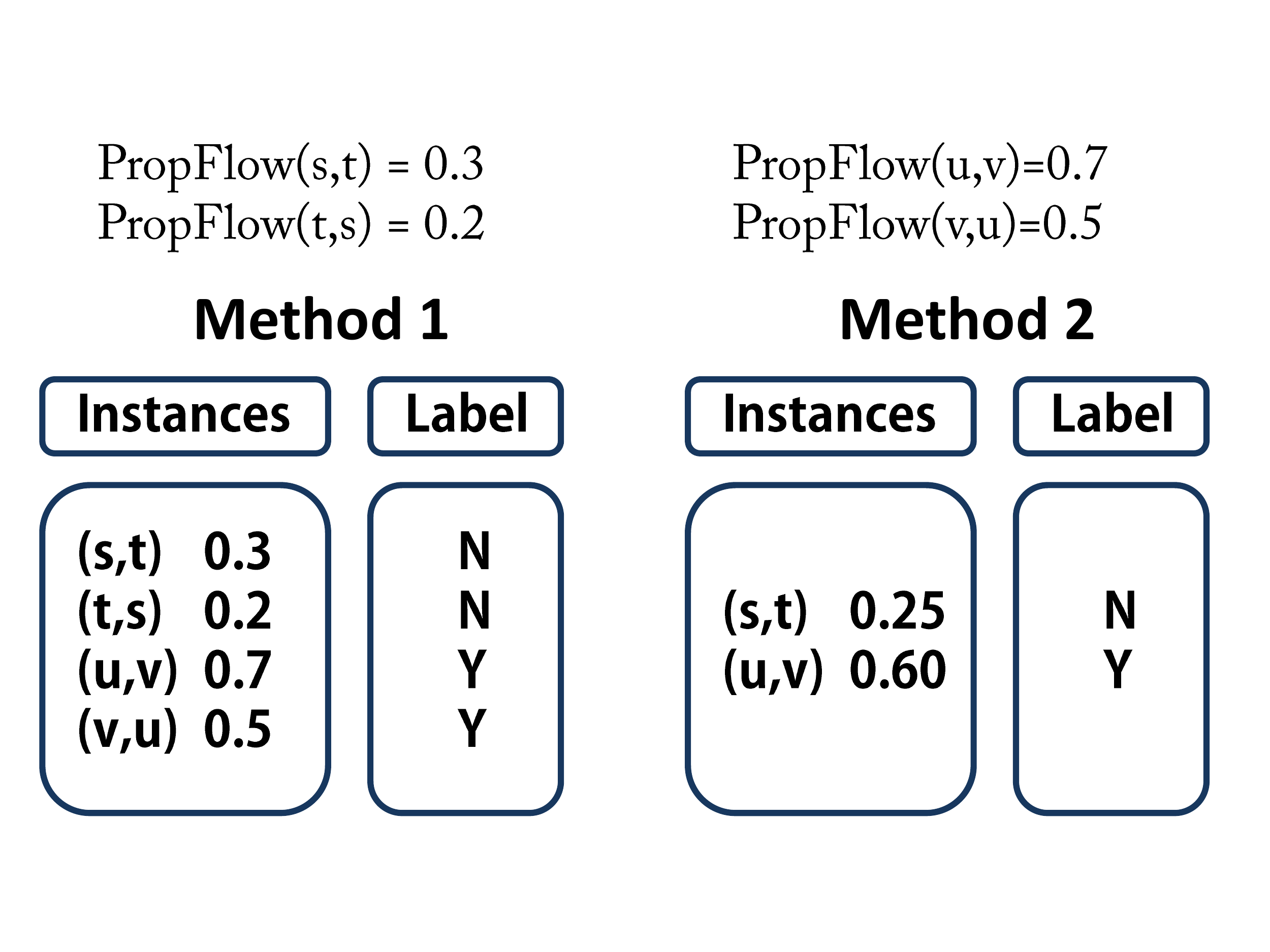}
	\caption{Directionality Toy Example. In this example, for node pairs $\left( s, t\right)$ and $\left( u, v\right)$, $\text{PropFlow(s,t)} \neq \text{PropFlow(t,s)}$ and $\text{PropFlow(u,v)} \neq \text{PropFlow(v,u)}$. In one evaluation method the curve is generated from the ranked list of predictions over both vertex orderings. In the other method the arithmetic mean of the prediction on the two underlying orderings is used to generate the curve. In the Condmat dataset these two methods have different AUROCs: 0.610 and 0.625 respectively.}
	\label{fig:directionality_example}
\end{figure}

\section{Test Set Sampling and Class Imbalance}
\label{sec:sampling}
Test set sampling is popular in link prediction domains, because sparse networks usually include only a tiny fraction of the $O(|V|^2)$ links supported by $V$. Each application of link prediction must provide outputs for what is essentially the entire set of $\binom{|V|}{2}$ links. For even moderately sized networks, this is an enormously large number that places unreasonable practical demands on processing and even storage resources. As a result, there are many sampling methods for link prediction testing sets. One common method is selecting a subset of edges at random from the original complete set \cite{wang:2007,hasan:2005,scripps:2008,Wang:2011,yin:2010,Sun:2012}. Another is to select only the edges that span a particular geodesic distance \cite{lichtenwalter:2010,scellato:2011,yang:2012,lu:2011,scellato:2010}. Yet another is to select edges so that the sub-distribution composed by a particular geodesic distance is approximately balanced \cite{wang:2007,narayanan:2011}. Finally any number of potential methods can select edges that present a sufficient amount of information along a particular dimension \cite{murata:2007,liben-nowell:2003}, for instance selecting only the edges where each member vertex has a degree of at least 2.

When working with threshold-based measures, any sampling method that removes negative class instances above the decision threshold can unpredictably raise most information retrieval measures. Precision is inflated by the removal of false positives. In top $K$ measures, recall is inflated by the opportunity for additional positives to appear above the threshold after the negatives are removed. This naturally affects the harmonic mean, $F$-measure. Accuracy is affected by any test set modification since the number of incorrect classifications may change. Clearly we cannot report meaningful results with these threshold-based measures when performing any type of sampling on the test set. The question is whether it is fair to sample the test set when evaluating with threshold curves.

At first it may seem that subsampling negatives from the test set has no negative effects on ROC curves and areas. There is a solid theoretical basis for this belief, but issues specific to link prediction relating to extreme imbalance cause problems in practice. We will first describe these problems, and why using evaluation methods involving extreme subsampling are problematic. Then we will show that test set sampling actually creates what we believe is a much more significant problem with the testing distribution.

\subsection{Impact of Sampling on ROC}
\label{sec_roc_robustness}
Theoretically, ROC curves and their associated areas are unaffected by changes in class distribution alone. This is a source of great appeal, since it renders consistent judgments even as imbalance becomes increasingly extreme. Consequently, it is theoretically possible to fairly sample negatives from the test set without affecting ROC results. The proper way to model fair random removals of test instances closest to the actual ROC curve construction step is as random selection without replacement from the unsorted or sorted list of output scores. As long as the distribution remains stable in the face of random removals, the ROC curve and area will remain unchanged.

In practice, we do not want to waste the effort necessary to generate lists of output scores only to actually examine a fractional percentage of them. We must instead find a way to transfer our fair model of random removals in the ranked list of output scores to a network sampling method while theoretically preserving all feature distributions. The solution is to randomly sample potential edges. Given a network in which our original evaluation strategy was to consider a test set with every potential edge based on the previously observed network, we generate an appropriately sized random list of edges that do not exist in the test period.

As suggested by Hoeffding's inequality \cite{hoeffding}, in machine learning the test set has finite-sample variance. When the test set is sampled, the performance is as likely to be pleasantly surprising as unpleasantly surprising, though likely not to be surprising at all \cite{learningdata}. We provide a concrete mathematical proof specifically in the link prediction domain. Theoretically we can prove that with a random sampling ratio $p$ of negative instances to final testing set size, the variance of measured performance increases as $p$ decreases. 
\begin{theorem}
\label{lp_stability}
For any link predictor $\mathcal{P}$ the variance of measured performance increases when the negative class sample percentage $p$ decreases.
\end{theorem}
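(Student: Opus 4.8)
The plan is to model test set sampling as simple random sampling without replacement from the negative class, holding the positive instances fixed, and then to exhibit the measured performance as a sample statistic whose sampling variance is a manifestly decreasing function of the retained sample size. Let $N$ denote the total number of negative instances in the full test set and let $m$ be the number retained after sampling, so that the sampling ratio satisfies $p = m/(m + |P|)$; since this is monotone increasing in $m$, decreasing $p$ is equivalent to decreasing $m$. Because the predictor $\mathcal{P}$ assigns a fixed score to every instance, all randomness in the measured performance comes solely from which $m$ of the $N$ negatives land in the sample, and the quantity we must control is the variance of the metric over this random draw.

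The key step is to write the metric as a sample mean over the sampled negatives. For AUROC this is immediate through its Mann--Whitney representation: with the positives held fixed, define for each negative $j$ the per-instance contribution
\begin{equation*}
X_j = \frac{1}{|P|} \sum_{i \in P} \mathbb{1}\!\left[\text{score}(i) > \text{score}(j)\right],
\end{equation*}
so that the area on the full test set is the population mean $\frac{1}{N}\sum_{j} X_j$ while the area measured on the sample is exactly the sample mean $\frac{1}{m}\sum_{j \in \mathcal{S}} X_j$ over the retained set $\mathcal{S}$. The same reduction covers the false positive rate at any fixed threshold, which is the sample mean of the indicators $\mathbb{1}[\text{score}(j) > \tau]$, and hence feeds precision and top-$K$ recall through their dependence on the sampled false-positive count.

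With the metric expressed as a sample mean of a fixed population of values $\{X_j\}$ with population variance $S^2$, the classical variance of the mean under sampling without replacement gives
\begin{equation*}
\operatorname{Var}(\hat{A}) = \frac{S^2}{m}\cdot\frac{N-m}{N-1}.
\end{equation*}
The factor $1/m$ increases and the factor $(N-m)$ increases as $m$ decreases, so the product, and hence the variance, is strictly decreasing in $m$ over the entire range and therefore strictly increasing as $p$ decreases, which is the claim. Under an i.i.d./with-replacement approximation the same conclusion follows from the simpler $\operatorname{Var} = S^2/m$, and the finite-population correction only sharpens it.

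The main obstacle I anticipate is making the phrase \emph{measured performance} precise enough to cover the intended generality while keeping the sample-mean reduction valid. AUROC and fixed-threshold false-positive counts are genuinely linear in the sampled negatives, so the variance formula applies verbatim and monotonicity in $m$ is immediate; ratio metrics such as precision, however, are nonlinear functions of two random counts, so for those I would argue either through the variance of the underlying false-positive count directly or through a delta-method linearization, verifying that monotonicity in $m$ survives the linearization. A secondary modeling choice is whether to treat sampling as without replacement (fixed $m$, yielding the correction above) or as independent Bernoulli retention (random sample size); I would adopt the fixed-$m$ model, since it isolates the dependence on $p$ cleanly and avoids the extra variance contributed by a fluctuating test-set size.
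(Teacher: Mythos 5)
Your proposal is correct, and it rests on the same underlying mechanism as the paper's proof: model test-set reduction as simple random sampling without replacement from the negative class, and observe that the finite-population variance of the resulting statistic carries a factor that grows as the retained fraction shrinks. The paper instantiates this in the narrowest possible way: it declares that $C$ of the $N$ negatives are ``classified correctly,'' lets $X$ be the number of such negatives landing in a sample of size $Np$, notes that $X$ is hypergeometric, and computes $\operatorname{Var}(X/(Np)) = C(N-C)(1-p)/\bigl(N^2(N-1)p\bigr)$, which is increasing as $p$ decreases. That is exactly your sample-mean variance formula specialized to a binary population (with $S^2$ the variance of $C$ ones among $N$ zeros and ones), so the two arguments coincide at their core; the factor $(1-p)/p$ in the paper is your $(N-m)/(m(N-1))$ up to constants. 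What your route buys is generality and fidelity to what is actually measured: by writing the per-negative contribution $X_j$ via the Mann--Whitney representation, your argument applies directly to AUROC --- the metric used in the paper's experiments --- whereas the paper's proof strictly speaking only covers the specificity-like quantity $X/(Np)$ and leaves the extension to AUROC implicit. Your candor about ratio metrics such as precision being nonlinear in the sampled counts is also a point the paper simply does not engage with. Two minor remarks: your definition $p = m/(m+|P|)$ differs from the paper's $p = m/N$ (fraction of negatives retained), but both are monotone in $m$ so the conclusion is unaffected; and you should state explicitly, as the paper implicitly does, that the scores (and hence the population $\{X_j\}$ and its variance $S^2$) are held fixed across sampling rates, since otherwise the monotonicity comparison across different values of $m$ would not be between variances of the same population mean.
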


\begin{proof}
For a specific predictor $\mathcal{P}$ we assume that among all $N$ negative instances there are $C$ instances that can be classified correctly by $\mathcal{P}$ while the other $N - C$ instances can not be classified correctly by $\mathcal{P}$.\\
Based on the fact that we randomly sample $Np$ negative instances for inclusion in the final test set, the number of negative instances that can be detected by predictor $\mathcal{P}$ among these $Np$ negative instances is a random variable $X$ that has probability mass function:
\begin{align*}
P(X = x) = \frac{\binom{C}{x}\binom{N-C}{Np-x}}{\binom{N}{Np}}
\end{align*}
Trivially $X$ follows a \textit{\textbf{Hypergeometric}} distribution and the variance of $X$ is:
\begin{align*}
\text{Var}(X) = \frac{C(N-C)p(1-p)}{(N - 1)}
\end{align*}
Since the performance $X/Np$ has variance following:
\begin{align}
\label{var_negative}
\text{Var}\left(\frac{X}{Np}\right) = \frac{C(N-C)(1-p)}{N^{2}(N - 1)p}
\end{align}
it follows that when $p$ decreases $\text{Var}\left(\frac{X}{Np}\right)$ increases.
\end{proof}

From \textbf{Equation}~\ref{var_negative} we can observe that the variance of the measured performance increases linearly with $\frac{1}{p}$.

We demonstrate the results of this random sampling strategy empirically in Figure \ref{fig:undersample1} (unsupervised learning) and Figure~\ref{fig:undersample2} (supervised learning). We conduct these experiments for AUROC but not for AUPR, because precision-recall curves respond to the modifications of the testing set if the class distribution changes \cite{davis:2006}. The experimental settings follow. We explore the effect of sampling of negative class instances in the testing set. We include in our experiments six sampling rates: $10^{-3}$\%, $10^{-2}$\%, $10^{-1}$\%, $10^{0}$\%, $10^{1}$\%, and $10^{2}$\%. For each sampling rate, we randomly sample the testing set 100 times and calculate the AUROC for each. Thus for each sampling rate, we have 100 AUROC scores. In Figure~\ref{fig:undersample1} and Figure~\ref{fig:undersample2} we report the mean, minimum, and maximum of these scores.

The AUROC remains stable down to 1\% of the negative class in Condmat and only down to 10\% of the negative class in Facebook for unsupervised learning. Below this, it destabilizes. While stability to 1\% sampling in Condmat may seem quite good, it is critical to note that the imbalance ratios of link prediction in large networks are such that 1\% of the complete original test set often still contains an unmanageably large number of instances. Similarly in Figure~\ref{fig:undersample2} the AUROC remains stable only down to 10\% negative class sampling for both DBLP and Enron in supervised learning circumstance.

\begin{figure}
	\centering
	\vspace{-0.4cm}
	\subfloat[Condmat]{
		\includegraphics[width=0.47\linewidth]{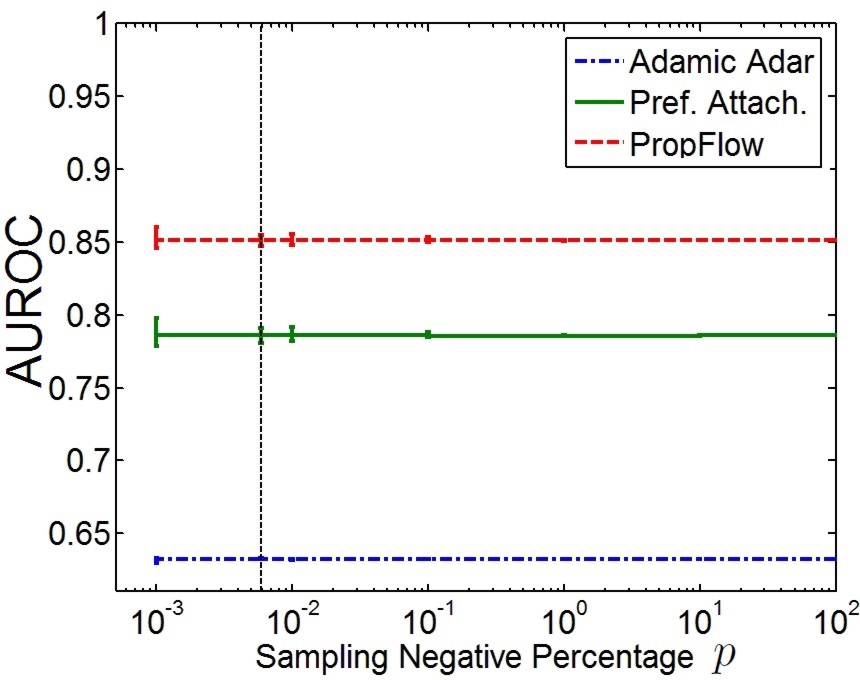}
	}
	\subfloat[Facebook]{
		\includegraphics[width=0.47\linewidth]{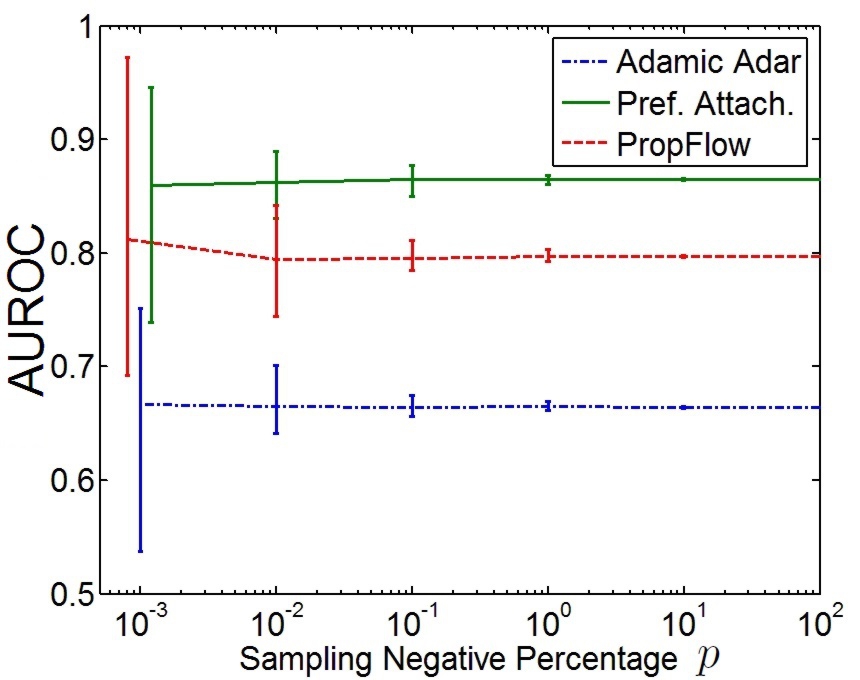}
	}\\
	\caption{The effect of sampling of negative class instances in the testing set for unsupervised classifiers. In Figure (a) the vertical line indicates class balance. We separate the error bars for different methods at the point $p=10^{-3}$ for comparison purposes. With decreasing $p$, the variance of AUROC for a link predictor increases. When the negative class percentage is between $10^{-1}$ and $10^{2}$, the mean value of AUROC for a link predictor does not change much, because the number of negative instances is still much larger than the number of positive instances. In the Condmat network the imbalance ratio is around $10^{5}$, so even when the negative class sample percentage is $10^{-1}$ the imbalance ratio in the sampled test set is around $10^{2}$. When the negative class sample percentage is low enough (i.e., $10^{-2}$), the mean value of AUROC is unstable (see results of DBLP and Enron in Figure~\ref{fig:undersample2}).}
	\label{fig:undersample1}
\end{figure}

\begin{figure}
	\centering
	\vspace{-0.4cm}
	\subfloat[DBLP]{
		\includegraphics[width=0.47\linewidth]{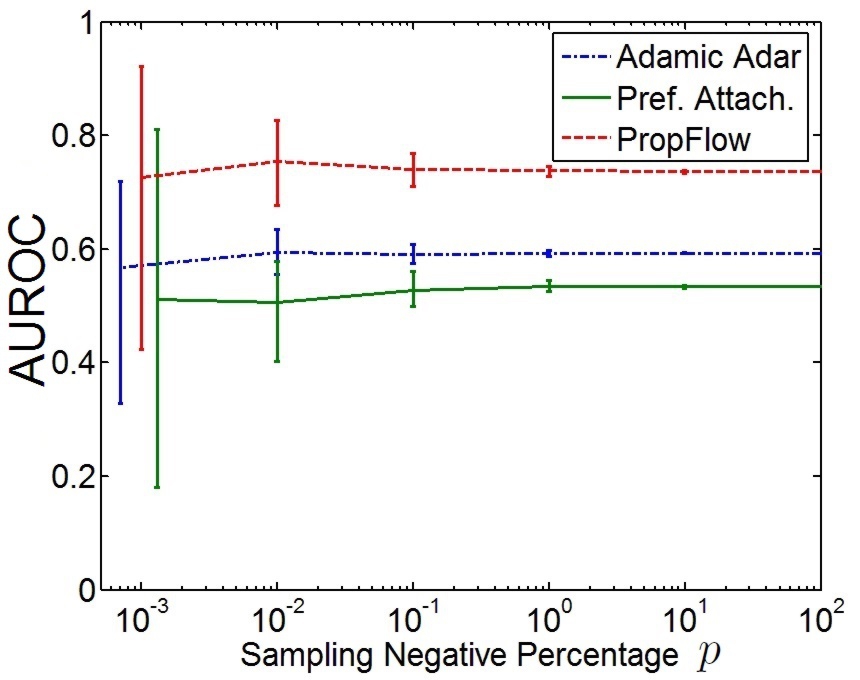}
	}
	\subfloat[Enron]{
		\includegraphics[width=0.47\linewidth]{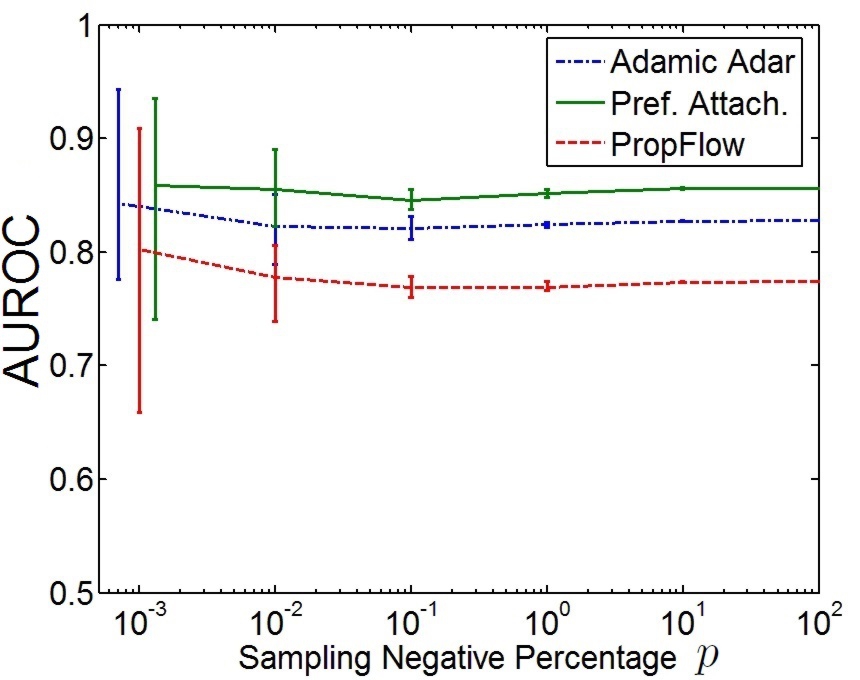}
	}
	\caption{The effect of sampling negative class instances in test sets on AUROC in supervised learning. We separate the error bars for different methods at the point $p=10^{-3}$ for comparison purposes.}
	\label{fig:undersample2}
\end{figure}

The dashed vertical line shows the AUROC for sampling that produces a balanced test set. The area deviates by more than 0.007 for PropFlow and more than 0.01 for preferential attachment, which may exceed significant variations in performance across link predictors. Further sampling causes even greater deviations. These deviations are not a weakness of the AUROC itself but are indicative of instability in score rankings within the samples. This instability does not manifest itself uniformly, and it may be greater for some predictors than for others. In Condmat in Figure~\ref{fig:undersample1}, preferential attachment exhibits greater susceptibility to the effect, while in Facebook in Figure~\ref{fig:undersample1}, PropFlow has the greatest susceptibility. The ultimate significance of the effect depends upon many factors, such as sampling percentage of negative class, properties of the predictor, and network size. From the proof of Theorem \ref{lp_stability} we can observe that the variance is also influenced by the negative instances number $N$. This is validated empirically by variations in stability across negative class sample ratios in different data sets. Condmat is stable down to 1\% sampling of negative class instances while DBLP, Enron and Facebook are stable only down to 10\% sampling of negative class instances. In sparse networks $G=(V, E)$ we can prove that the variance changes according to the order of magnitude of $|V|^{2}$.
\begin{defi}
Let a network $G = (V, E)$ be described as sparse if it maintains the property $|E| = k|V|$ for some constant $k \ll |V|$.
\end{defi}

\begin{corollary}
Given constant sampling ratio $p$, and that at most $|V|$ nodes may join the sparse network, and that the prediction ability of $\mathcal{P}$ does not change in different sparse networks, $G_{\alpha}$ and $G_{\beta}$:
\begin{align*}
\frac{\text{Var}_{\alpha}}{\text{Var}_{\beta}} \in \Theta \left( \frac{|V_{\beta}|^{2}}{|V_{\alpha}|^{2}} \right)
\end{align*}
where $\text{Var}_{\alpha}$ is the performance variance of the link predictor $\mathcal{P}$ in the sparse network $G_{\alpha}$, $\text{Var}_{\beta}$ is the performance variance of the link predictor $\mathcal{P}$ in the sparse network $G_{\beta}$, and $|V_{\alpha}|$ and $|V_{\beta}|$ are node counts in the network $G_{\alpha}$ and $G_{\beta}$.\\
\end{corollary}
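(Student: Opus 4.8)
The plan is to start directly from the variance expression in Equation~\ref{var_negative}, which Theorem~\ref{lp_stability} already establishes, and to track how each factor scales with $|V|$ once the hypotheses of the corollary are imposed. The key move is to translate the informal assumption that the prediction ability of $\mathcal{P}$ does not change into the precise statement that the fraction of correctly classified negatives, $c := C/N$, is a fixed constant shared by $G_\alpha$ and $G_\beta$. Substituting $C = cN$ into Equation~\ref{var_negative} gives
\begin{align*}
\text{Var}\left(\frac{X}{Np}\right) = \frac{cN(N - cN)(1-p)}{N^2(N-1)p} = \frac{c(1-c)(1-p)}{(N-1)p},
\end{align*}
so that, with $c$ and $p$ held fixed, the variance depends on the network only through the negative-instance count $N$, and in fact $\text{Var} \in \Theta(1/N)$.

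Next I would establish that $N \in \Theta(|V|^2)$ for any sparse network. The total number of candidate pairs is $\binom{|V|}{2} \in \Theta(|V|^2)$, where the hypothesis that at most $|V|$ nodes may join keeps the node count, and hence the candidate count, of order $|V|^2$. Both the existing edges $|E| = k|V|$ and the newly forming positive instances $P$ are $O(|V|)$ by sparsity, so in $N = \binom{|V|}{2} - |E| - P$ the two $O(|V|)$ terms are asymptotically negligible against the candidate total, and therefore $N \in \Theta(|V|^2)$.

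Combining the two observations, for fixed $c$ and $p$,
\begin{align*}
\frac{\text{Var}_\alpha}{\text{Var}_\beta} = \frac{N_\beta - 1}{N_\alpha - 1} \in \Theta\left(\frac{N_\beta}{N_\alpha}\right) = \Theta\left(\frac{|V_\beta|^2}{|V_\alpha|^2}\right),
\end{align*}
which is the claimed bound. The main obstacle is not the algebra but pinning down the two modeling assumptions precisely: one must argue that a predictor of unchanged ability genuinely fixes the rate $c = C/N$ rather than the absolute count $C$, and one must justify that in the sparse regime the positive class and the existing edge set are both of strictly lower order so that $N$ is dominated by $\binom{|V|}{2}$. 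Once these are granted, the $\Theta$-asymptotics follow immediately from the exact variance formula, since every remaining factor is an identical constant in the two networks and cancels in the ratio.
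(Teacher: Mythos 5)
Your proposal is correct and follows essentially the same route as the paper: both start from Equation~\ref{var_negative}, cancel the predictor-dependent factors under the unchanged-ability assumption to reduce the ratio to $(N_{\beta}-1)/(N_{\alpha}-1)$, and then show $N \in \Theta(|V|^{2})$ from sparsity and the bound on joining nodes. Your version is in fact slightly more careful, since you make explicit that the cancellation requires the rate $c = C/N$ (rather than the count $C$) to be the invariant, a point the paper's proof leaves implicit.
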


\begin{proof}
As we have proved in \textbf{Theorem}~\ref{lp_stability} the variance of the performance of $\mathcal{P}$ can be written as:
\begin{align*}
\text{Var}\left(\frac{X}{Np}\right) = \frac{C(N-C)(1-p)}{N^{2}(N - 1)p}
\end{align*}
Thus the variances $\text{Var}_{\alpha}$ and $\text{Var}_{\beta}$ can be written as:
\begin{align*}
\text{Var}_{\alpha} = \frac{C_{\alpha}(N_{\alpha}-C_{\alpha})(1-p_{\alpha})}{N_{\alpha}^{2}(N_{\alpha} - 1)p_{\alpha}}\\
\text{Var}_{\beta} = \frac{C_{\beta}(N_{\beta}-C_{\beta})(1-p_{\beta})}{N_{\beta}^{2}(N_{\beta} - 1)p_{\beta}}
\end{align*}
Due to the assumption that the prediction ability of $\mathcal{P}$ does not change in $G_{\alpha}$ and $G_{\beta}$, we can write $\frac{\text{Var}_{\alpha}}{\text{Var}_{\beta}}$ as:
\begin{align*}
&\frac{\text{Var}_{\alpha}}{\text{Var}_{\beta}} 
= \frac{ \frac{C_{\alpha}(N_{\alpha}-C_{\alpha})(1-p_{\alpha})}{N_{\alpha}^{2}(N_{\alpha} - 1)p_{\alpha}} }{ \frac{C_{\beta}(N_{\beta}-C_{\beta})(1-p_{\beta})}{N_{\beta}^{2}(N_{\beta} - 1)p_{\beta}}  }
= \frac{ \frac{(1-p_{\alpha}) }{(N_{\alpha} - 1)p_{\alpha}} }{ \frac{(1-p_{\beta})}{(N_{\beta} - 1)p_{\beta}}  }
\end{align*}
Additionally we know that $p_{\alpha} = p_{\beta}$, so we can rewrite $\frac{\text{Var}_{\alpha}}{\text{Var}_{\beta}}$ as:
\begin{align*}
&\frac{\text{Var}_{\alpha}}{\text{Var}_{\beta}}
= \frac{ \frac{(1-p_{\alpha}) }{(N_{\alpha} - 1)p_{\alpha}} }{ \frac{(1-p_{\beta})}{(N_{\beta} - 1)p_{\beta}}  }
= \frac{ N_{\beta} - 1 }{ N_{\alpha} - 1 }
\end{align*}
Now we prove that $N_{\alpha} \in \Theta(|V_{\alpha}|^{2})$, $N_{\beta} \in \Theta(|V_{\beta}|^{2})$:
The number of all possible links in network $G$ is $\frac{|V|^{2} - |V|}{2}$, so for a sparse network the missing links, $|E|^{C}$, is $\frac{|V|^{2} - |V|}{2} - k|V| \in \Theta(|V|^{2})$. Let $V^{'}$ nodes and $E^{'}$ edges join the network in the future. Since the evolved network $G^{'}$ is still a sparse network and $V^{'} \leq |V|$, we know that $|V| + |V^{'}| \leq 2|V| \in \Theta(|V|)$ and $|E| + |E^{'}| \in \Theta(|V|)$. The negatives are given as $|(E \bigcup E^{'})^{C}| \in \Theta(|V|^{2})$.
Trivially we have
\begin{align*}
N_{\alpha} - 1 \in \Theta(|V_{\alpha}|^{2})\\
N_{\beta} - 1 \in \Theta(|V_{\beta}|^{2})
\end{align*}
And we know that $\frac{\text{Var}_{\alpha}}{\text{Var}_{\beta}} \in \Theta(\frac{|V_{\beta}|^{2}}{|V_{\alpha}|^{2}})$
\end{proof}

We prove that theoretically the variance changes approximately with the order of magnitude of $|V|^{2}$, and this is illustrated in both Figure~\ref{fig:undersample1} and Figure~\ref{fig:undersample2}. Figure~\ref{fig:undersample2} shows that our conclusions regarding the impact of negative class sampling with unsupervised predictors also hold for supervised predictors. In Figure~\ref{fig:undersample1} the Condmat data set has a larger number of nodes than the Facebook data set, and unsurprisingly we observe that for the same predictor with the same negative class sample ratio the variance in Facebook is much larger than in Condmat. For supervised learning, the size of DBLP is smaller than Enron and the variance for Enron is much smaller than for DBLP.


This theoretical demonstration and the empirical results show grave danger in relying on results of sampled test sets in the link prediction domain. One of the most common strategies is to undersample negatives in the test set so that it is balanced. In link prediction, class balance ratios, often easily exceeding thousands to one, are likely to leave resampled test sets that do not admit sufficiently stable evaluation for meaningful results.

\subsection{The Real Testing Distribution}
\label{sec:geo_sub_problem}
We must understand what performance we report when we undersample link prediction test sets. Undersampling is presumably part of an attempt at combating unmanageable test set sizes and describing the performance of the predictor on the network as a whole. This type of report is common, and issues of stability aside, it is theoretically valid. We question, however, whether the results that it produces actually convey useful information. Figure \ref{fig:neighborhoods} compares AUROC overall to the AUROC achievable in the distinct sub-problem created by dividing the task by geodesic distance.

We first consider the results for the preferential attachment predictor. The general conclusion across all data sets is that the apparent achievable performance is dramatically higher in the complete sets of potential edges than the performance in the sets restricted by distance. The extreme importance of geodesic distance in determining link formation correlates highly with any successful prediction method. The high-distance regions contain very few positives and effectively append a set of trivially recognizable negatives to the end. This increases the probability of a randomly selected positive appearing above a randomly selected negative, the AUROC. This phenomenon is described as the locality of link formation in growing networks \cite{leskovec:2008,wittie:2010,liu:2012,papado:2012}. In Figure~\ref{fig:locality}, we study the distribution of geodesic hops induced by each new links for four data sets. The number of new links decays exponentially with increasing hop distance between nodes.

\begin{figure}
	\centering
	\vspace{-0.4cm}
	\subfloat[\textbf{Condmat}]{
		\includegraphics[width=0.4\linewidth]{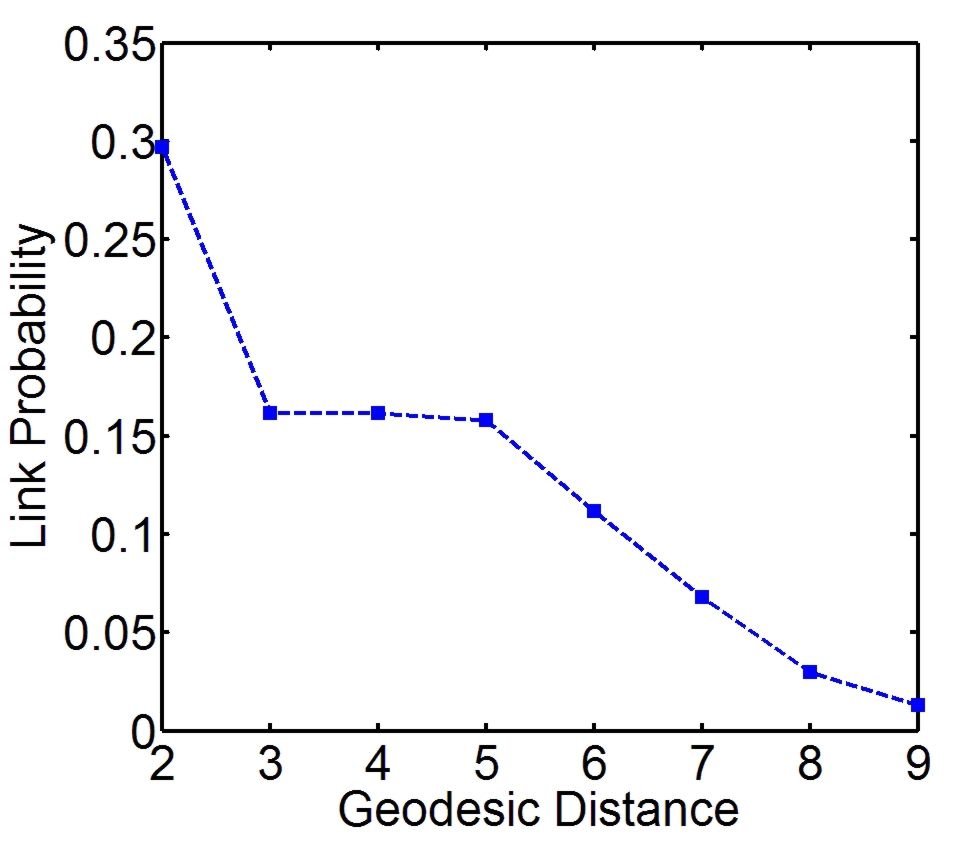}
	}
	\subfloat[\textbf{DBLP}]{
		\includegraphics[width=0.4\linewidth]{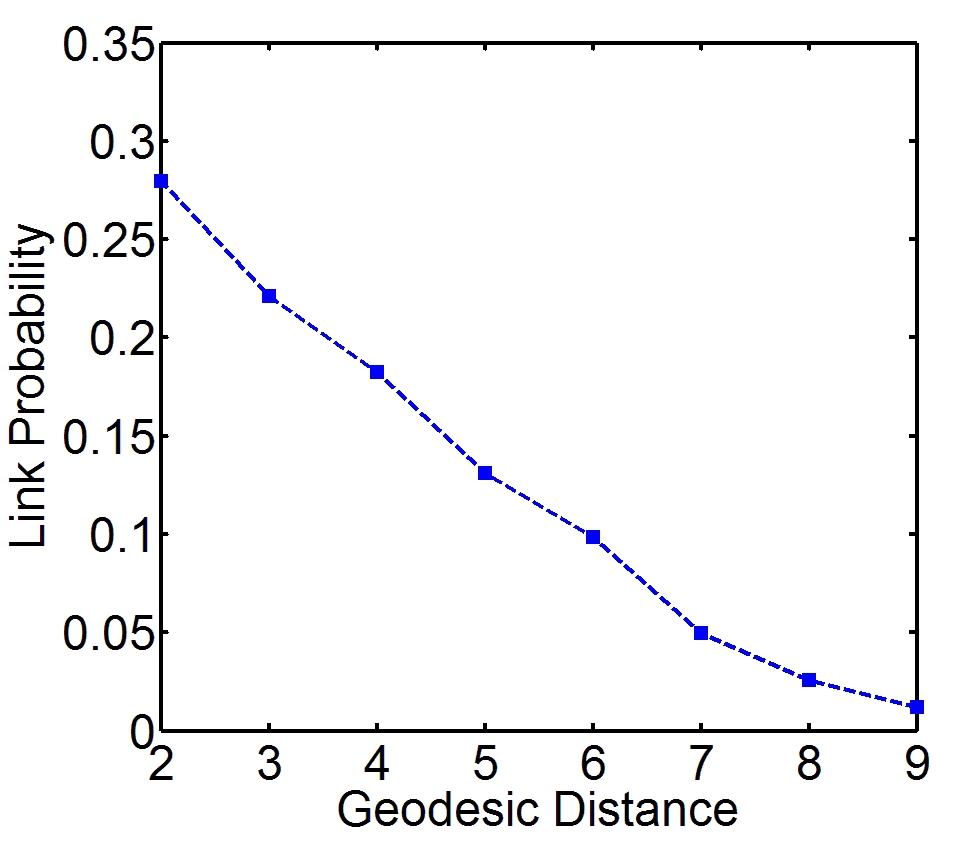}
	}\\
	\subfloat[\textbf{Enron}]{
		\includegraphics[width=0.4\linewidth]{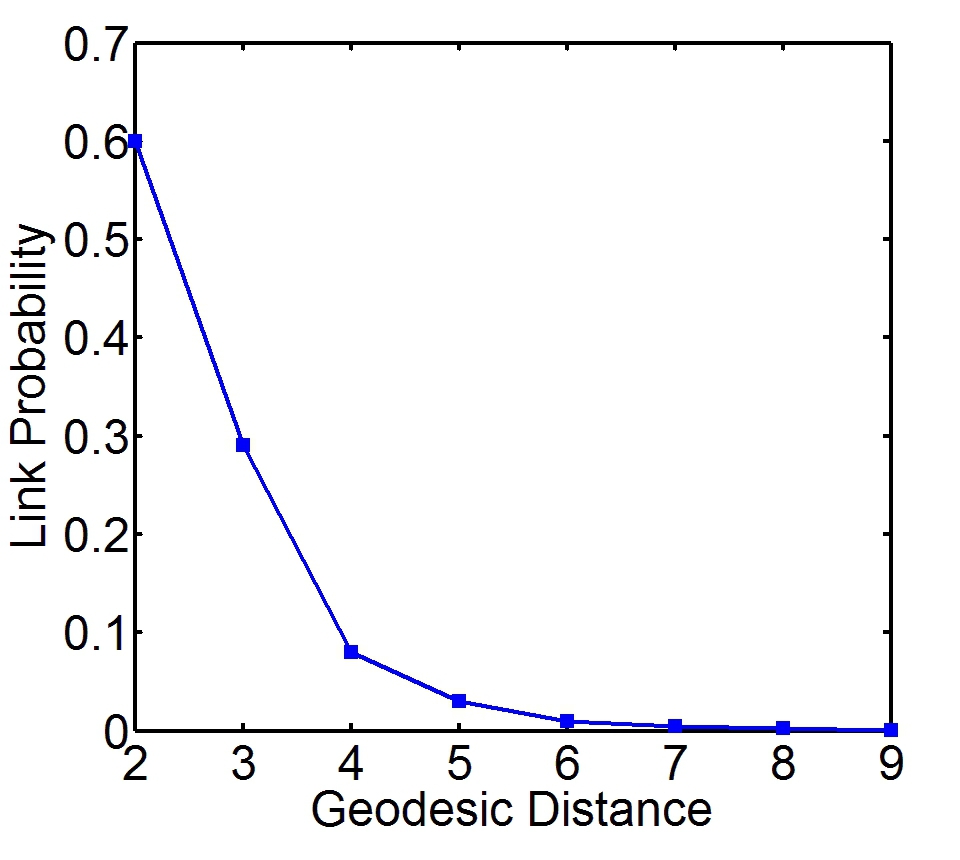}
	}
	\subfloat[\textbf{Facebook}]{
		\includegraphics[width=0.4\linewidth]{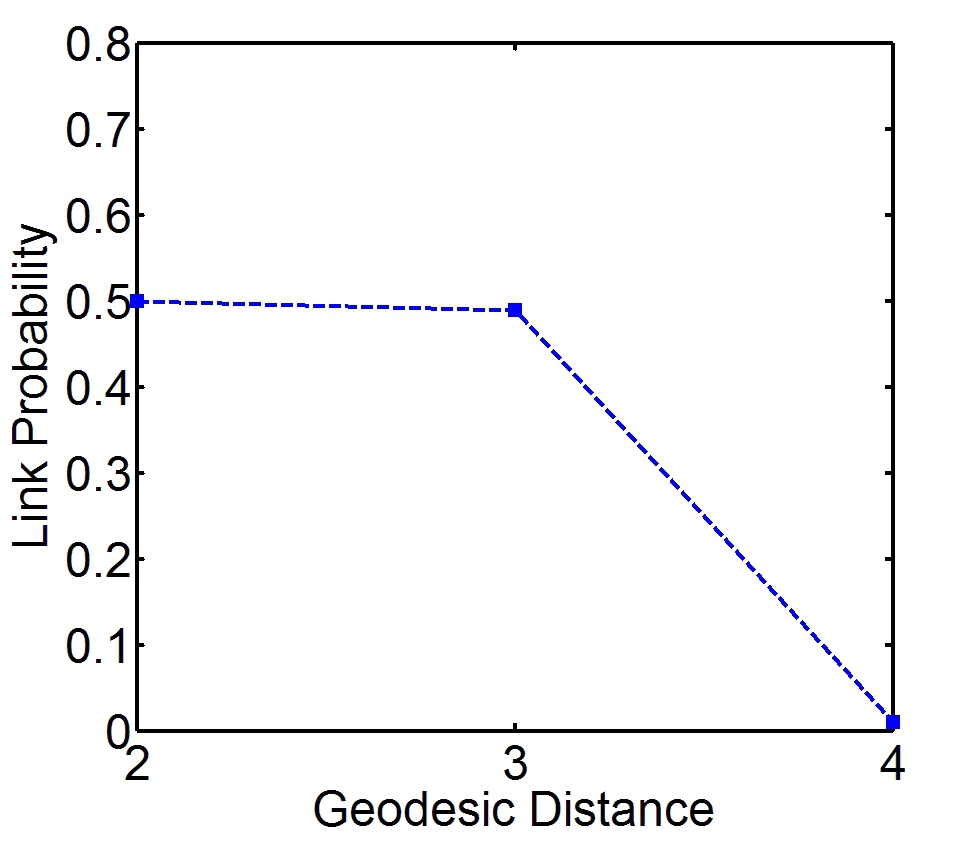}
	}
	\caption{Probability of edges $E_{\ell}$ created between nodes $\ell$ geodesic hops away.}
	\label{fig:locality}
\end{figure}

Beyond the statistics presented in Figure~\ref{fig:neighborhoods}, we compare the AUROCs of two surrogate scenarios. In the first scenario, we simulate the $\ell=2$ sub-problem, denoted as $P_{sub}$. In the second scenario, we simulate the complete link prediction problem $\ell \leq \infty$, denoted as $P_{full}$. There are $p_{s}$ positive instances and $n_{s}$ negative instances in $P_{sub}$, and there are $p_{f}$ positive instances and $n_{f}$ negative instances in $P_{full}$. We designate a parameter $\alpha$ to control the performance of the predictor $\mathcal{P}$. In the simulation the positive instances are randomly allocated among the top $\alpha$ slots in the ranking list. Additionally to simplify the simulation, we assume that the prediction method $\mathcal{P}$ has the same performance in these two problems. Because $P_{sub}$ is a sub-problem of $P_{full}$, in order to simulate $P_{full}$ more precisely, we require further details as follows:
\begin{itemize}
	\item Among $p_{f}$ positive instances, $p_{s}$ of them are randomly allocated within the top $\alpha (p_{s}+n_{s}) \beta$ slots in the ranking list, where the parameter $\beta$ is introduced to simulate the impact of non-trivially recognizable negatives on the ranking of $p_{s}$ positives in sub-problem $P_{s}$.
	\item Then $p_{f} - p_{s}$ positive instances are randomly allocated within the top $\alpha (p_{f} + n_{f})$ slots in the ranking list.
\end{itemize}

The parameter $\alpha$ is designed to simulate the performance of the predictor, while the parameter $\beta$ is designed to simulate the impact of appending negatives. Table~\ref{tab:surrogate} shows the results of this comparison. In order to comprehensively measure the statistical significance of differences between $P_{f}$ and $P_{s}$, we compare the AUROCs of $P_{f}$ to those of $P_{s}$ by 100,000 simulations under different settings of $\alpha$ and $\beta$. The numbers of $p_{s}$, $n_{s}$, $p_{f}$ and $n_{f}$ are taken from the DBLP data set.

In Table~\ref{tab:surrogate} the predictability $\alpha$ values correspond to a high AUROC, 0.9, and to a worst-case AUROC, 0.5. When the impact of appending negatives is small (i.e. $\beta=10$), the AUROC of $P_{f}$ is most dramatically greater than the AUROC of $P_{s}$, with p-value less than 0.0001. Even if the impact of appending negatives is large (i.e. $\beta=50$), the AUROC of $P_{f}$ is much larger than that of $P_{s}$, with p-value at most 0.048. The above observation is not significantly influenced by the performance of the predictor $\mathcal{P}$.
\begin{table}

\caption{{Surrogate Comparisons}. The $\alpha$ parameter controls the performance of link predictors, and lower values of $\alpha$ indicate higher performance of link predictors. The $\beta$ parameter simulates the impact of non-trivially recognizable negatives on the ranking of $p_{s}$ positives in sub-problem $P_{s}$. In the simulation we demonstrate that based on the AUROC the performance on the full link prediction problem ($\ell \leq \infty$) is significantly better ($H_{0}$ rejected at 2 sigmas) than the sub-problem ($\ell \leq 2$). This observation made in surrogate scenarios is validated by the real-world experiments in Figure~\ref{fig:neighborhoods}.}
\centering
\resizebox{4.2in}{!}{
\begin{tabular}{|l|*{5}{c|}}\cline{1-6}
\backslashbox{$\alpha$}{$\beta$}
&\makebox[2em]{10}&\makebox[2em]{20}&\makebox[2em]{30}
&\makebox[2em]{40}&\makebox[2em]{50}\\ \cline{1-6} \cline{1-6}
0.2 & $8$ $\text{sigmas}$ & $6$ $\text{sigmas}$ & $5$ $\text{sigmas}$ & $3$ $\text{sigmas}$ & $2$ $\text{sigmas}$ \\\cline{1-6}
0.3 & $8$ $\text{sigmas}$ & $6$ $\text{sigmas}$ & $5$ $\text{sigmas}$ & $3$ $\text{sigmas}$ & $2$ $\text{sigmas}$ \\\cline{1-6}
0.4 & $9$ $\text{sigmas}$ & $7$ $\text{sigmas}$ & $5$ $\text{sigmas}$ & $4$ $\text{sigmas}$ & $2$ $\text{sigmas}$ \\\cline{1-6}
0.5 & $9$ $\text{sigmas}$ & $7$ $\text{sigmas}$ & $5$ $\text{sigmas}$ & $4$ $\text{sigmas}$ & $2$ $\text{sigmas}$ \\\cline{1-6}
0.6 & $9$ $\text{sigmas}$ & $7$ $\text{sigmas}$ & $6$ $\text{sigmas}$ & $4$ $\text{sigmas}$ & $2$ $\text{sigmas}$ \\\cline{1-6}
0.7 & $9$ $\text{sigmas}$ & $7$ $\text{sigmas}$ & $6$ $\text{sigmas}$ & $4$ $\text{sigmas}$ & $2$ $\text{sigmas}$ \\\cline{1-6}
0.8 & $9$ $\text{sigmas}$ & $7$ $\text{sigmas}$ & $6$ $\text{sigmas}$ & $4$ $\text{sigmas}$ & $2$ $\text{sigmas}$ \\\cline{1-6}
0.9 & $9$ $\text{sigmas}$ & $7$ $\text{sigmas}$ & $6$ $\text{sigmas}$ & $4$ $\text{sigmas}$ & $3$ $\text{sigmas}$ \\\cline{1-6}
\end{tabular}
}
\\
{
\begin{flushleft}
\end{flushleft}
}
\label{tab:surrogate}
\end{table}

In Figure~\ref{fig:neighborhoods} we can observe that different prediction methods have different behaviors for varying $\ell$ sub-problems. In the DBLP data set, preferential attachment is unstable across geodesic distances while PropFlow exhibits monotonic behavior. This is because the preferential attachment method is inherently ``non-local'' in its judgment of link formation likelihood. Additionally, as discussed earlier, the difference between AUROC in $\ell \leq \infty$ and AUROC in distinct sub-problems (i.e. $\ell = 2$) is diminished when the size of the network decreases.

\begin{figure}
	\centering
	\vspace{-0.4cm}
	\subfloat[PA-\textbf{Condmat}]{
		\includegraphics[width=0.3\linewidth]{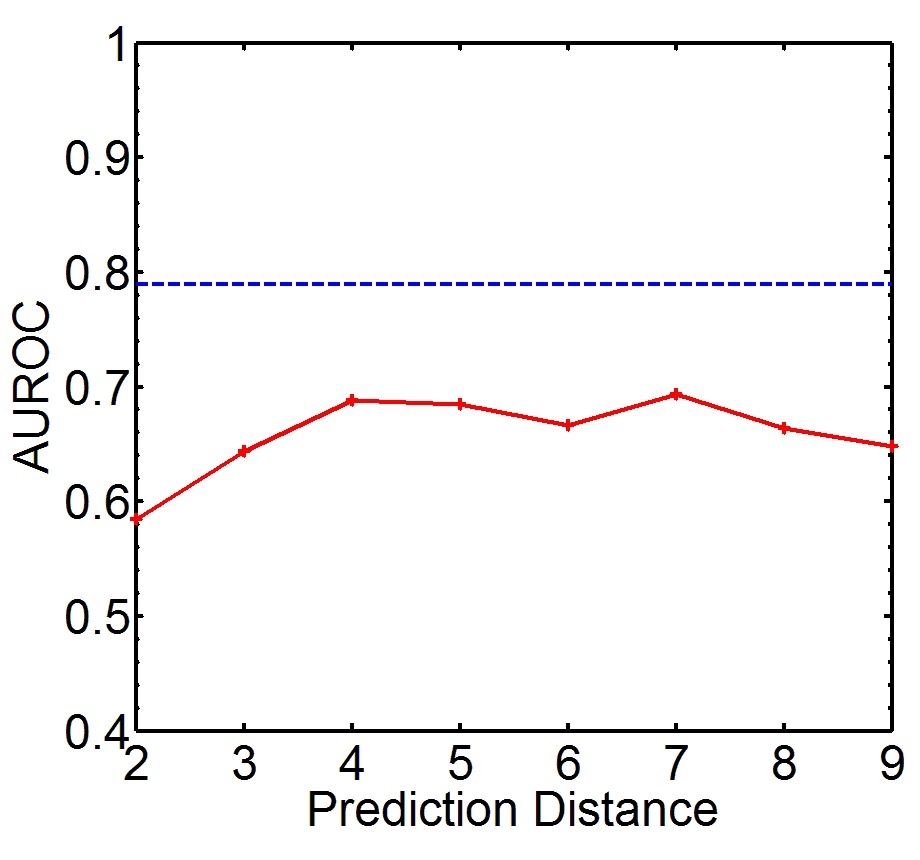}
	}
	\subfloat[PF-\textbf{Condmat}]{
		\includegraphics[width=0.3\linewidth]{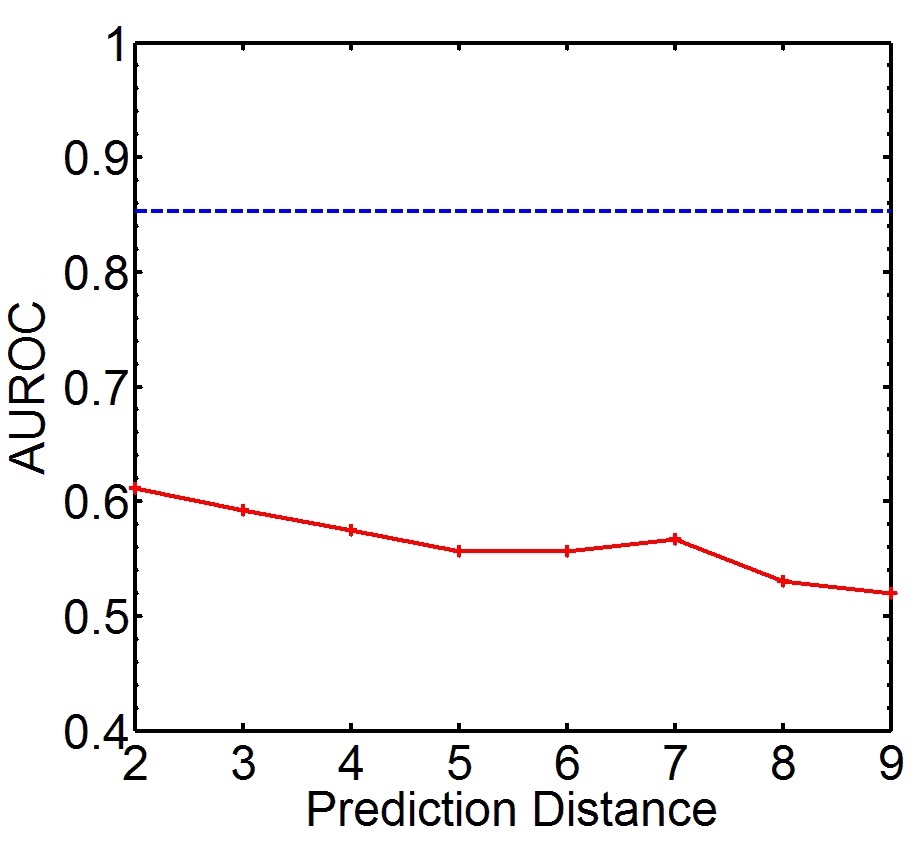}
	}\\
	\subfloat[PA-\textbf{DBLP}]{
		\includegraphics[width=0.3\linewidth]{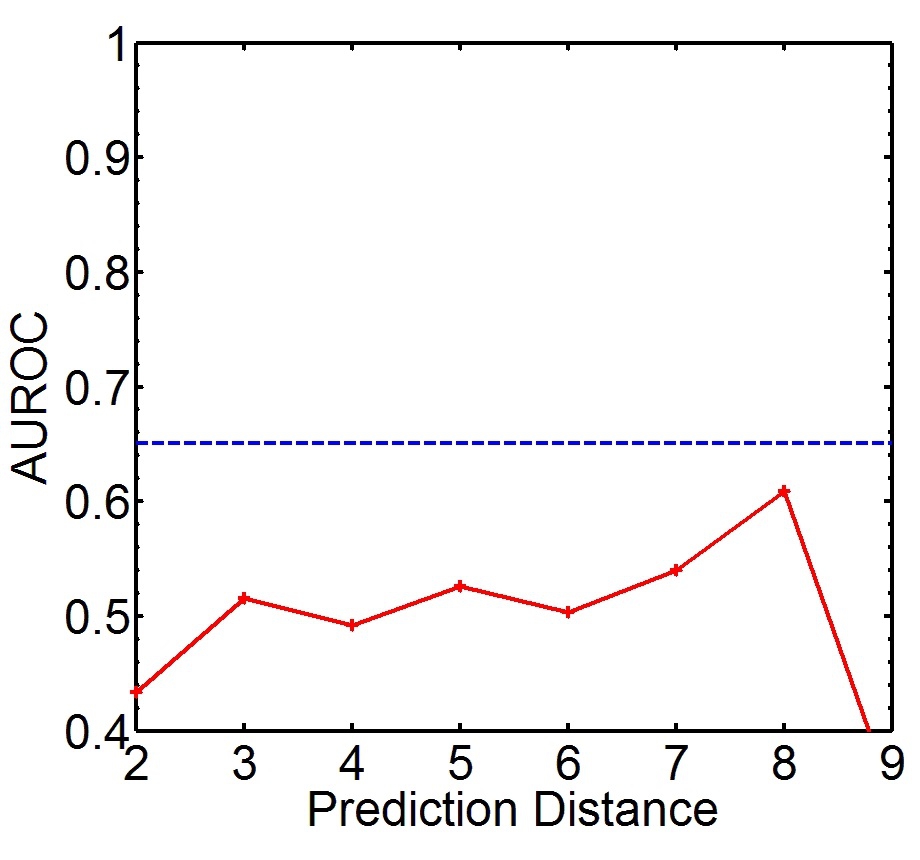}
	}
	\subfloat[PF-\textbf{DBLP}]{
		\includegraphics[width=0.3\linewidth]{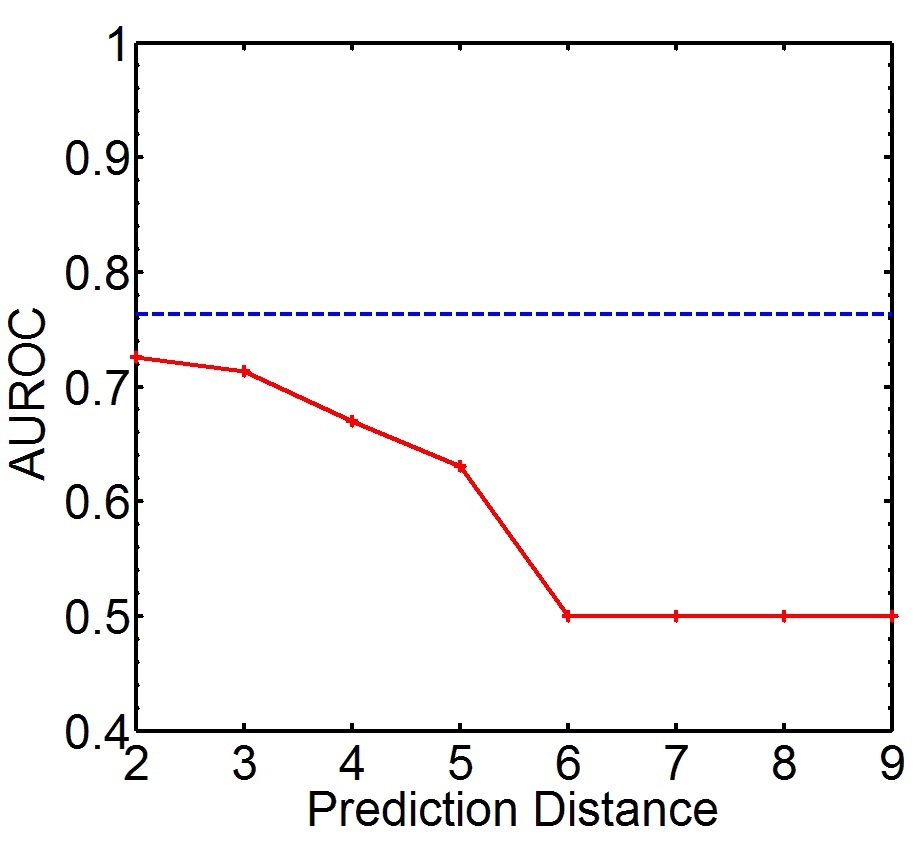}
	}\\
	\subfloat[PA-\textbf{ Enron}]{
		\includegraphics[width=0.3\linewidth]{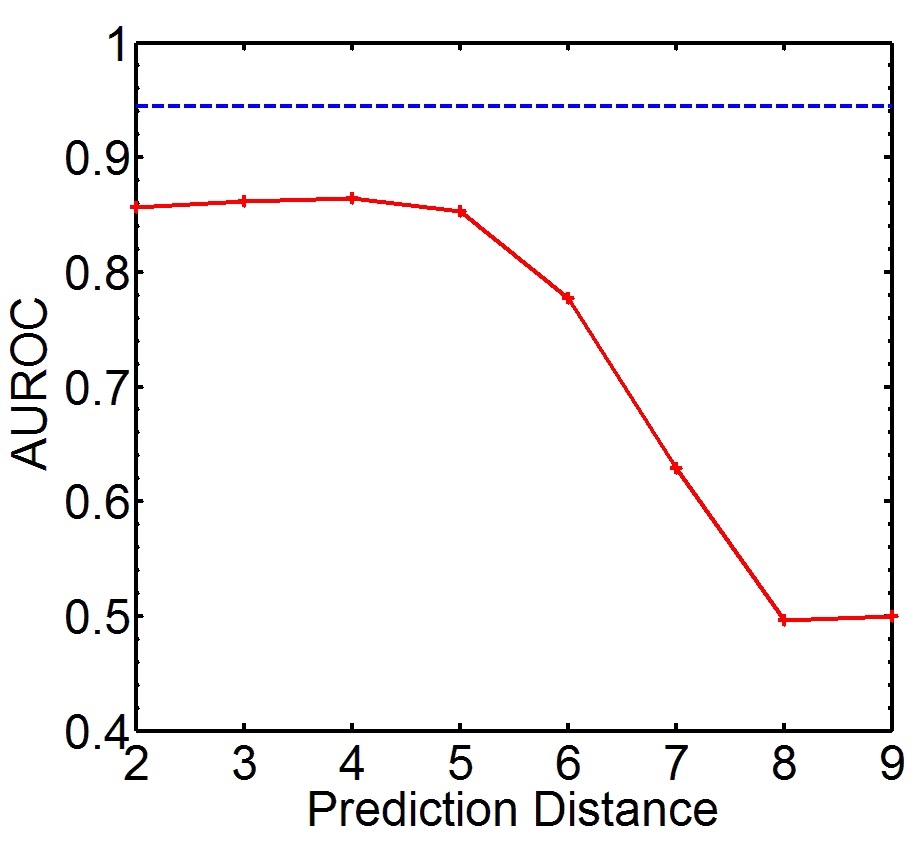}
	}
	\subfloat[PF-\textbf{Enron}]{
		\includegraphics[width=0.3\linewidth]{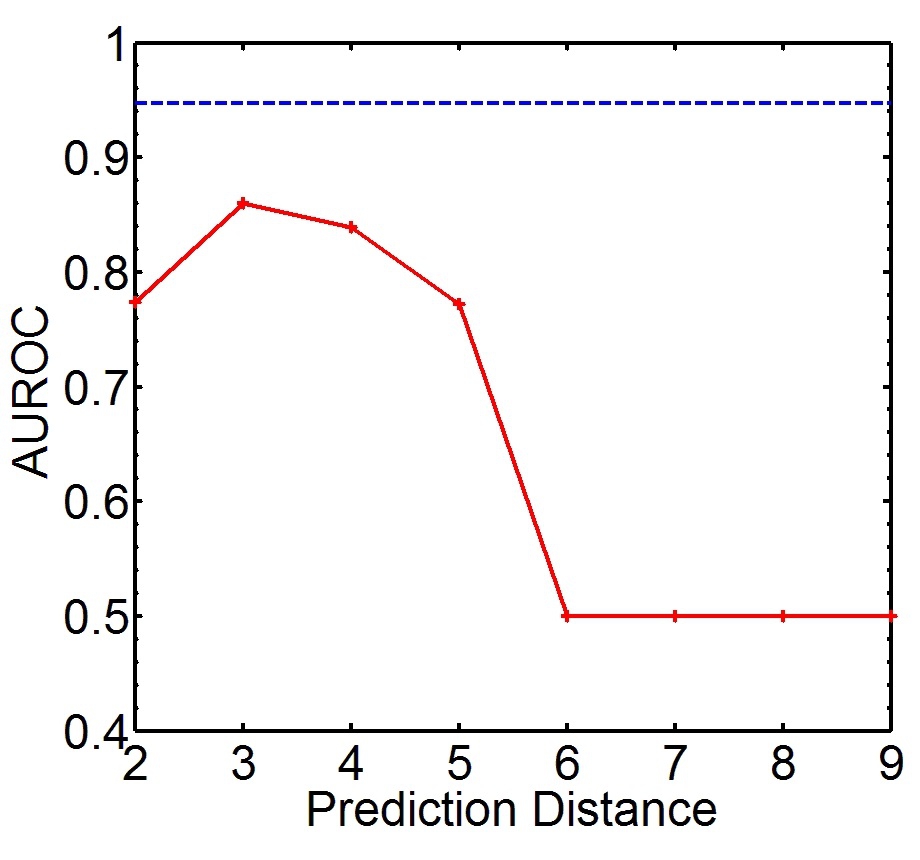}
	}\\
	\subfloat[PA-\textbf{Facebook}]{
		\includegraphics[width=0.3\linewidth]{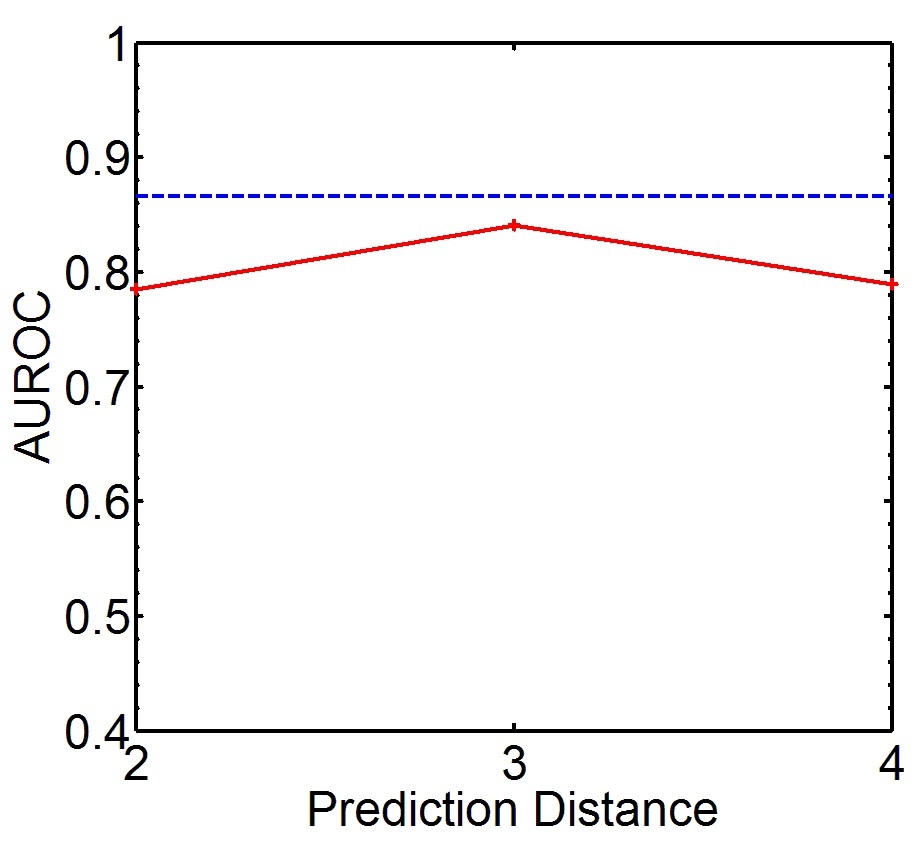}
	}
	\subfloat[PF-\textbf{Facebook}]{
		\includegraphics[width=0.3\linewidth]{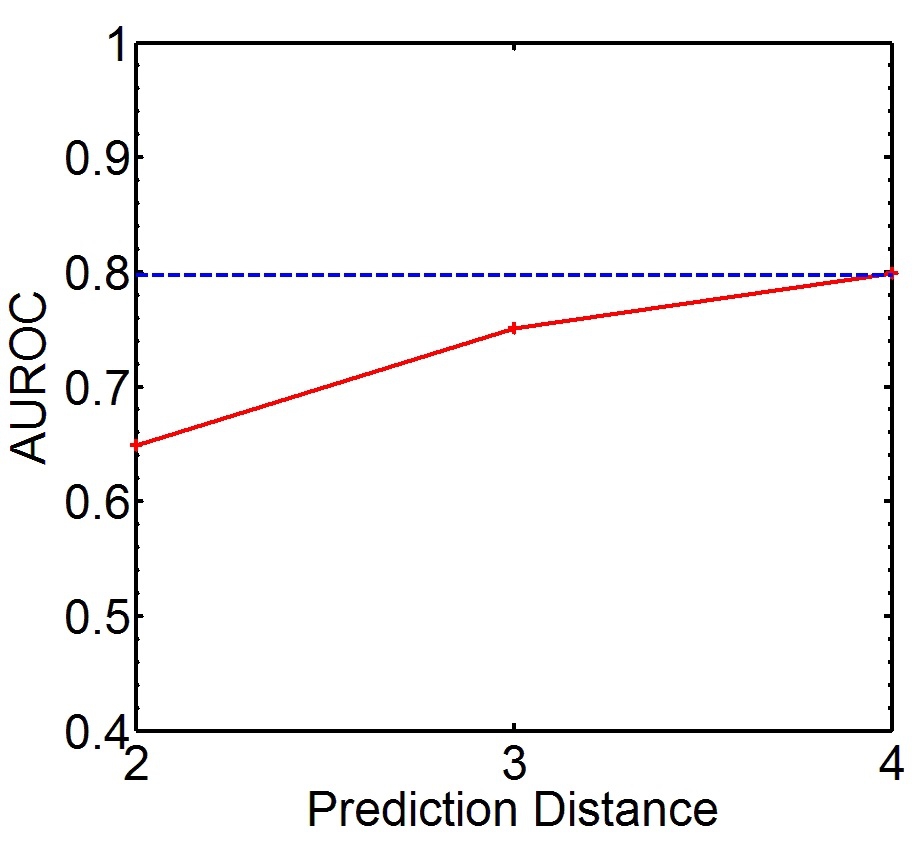}
	}
	\caption{The ROC curve performance of the Preferential Attachment and PropFlow link predictors over each neighborhood. The results of Adamic/Adar are not included because the Adamic/Adar method only has descriptive power for node pairs within two hops. The horizontal line represents the performance apparent by considering all potential links. We compare AUROC overall to the AUROC achievable in the distinct sub-problem created by dividing the task by geodesic distance $\ell$. The red line represents the AUROC for each sub-problem.}
	\label{fig:neighborhoods}
\end{figure}

The effect is exaggerated for PropFlow and for other ranking methods that inherently scale according to distance, such as rooted PageRank and Katz. In such cases, the ROC curve for the amalgamated data approximates concatenation of the individual ordered outputs, which inherently places the distances with higher imbalance ratios at the end where they inflate the overall area. Figure \ref{fig:neighborhoods} shows the effect for the PropFlow prediction method on the right.

For PropFlow, the apparent achievable performance in Condmat is 36.2\% higher for the overall score ordering than for the highest of the individual orderings! This result also has important implications from a practical perspective. In the Condmat network, PropFlow appears to have a higher AUROC than preferential attachment ($\ell \leq \infty$), but the \emph{only} distance at which it outperforms preferential attachment is the 2-hop distance. Preferential attachment is a superior choice for the other distances in cases where the other distances matter. These details are hidden from view by ROC space. They also illustrate that the performance indicated by overall ROC is not meaningful with respect to deployment expectations and that it conflates performance across neighborhoods with a bias toward rankings that inherently reflect distance.

Consider the data distribution of the link prediction problem used in this paper. In Condmat there are 148.2 million negatives and 29,898 positives. The ratio of negatives to positives is 4,955 to 1. There are 1196 positives and 214,616 negatives in $\ell=2$. To achieve a 1 to 1 ratio with random edge sampling, statistical expectation is for 43.3 2-hop negatives to remain. The 2-hop neighborhood contains 30\% of all positives, so clearly it presents the highest baseline precision. That border is the most important to capture well in classification, because improvements in $\ell=2$ discrimination are worth much more than improvements at higher distances. 16\% of all positives are in the $\ell=3$ sub-problem, so the same argument applies with it versus higher distances.

\begin{figure}
	\centering
	\subfloat[PA-\textbf{Condmat}]{
		\includegraphics[width=0.3\linewidth]{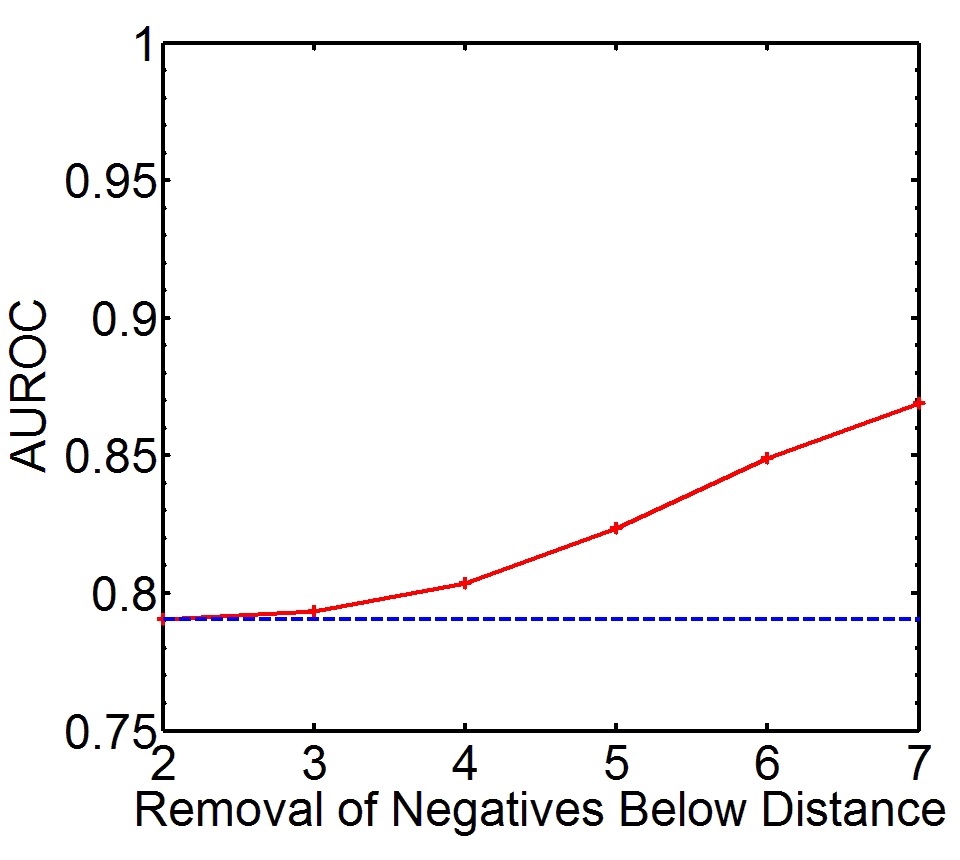}
	}
	\subfloat[PF-\textbf{Condmat}]{
		\includegraphics[width=0.3\linewidth]{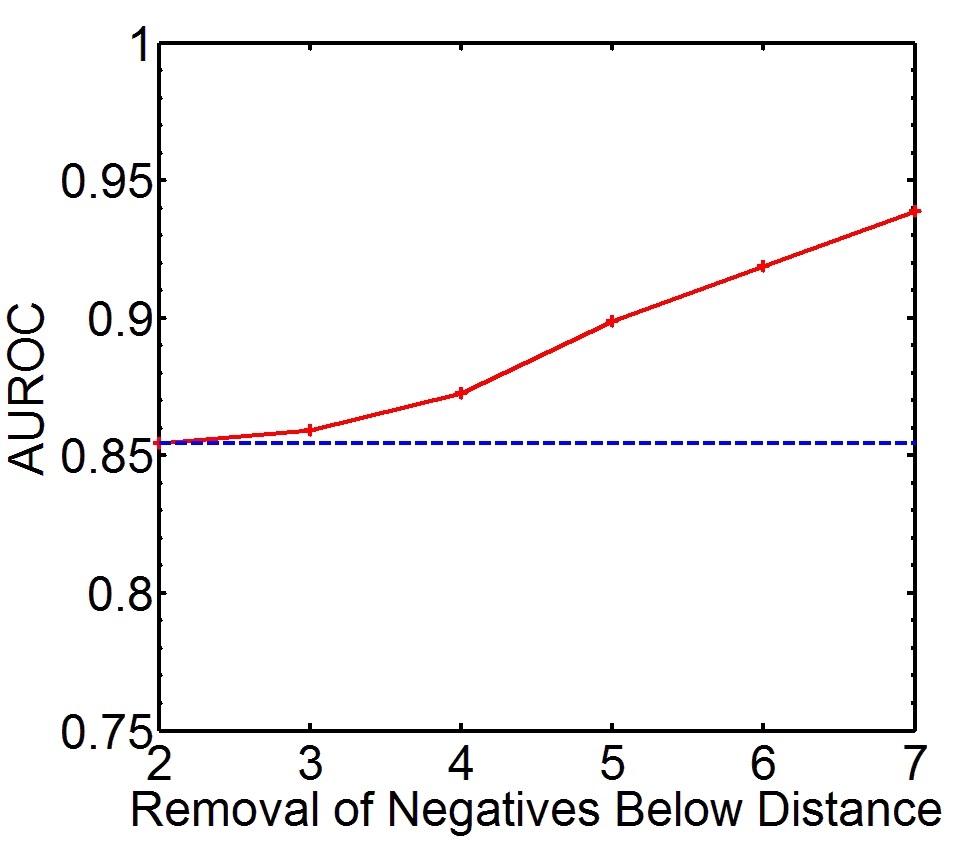}
	}\\
	\subfloat[PA-\textbf{DBLP}]{
		\includegraphics[width=0.3\linewidth]{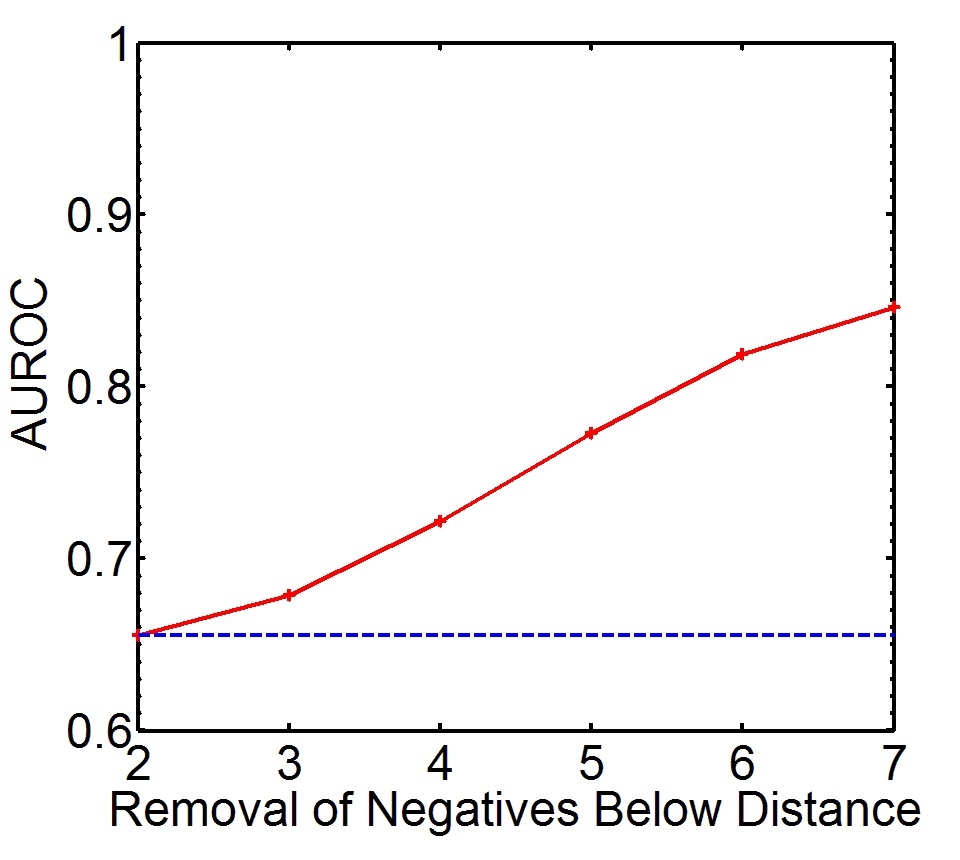}
	}
	\subfloat[PF-\textbf{DBLP}]{
		\includegraphics[width=0.3\linewidth]{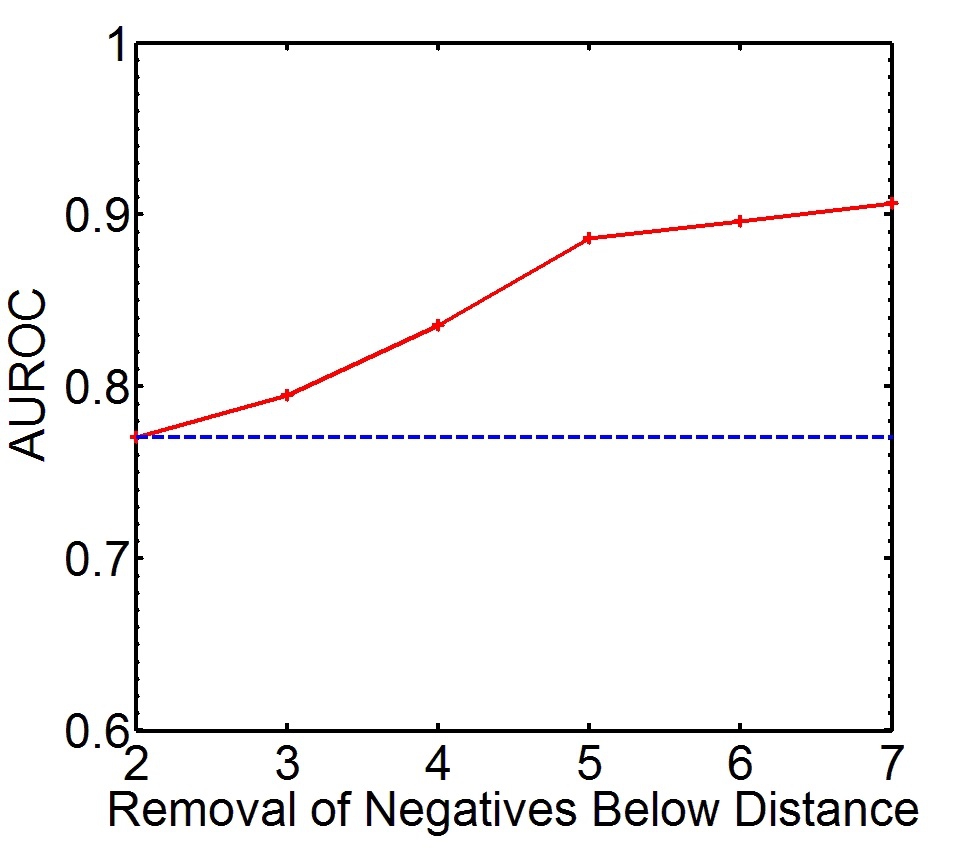}
	}\\
	\subfloat[PA-\textbf{Enron}]{
		\includegraphics[width=0.3\linewidth]{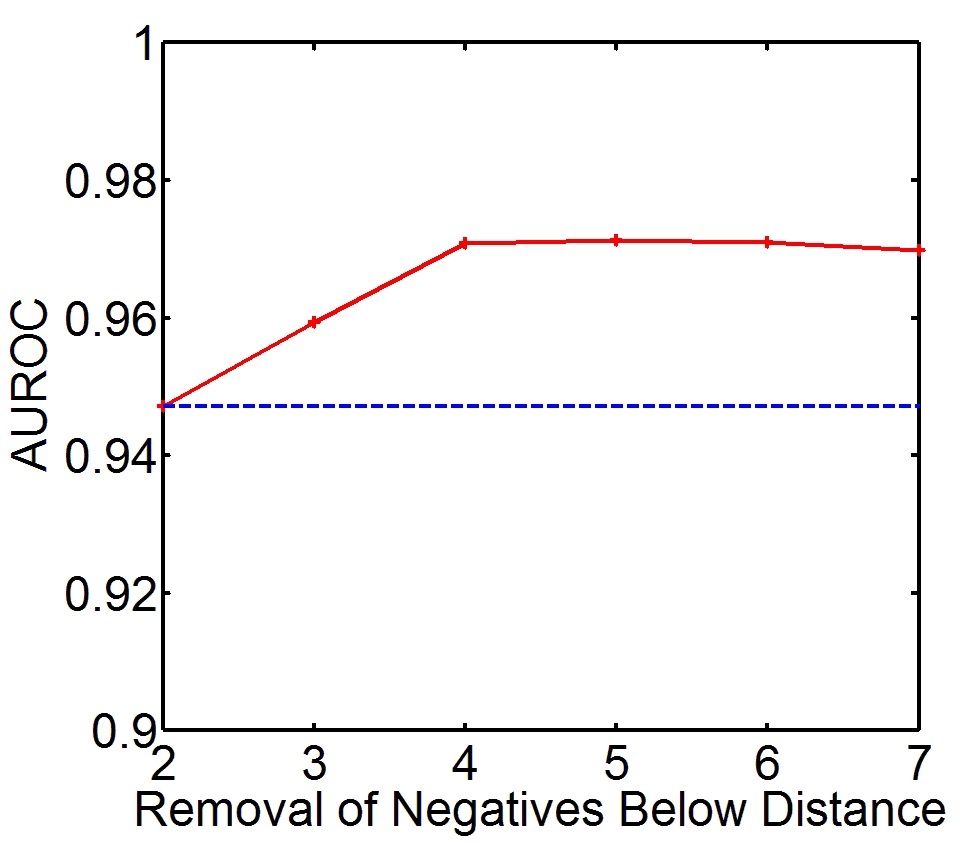}
	}
	\subfloat[PF-\textbf{Enron}]{
		\includegraphics[width=0.3\linewidth]{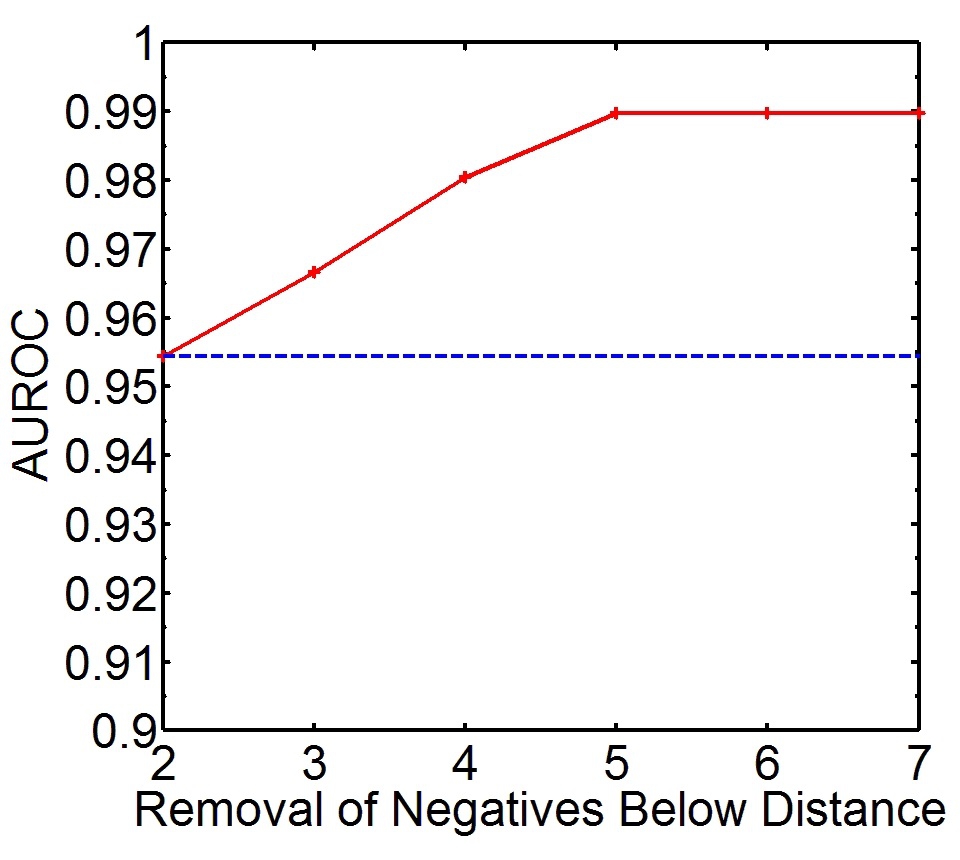}
	}\\
	\subfloat[PA-\textbf{Facebook}]{
		\includegraphics[width=0.3\linewidth]{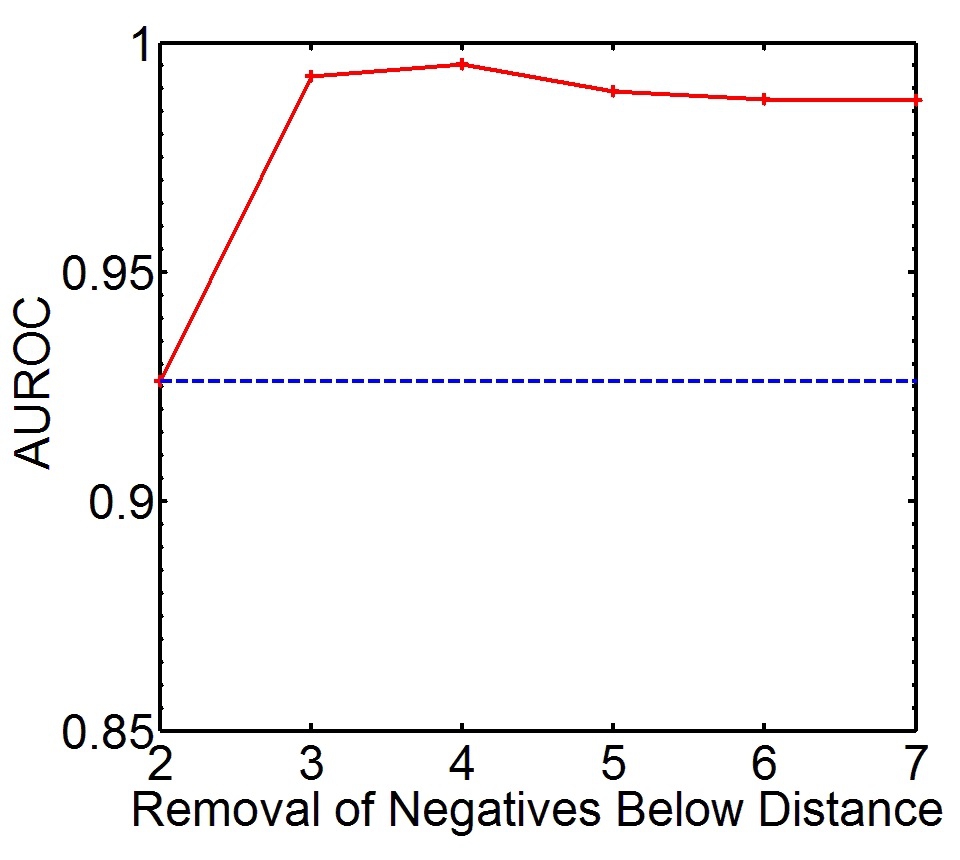}
	}
	\subfloat[PF-\textbf{Facebook}]{
		\includegraphics[width=0.3\linewidth]{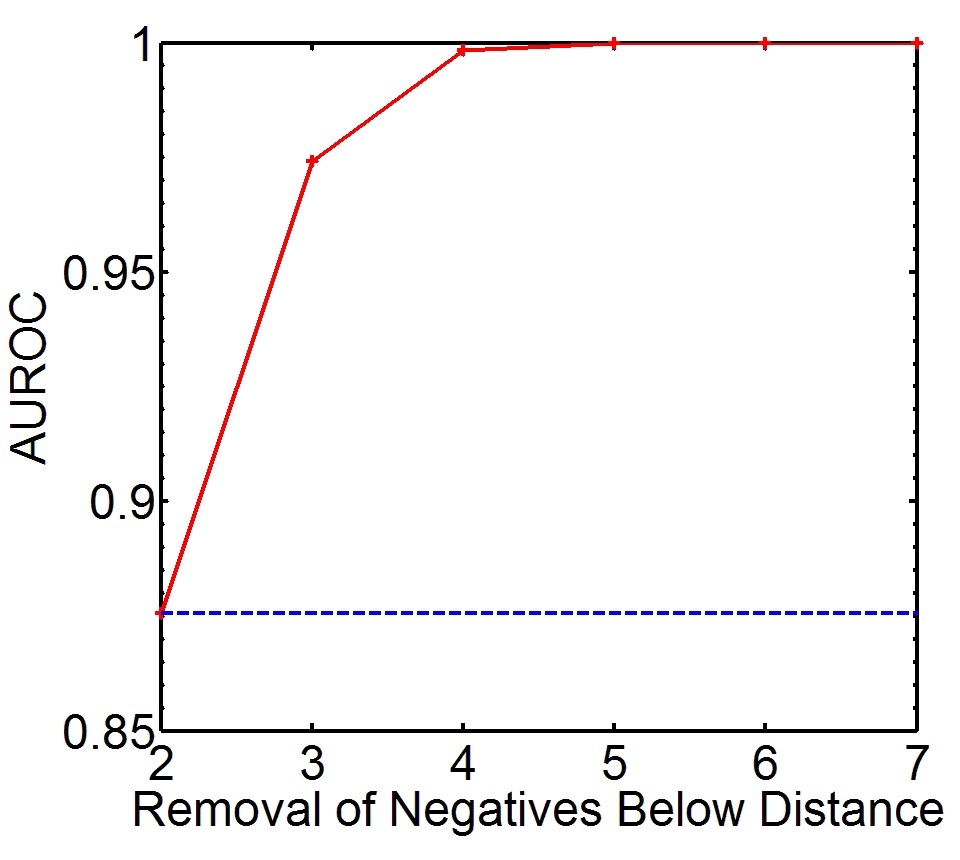}
	}	
	\caption{The effect of removing negative instances from increasingly distant potential links. The horizontal line represents the base AUROC on the unsampled test set. This figure illustrates the minimal effect of selectively filtering all negatives from low-distance neighborhoods. As we observe with the increase of $\ell$, the AUROC increases when all negative instances below $\ell$ are removed.}
	\label{fig:moreproof}
\end{figure}

The real data distribution on which we report performance when we perform this sampling is the relatively easy boundary described by the highly disparate $\ell=2$ positives and high-distance negatives. Figure \ref{fig:moreproof} further substantiates this point by illustrating the minimal effect of selectively filtering all negatives from low-distance neighborhoods. We know that performance in the $\ell=2$ is most significant because of the favorable class ratio and that improvements in defining this critical boundary offer the greatest rewards. Simultaneously, as the figure shows, in Condmat we can entirely remove all $\ell=2$ test negatives with only a 0.2\% effect on the AUROC for two predictors from entirely different families. We must remove all $\ell \leq 4$ test negatives before the alteration becomes conspicuous, yet these are the significant boundary instances.

Since we also know that data distributions do not affect ROC curves, we can extend this observation even when no sampling is involved: considering the entire set of potential links in ROC space evaluates prediction performance of low-distance positives versus high-distance negatives. We must instead describe performance within the distribution of $\ell=2$ positives and negatives and select predictors that optimize this boundary.

\subsection{Case Study on Kaggle Sampling}
\label{sec_sub_kaggle}
Of the described sampling methods, only uniform random selection from the complete set of potential edges preserves the testing distribution. Though questionable for meaningful evaluation of deployment potential, it is at least an attempt at unbiased evaluation. One recently employed alternative takes another approach to sampling, aggressively undersampling negatives from over-represented distances and preserving a much higher proportion of low-distance instances. The Kaggle link prediction competition \cite{narayanan:2011} undersampled the testing set by manipulating the amount of sampling from each neighborhood to maintain approximate balance within the neighborhoods. The distribution of distances exhibited by the 8960 test edges is shown in Figure \ref{fig:kaggle}.

Consider the results of Figure \ref{fig:kaggle} against the results of fair random sampling in the Condmat network. Unless Kaggle has an incredibly small effective diameter, it is impossible to obtain this type of distribution. It requires a sampling approach that includes low-distance edges from the testing network with artificially high probability. While this selective sampling approach might seem to better highlight some notion of average boundary performance across neighborhoods, it is instead meaningless because it creates a testing distribution that would never exist in deployment. The Kaggle competition disclosed that the test set was balanced. In a deployment scenario, it is impossible to provide to a prediction method a balance of positives and negatives from each distance, because that would require knowledge of the target of the prediction task itself.

More significantly, the Kaggle approach is unfair and incomparable because the original distribution is not preserved, and there is no reason to argue for one arbitrary manipulation of distance prevalence over another. Simultaneously, the AUROC will vary greatly according to each distributional shift. It is even possible to predictably manipulate the test set to achieve an arbitrary AUROC through such a sampling approach. Any results obtained via such a testing paradigm are inextricably tied to the sampling decisions themselves, and the sampling requires the very results we are predicting in a deployment scenario. As a result, Kaggle AUROCs may not be indicative of real achievable performance or even of a proper ranking of models in a realistic prediction task.

\begin{figure}
	\centering
	\subfloat[Condmat]{
		\includegraphics[width=0.43\linewidth]{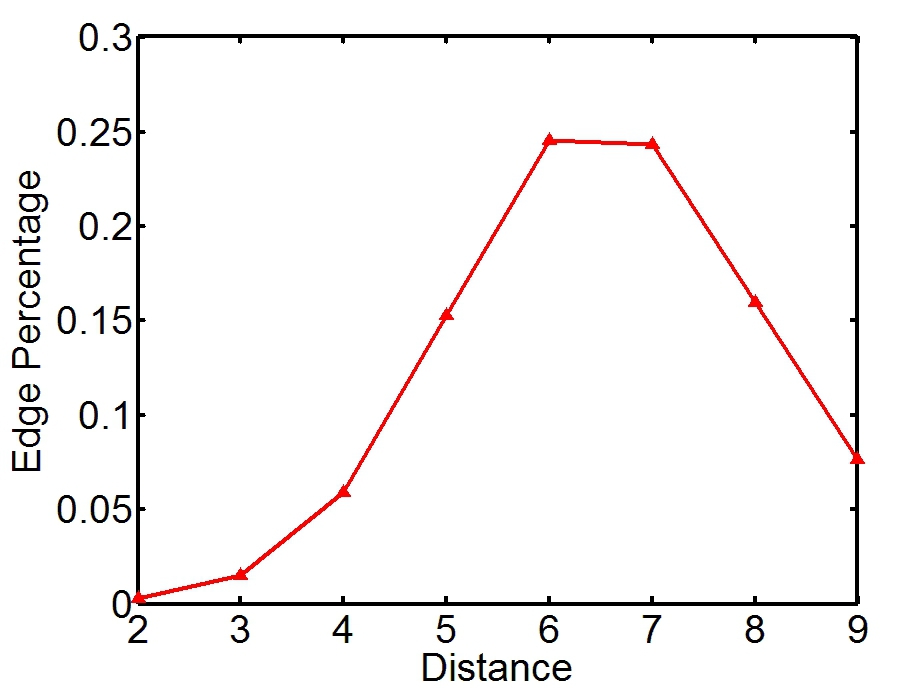}
	}
	\subfloat[Kaggle]{
		\includegraphics[width=0.43\linewidth]{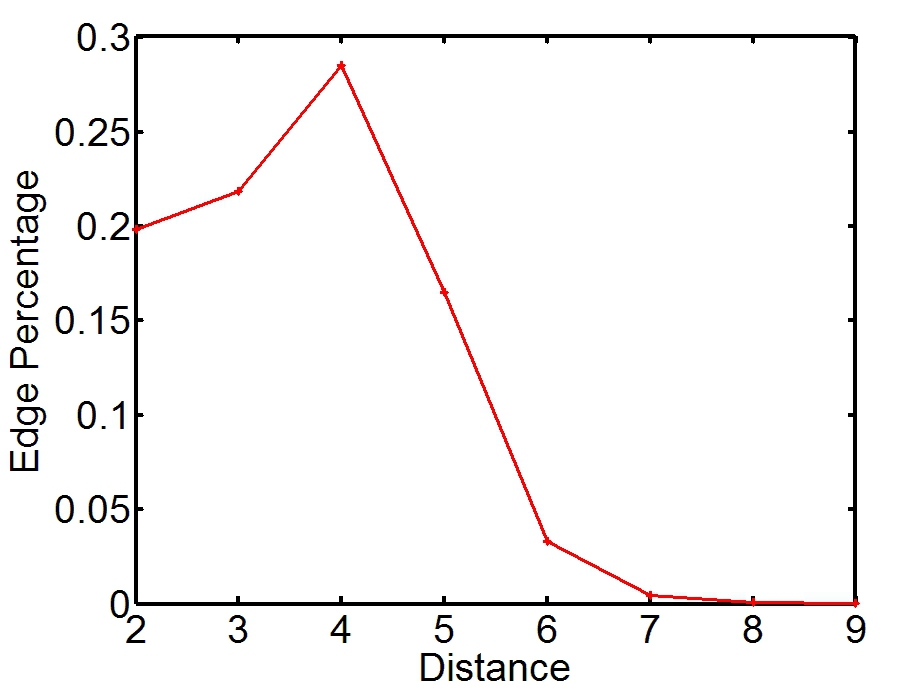}
	}
	\caption{Distribution of distances in test sets.}
	\label{fig:kaggle}
\end{figure}

To empirically explore the differences between a fair random sampling strategy and the Kaggle sampling strategy, we provide the distributions of distances in different data sets using both sampling strategies. Additionally we also compare the AUROC achievable in two sampling strategies. In Figure~\ref{fig:kaggle1} we compare the differences of distance distributions when using fair random sampling and Kaggle sampling, while in Table~\ref{performance_metric} we provide ROC performances (AUROC) of prediction methods under these two sampling strategies.

In Table~\ref{performance_metric} we can see that the apparent achievable performance using the Kaggle sampling strategy is remarkably higher, up to 25\%, than the performance achievable by fair random sampling. The performance discrepancy between fair random sampling and Kaggle sampling depends upon several factors, such as the prediction method, network size, and the geodesic distribution of positive instances. Here we will explore the observations made in the Table~\ref{performance_metric}.
\begin{itemize}
	\item \textbf{Geodesic Distribution}.
	Increasing $\ell$ increases the difficulty of the prediction sub-problem due to increasing imbalance. The Kaggle sampling maintains approximate balance within each neighborhood, greatly reducing the difficulty of the link prediction task. This is the cause for apparent performance improvements when using the Kaggle sampling strategy. 
	\item \textbf{Preferential Attachment}.\\
	Preferential attachment ignores the impact of geodesic distance between nodes, so it fails to penalize distant potential links appropriately. Kaggle sampling removes many high-distance negative instances, and most positive instances within high-distance neighborhoods have high preferential attachment scores. As a result, preferential attachment benefits particularly from Kaggle sampling.
	\item \textbf{PropFlow}.\\
	PropFlow considers the influence of geodesic distance on the formation of links. In Condmat or DBLP, when there are more positive instances spanning high distances, the Kaggle strategy unfairly penalizes path-based measures such as PropFlow. Contrarily when positive instances reside in low-distance neighborhoods, such as Enron and Facebook, PropFlow fares better.
	\item \textbf{Adamic Adar}.\\
	The Adamic/Adar method only has descriptive power within the $\ell=2$ sub-problem. In data sets where high-distance links are sampled more often, the apparent performance of Adamic/Adar is strikingly and unfairly impacted.
\end{itemize}

\begin{figure}
	\centering
	\subfloat[Condmat]{
		\includegraphics[width=0.4\linewidth]{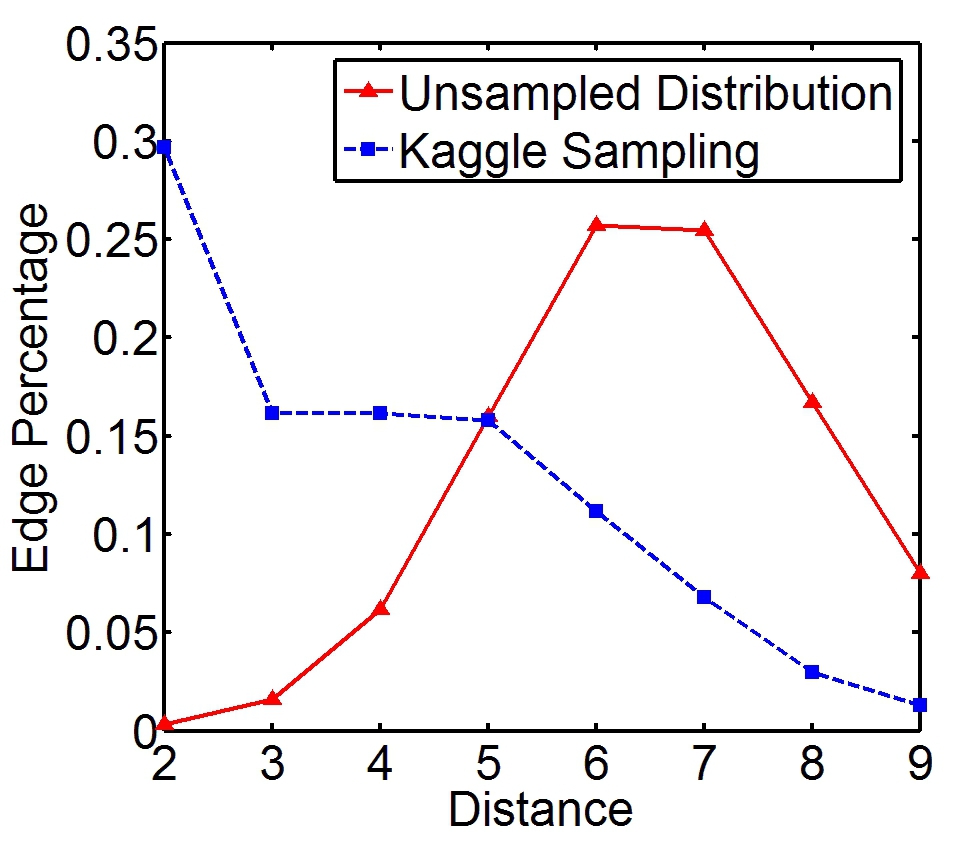}
	}
	\subfloat[DBLP]{
		\includegraphics[width=0.4\linewidth]{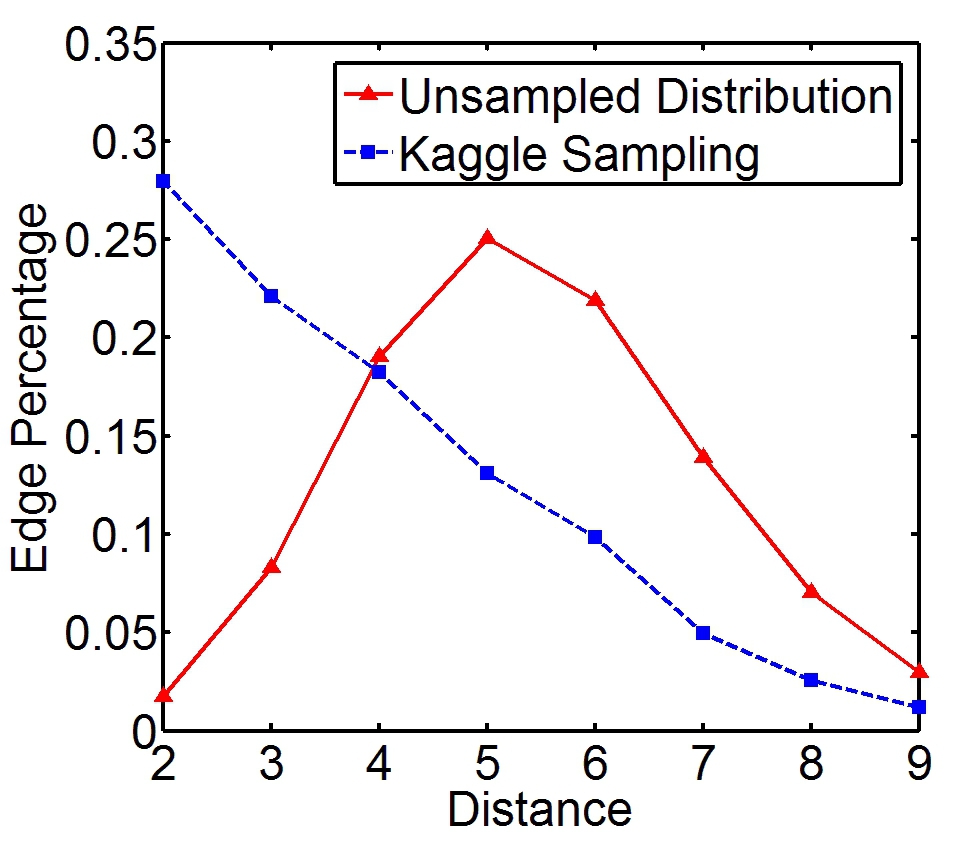}
	}\\
	\subfloat[Enron]{
		\includegraphics[width=0.4\linewidth]{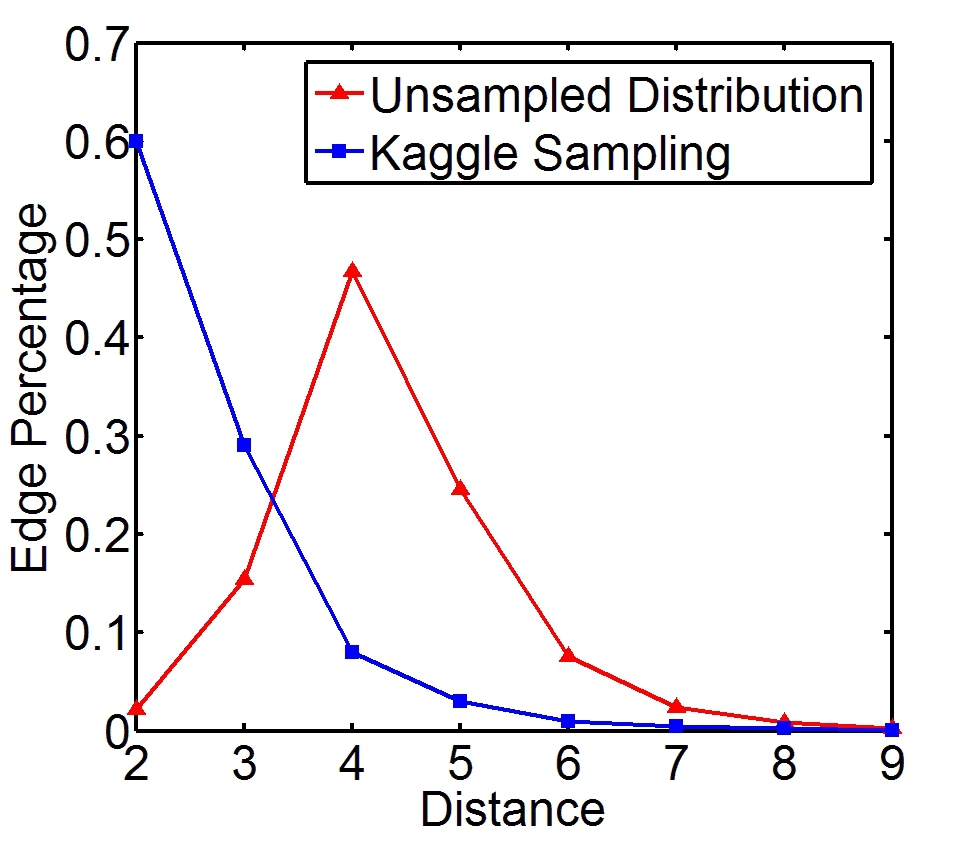}
	}
	\subfloat[Facebook]{
		\includegraphics[width=0.4\linewidth]{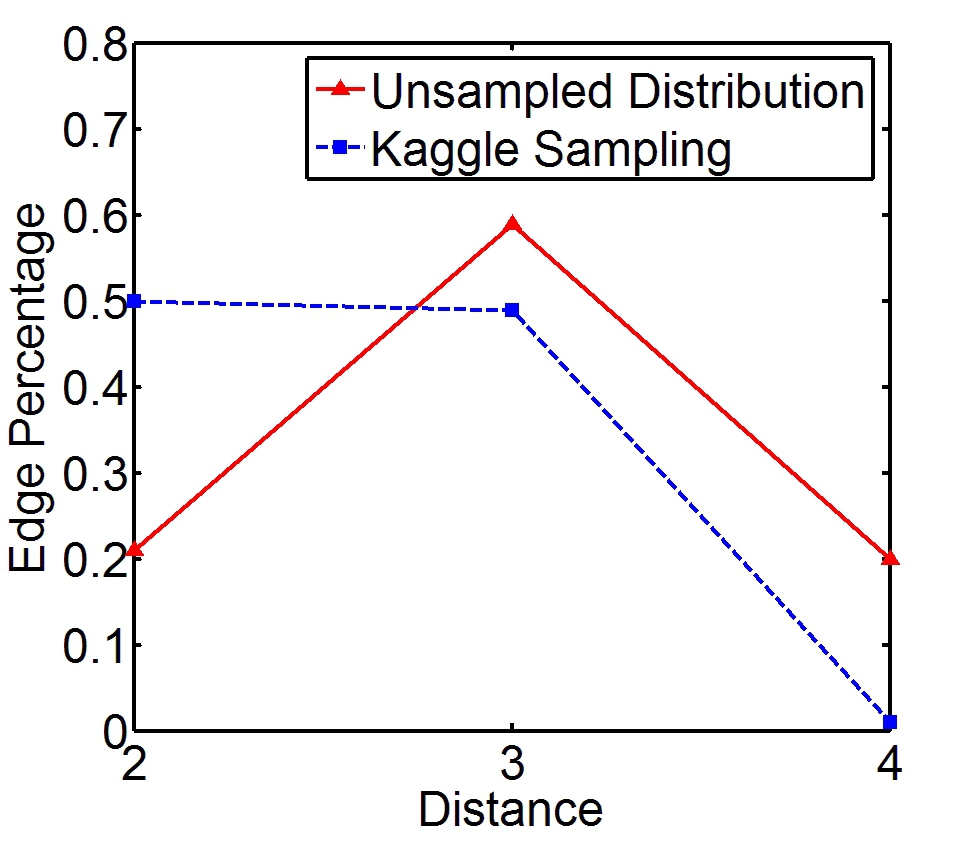}
	}
	\caption{Comparison between fair random sampling and Kaggle sampling.}
	\label{fig:kaggle1}
\end{figure}

\begin{table}
\renewcommand{\arraystretch}{1.3}
\caption{ROC Area (AUROC)} 
\label{performance_metric}
\centering 
\begin{tabular}{|c|p{0.8cm}|p{0.8cm}|p{0.8cm}|p{0.8cm}|p{0.8cm}|p{0.8cm}|p{0.8cm}|p{0.8cm}|} 
\cline{1-9}
{} & \multicolumn{2}{c|}{Condmat} & \multicolumn{2}{c|}{DBLP} & \multicolumn{2}{c|}{Enron} & \multicolumn{2}{c|}{Facebook}\\
\cline{2-9}
{Predictor} & {Fair} & {Kaggle} & {Fair} & {Kaggle} & {Fair} & {Kaggle} & {Fair} & {Kaggle}\\
\cline{1-9}
Pref. Attach. & 0.79 & \textbf{ 0.88} & 0.65 & \textbf{ 0.86} & 0.94 & \textbf{ 0.97} & 0.86 & \textbf{ 0.98}  \\ \cline{1-9}
PropFlow & 0.85 & \textbf{ 0.86} & 0.76 & \textbf{ 0.89} & 0.95 & \textbf{ 0.99} & 0.80 & \textbf{ 0.99} \\ \cline{1-9}
Adamic Adar & 0.63 & \textbf{ 0.63} & 0.62 & \textbf{ 0.63} & 0.79 & \textbf{ 0.80} & 0.66 & \textbf{ 0.75} \\ \cline{1-9}
\end{tabular}
\end{table}

\section{New Nodes}
\label{sec:newnodes}
There are two fundamentally different ways to generate test sets in link prediction. The first is to create a set of potential links by examining the predictor network and selecting all pairs for which no edge exists. Positives are those among the set that subsequently appear in the testing network, and negatives are all others. The second is to use the testing network to generate the set of potential links. Positives are those that exist in the testing network but not in the training network, and negatives are those that could exist in the testing network but do not. The subtle difference lies in whether or not the prediction method is faced with or penalized for links that involve \textit{nodes} that do not appear during training time.

The choice we should make depends on how the problem is posed. If we are faced with the problem of returning a most confident set of predictions, then new nodes in the testing network are irrelevant. Although we could predict that an existing node will connect to an unobserved node, we cannot possibly predict what node the unobserved node will be.

If we are faced with the problem of answering queries, then the ability to handle new nodes is an important aspect of performance. On one hand, we could offer a constant response, either positive or negative, to all queries regarding unfamiliar nodes. The response to offer and its effect on performance depend on the typical factors of cost and expected class distribution. On the other hand, some prediction methods may support natural extensions to provide a lesser amount of information in such cases. For instance, preferential attachment could be adapted to assume a degree of 1 for unknown nodes. Path-based predictors would have no basis to cope with this scenario whatsoever. In supervised classification, any such features become missing values and the algorithm must support such values.

Evaluating with potential links drawn from the testing network is problematic for decomposing the problem by distance since the distance must be computed from single-source shortest paths based on the pretend removal of the link that appears only in the testing network. Since distance is such a crucial player in determining link likelihood in most networks, this would nonetheless be an early step in making a determination about link formation likelihood in any case, so its computation for creating divided test sets is probably unavoidable. Given the extra complexity introduced by using potential link extraction within the testing network, we opt for determining link pairs for testing based on training data unless there is a compelling reason why this is unsatisfactory. This decision only has the potential to exclude links that are already known to be impossible to anticipate from training data, so it necessarily has the same effect across any set of predictors.

\section{Top $K$ Predictive Rate}
\label{sec:topk}
Though we caution trusting results that come only in terms of fixed thresholds metrics, some of these fixed thresholds metrics have significant real-world applications. A robust threshold curve metric exhibits the trade-off between sensitivity and specificity. A desirable property of a good fixed-threshold metric is that higher score implies an increase both in \textit{sensitivity} and \textit{specificity}. In this section we discuss the \textit{top $K$ predictive rate} \cite{liben-nowell:2007}, which we shall write as $TPR_{K}$, and explore its evaluation behavior in the link prediction problem. Top $K$ equivalent evaluation metrics have been discussed previously in the work of \cite{omadadhain:2005a,huang:2005,wang:2007}, and this measure is well-known as R-precision from information retrieval. We provide a proof of one property of $TPR_{K}$ that is important in link prediction. Based on this proof, we explore the restrictions of $TPR_{K}$ in evaluating the link prediction performance.

We denote the set of \textit{true positives} as $\mathbf{TP}$, the set of \textit{true negatives} as $\mathbf{TN}$, the set of \textit{false positives} as $\mathbf{FP}$, the set of \textit{false negatives} as $\mathbf{FN}$, the set of all positive instances as $\mathbf{P}$, and the set of all negative instances as $\mathbf{N}$.
\begin{defi}
$TPR_{K}$ is the percentage of correctly classified positive samples among the top $K$ instances in the ranking by a specified link predictor $\mathcal{P}$.
\end{defi}
This metric has the following property:
\begin{theorem}
\label{topk_linear}
When $K = |\mathbf{P}|$ in the link prediction problem, \textit{sensitivity} and \textit{specificity} are linearly dependent on $TPR_{K}$.
\end{theorem}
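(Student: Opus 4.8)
The plan is to translate the rank-based quantity $TPR_K$ into the four confusion-matrix counts $|TP|$, $|FP|$, $|FN|$, $|TN|$ and then read off the linear relationships directly. The crucial structural fact, which I would establish first, is that fixing $K = |\mathbf{P}|$ forces the classifier to label exactly $|\mathbf{P}|$ instances positive (the top $K$) and exactly $|\mathbf{N}|$ instances negative (everything below the cut, since the pool has size $|\mathbf{P}| + |\mathbf{N}|$). Thus the number of predicted positives equals the number of actual positives, which is the hinge of the whole argument.

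First I would write $|TP| = TPR_K \cdot K = TPR_K \cdot |\mathbf{P}|$, since by definition $TPR_K$ is the fraction of the top $K$ slots occupied by genuine positives. The remaining top slots are false positives, so $|FP| = K - |TP| = |\mathbf{P}| - |TP|$. Because every positive not captured in the top $K$ must be a false negative, $|FN| = |\mathbf{P}| - |TP|$ as well; note the symmetry $|FP| = |FN|$ that is special to the choice $K = |\mathbf{P}|$. Finally $|TN| = |\mathbf{N}| - |FP| = |\mathbf{N}| - (|\mathbf{P}| - |TP|)$.

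Then I would substitute into the definitions. Sensitivity is $|TP|/(|TP|+|FN|) = |TP|/|\mathbf{P}| = TPR_K$, so sensitivity is an affine (indeed identity) function of $TPR_K$. Specificity is $|TN|/(|FP|+|TN|) = |TN|/|\mathbf{N}|$, and substituting the expression for $|TN|$ yields $\text{Specificity} = 1 - \frac{|\mathbf{P}|}{|\mathbf{N}|}\left(1 - TPR_K\right)$, which is affine in $TPR_K$ with slope $|\mathbf{P}|/|\mathbf{N}|$. Both quantities are therefore linearly dependent on $TPR_K$, completing the claim.

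I do not anticipate a genuine analytical obstacle; the argument is bookkeeping on the confusion matrix. The only point demanding care is the normalization convention in the definition of $TPR_K$ — whether the denominator is $K$ or $|\mathbf{P}|$ — but since the theorem sets $K = |\mathbf{P}|$ the two coincide and the ambiguity is harmless. The single assumption I would flag explicitly is that the top-$K$ cut produces a well-defined label set (that is, ties are broken so exactly $K$ instances are declared positive); under the standard R-precision setup this holds by construction.
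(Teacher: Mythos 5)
Your proposal is correct and follows essentially the same route as the paper's proof: both reduce $TPR_K$ to confusion-matrix counts, observe that $K=|\mathbf{P}|$ makes sensitivity identical to $TPR_K$, and derive the same affine relation $\text{specificity} = 1 - \frac{|\mathbf{P}|}{|\mathbf{N}|}\left(1 - TPR_K\right)$. Your explicit remarks on the symmetry $|FP|=|FN|$ and on tie-breaking at the cut are minor additions the paper leaves implicit, but they do not change the argument.
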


\begin{proof}
By definition, we know:
\begin{align*}
TPR_{K} = \frac{|\mathbf{TP}|}{K}, \text{sensitivity} = \frac{|\mathbf{TP}|}{|\mathbf{P}|}
\end{align*}
So the equivalence $K = |\mathbf{P}|$ allows us to trivially conclude that $TPR_{K}$ and $\text{sensitivity}$ are identical. We can write \textit{specificity} as:
\begin{align*}
\text{specificity} = \frac{|\mathbf{TN}|}{|\mathbf{N}|}
\end{align*}
When $K = |\mathbf{P}|$, by definition $|\mathbf{FP}| = K - |\mathbf{TP}|$, because we predict that all top $K$ are positive instances, and we can conclude that:
\begin{align*}
&|\mathbf{P}| - |\mathbf{TP}| = |\mathbf{FP}| \\
&\rightarrow (1 - TPR_{K}) |\mathbf{P}| = |\mathbf{FP}| \\
&\rightarrow |\mathbf{TN}| + (1 - TPR_{K}) |\mathbf{P}| = |\mathbf{TN}| + |\mathbf{FP}| \\
&\rightarrow \frac{|\mathbf{TN}|}{|\mathbf{N}|} + (1 - TPR_{K}) |\mathbf{P}| \frac{1}{|\mathbf{N}|} = 1 \\
&\rightarrow \text{specificity} = 1- (1 - TPR_{K}) \frac{ |\mathbf{P}|}{|\mathbf{N}|}
\end{align*}
From this, we see that \textit{specificity} increases monotonically with the increase of $TPR_{K}$, and is linearly dependent on $TPR_{K}$.
\end{proof}

\begin{figure}
	\centering
	\subfloat[Condmat]{
		\includegraphics[width=0.4\linewidth]{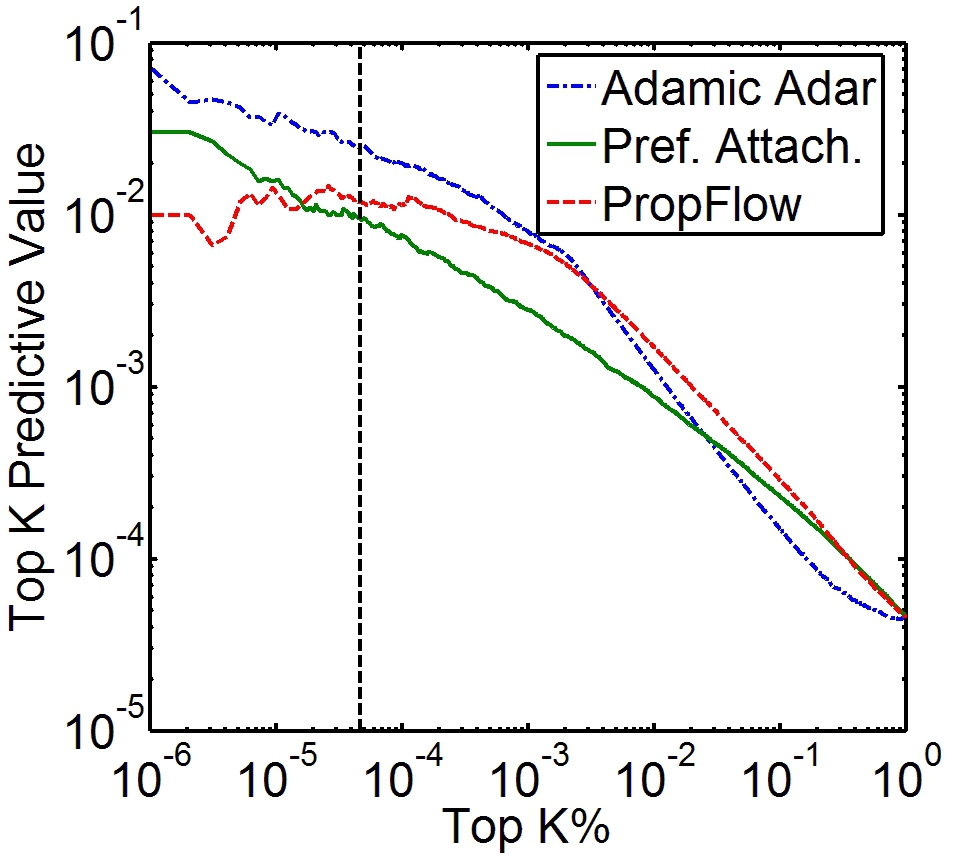}
	}
	\subfloat[DBLP]{
		\includegraphics[width=0.4\linewidth]{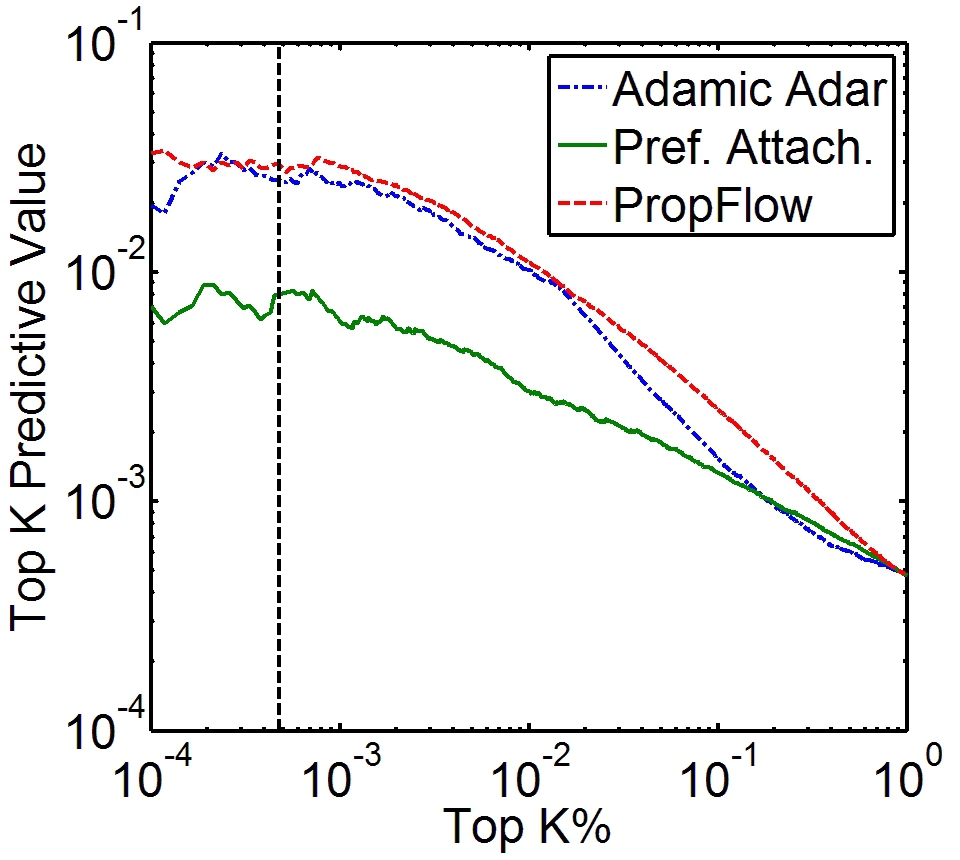}
	} \\
	\subfloat[Enron]{
		\includegraphics[width=0.4\linewidth]{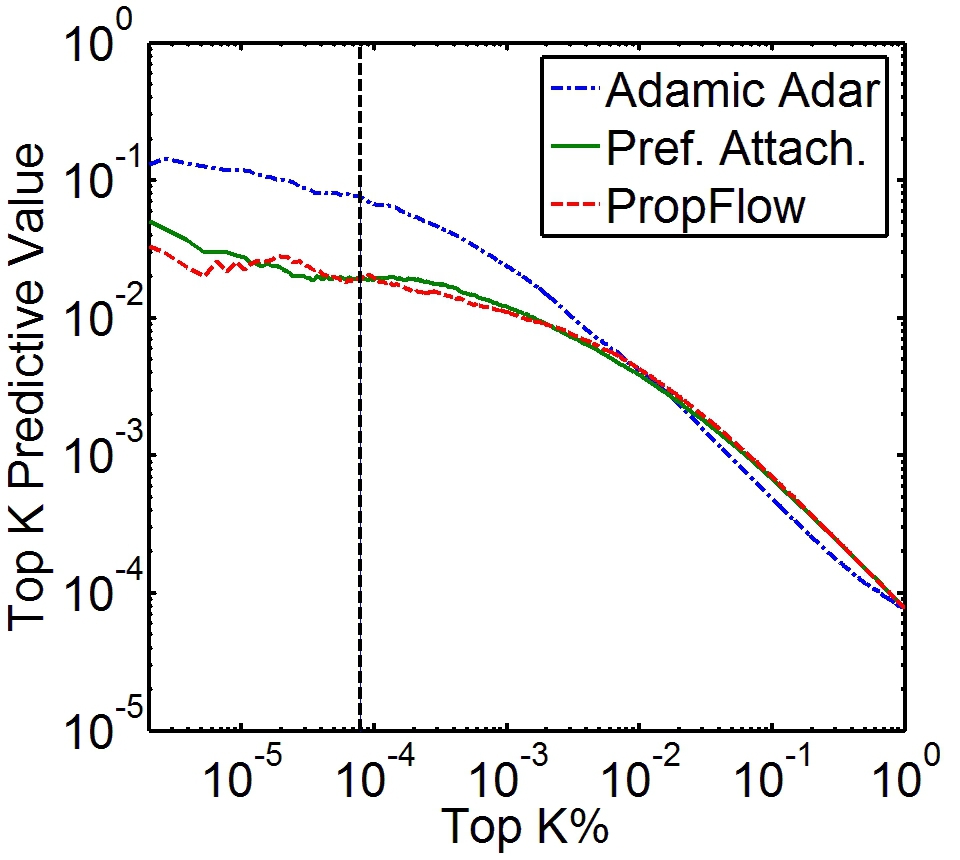}
	}
	\subfloat[Facebook]{
		\includegraphics[width=0.4\linewidth]{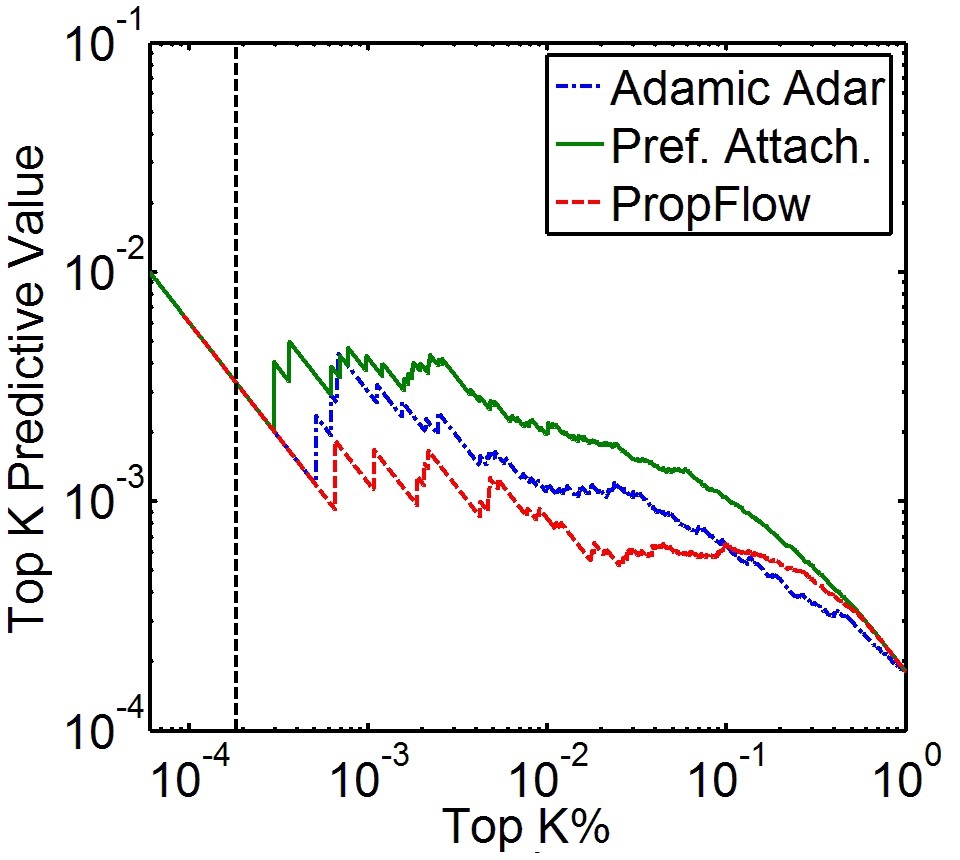}
	}	
	\caption{Top K Predictive Rate. In each figure the vertical line indicates that $K\% = \frac{|\mathbf{P}|}{|\mathbf{P}|+|\mathbf{N}|}$.}
	\label{fig:topk_metric}
\end{figure}

In Figure~\ref{fig:topk_metric} we provide the $TPR_{K}$ performance of predictors on Condmat, DBLP, Enron, and Facebook. 
The vertical line indicates the performance of a naive algorithm that draws samples as edges uniformly at random from all instances. Although the top $K$ predictive rate can provide a good performance estimation of link prediction methods when $K$ is appropriately selected, we still cannot recommended it as a primary measurement. If $|\mathbf{P}|$ is known, then $K$ may be set to that, but in real applications it is often impossible to know or even approximate the number of positive instances in advance, so $K$ is not specifiable. From the Figure~\ref{fig:topk_metric} we can see that different values of $K$ lead to different evaluations and even rankings of link prediction methods. Figure~\ref{fig:topk_metric} \textbf{(c)} shows that a small difference in $K$ will lead to a ranking reversal of the preferential attachment and PropFlow predictors. $TPR_{K}$ is a good metric for the link prediction task when the value of $K$ is appropriately selected, but evaluation results are too sensitive to use arbitrary $K$.

\section{The Case for Precision-Recall Curves}
\label{sec:prcurve}

ROC curves (and AUROC) are appropriate for typical data imbalance scenarios because they optimize toward a useful result and because the appearance of the curve provides a reasonable visual indicator of expected performance. One may achieve an AUROC of 0.99 in scenarios where data set sizes are relatively small ($10^3$ to $10^6$) and imbalance ratios are relatively modest (2 to 20). Corresponding precisions are near 1. For complete link prediction in sparse networks, when every potential new edge is classified, the imbalance ratio is \emph{lower} bounded by the number of vertices in the network \cite{lichtenwalter:2010}. ROC curves and areas can be deceptive in this situation. In a network with millions of vertices, even with an exceptional AUROC of 0.99, one could suffer small fractions as a maximal precision. Performance put in these terms is often considered unacceptable to researchers. In most domains, examining several million false positives to find each true positive \emph{is} the classification problem. Even putting aside more concrete theoretical criticisms of ROC curves and areas \cite{hand:2009}, in link prediction tasks they fail to honestly convey the difficulty of the problem and reasonable performance expectations for deployment. We argue for the use of precision-recall curves and AUPR in link prediction contexts.

\subsection{Geodesic Effect on Link Prediction Evaluation}
\label{sec:geodesic}
Precision-recall (PR) curves provide a more discriminative view of classification performance in extremely imbalanced contexts such as link prediction \cite{davis:2006}. Like ROC curves, PR curves are threshold curves. Each point corresponds to a different score threshold with a different precision and recall value. In PR curves, the x-axis is recall and the y-axis is precision. We will now revisit a problematic scenario that arose with AUROCs and demonstrate that AUPRs present a less deceptive view of predictor performance data. Notably, and compatible with our recommendations against sampling, PR curve construction procedures will require that negatives are not subsampled from the test set. This is not computationally problematic in the consideration of distance-restricted predictions.

\begin{figure}
	\centering
	\subfloat[PA-\textbf{Condmat}]{
		\includegraphics[width=0.23\linewidth]{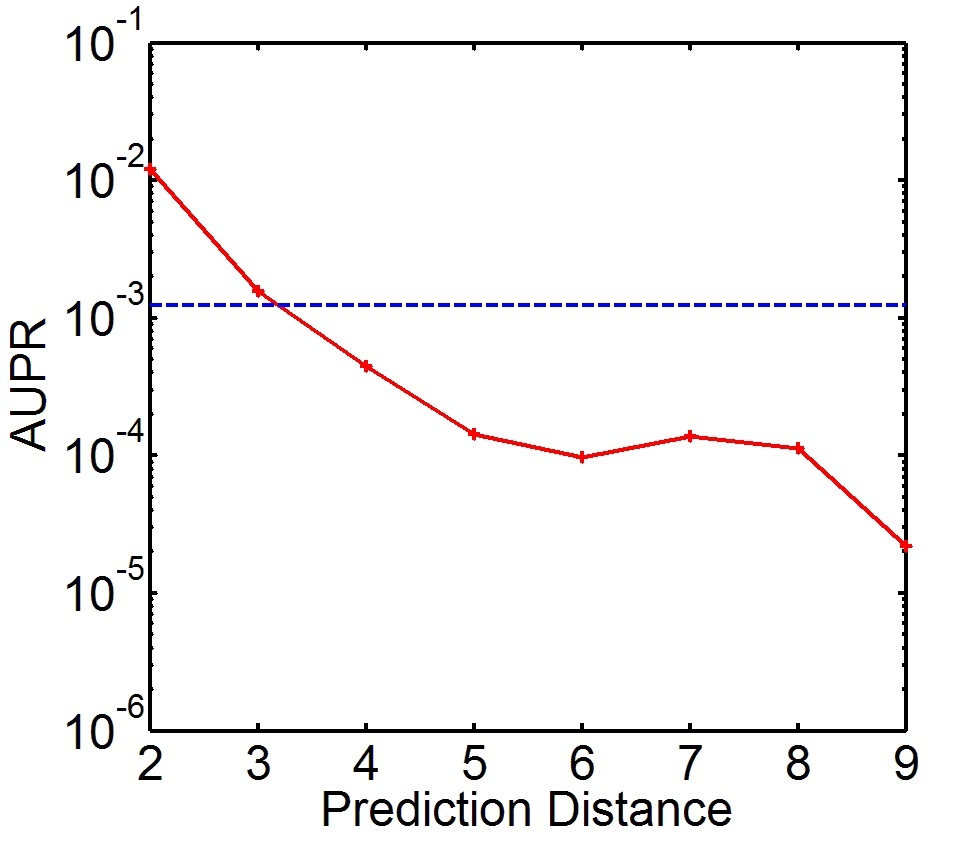}
	}
	\subfloat[PF-\textbf{Condmat}]{
		\includegraphics[width=0.23\linewidth]{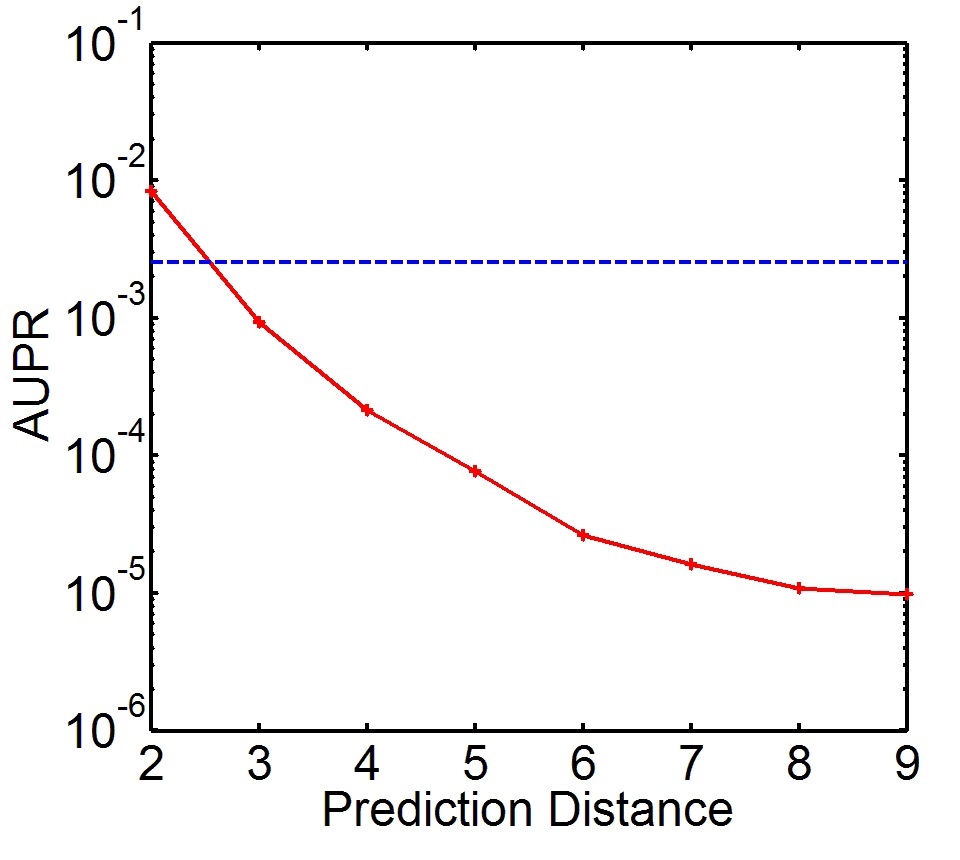}
	}
	\subfloat[PA-\textbf{DBLP}]{
		\includegraphics[width=0.23\linewidth]{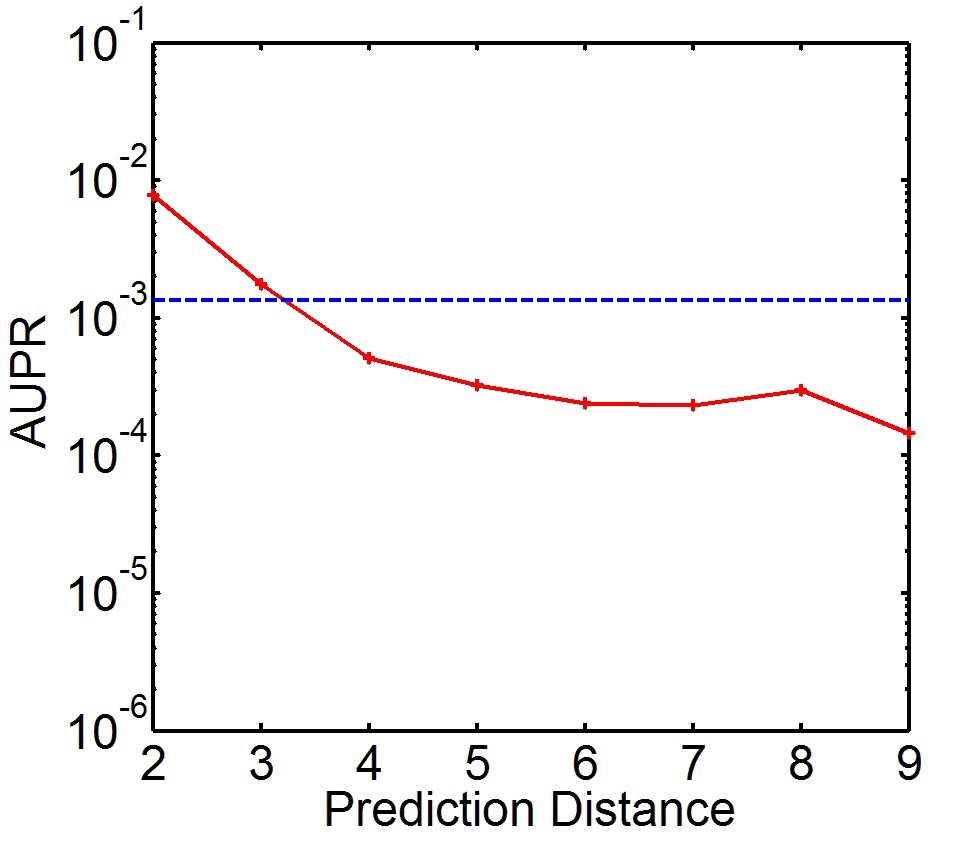}
	}
	\subfloat[PF-\textbf{DBLP}]{
		\includegraphics[width=0.23\linewidth]{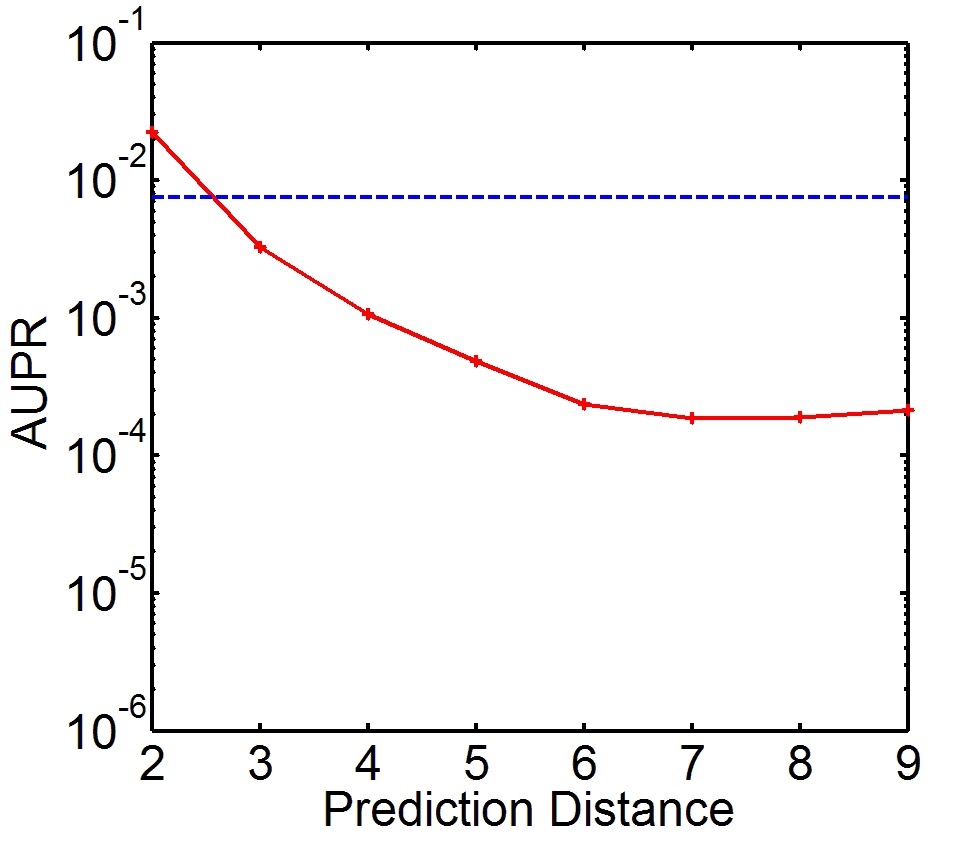}
	}\\
	\subfloat[PA-\textbf{Enron}]{
		\includegraphics[width=0.23\linewidth]{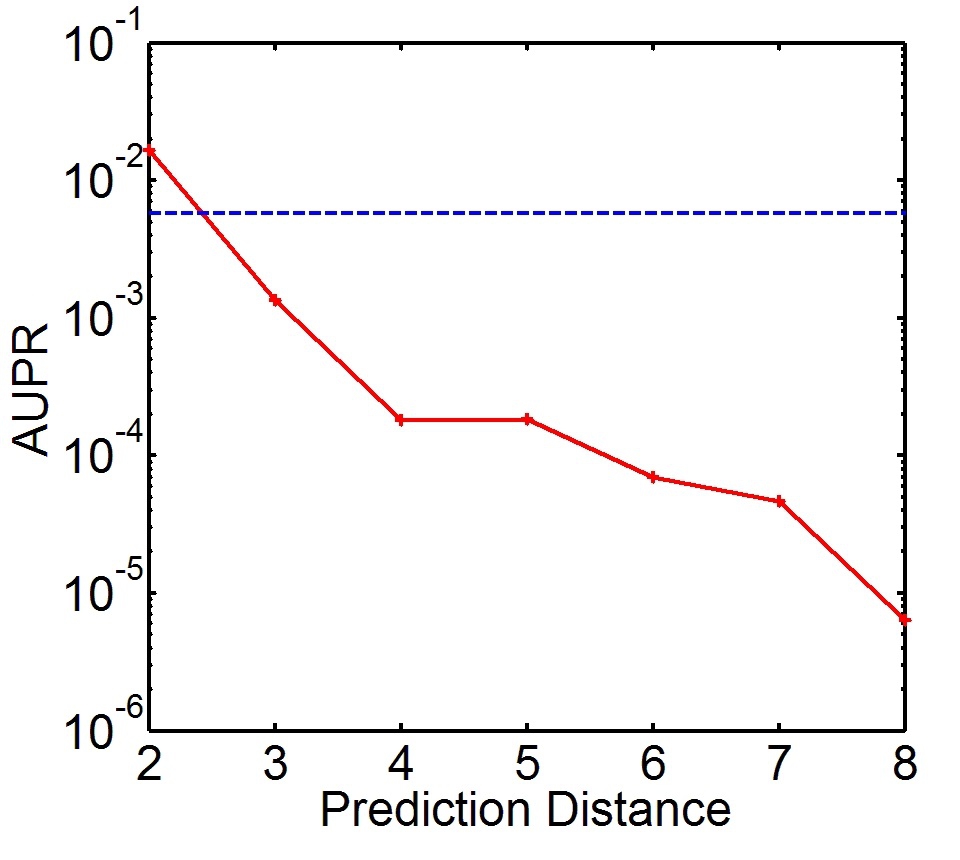}
	}
	\subfloat[PF-\textbf{Enron}]{
		\includegraphics[width=0.23\linewidth]{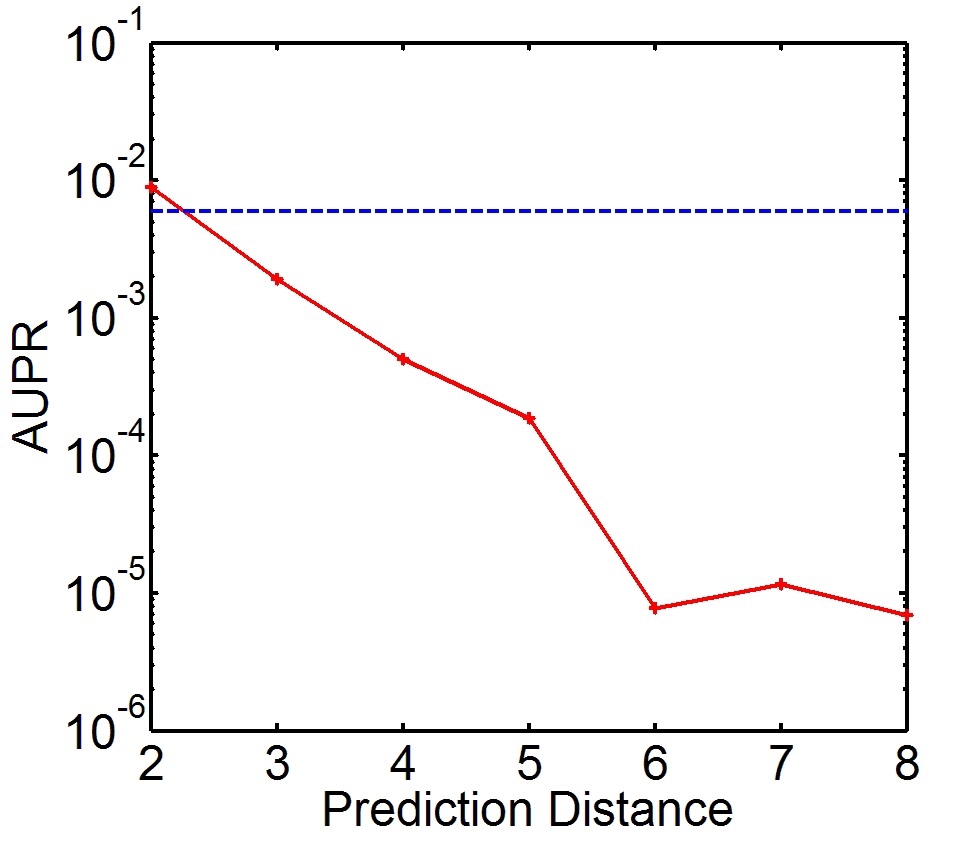}
	}
	\subfloat[PA-\textbf{Facebook}]{
		\includegraphics[width=0.23\linewidth]{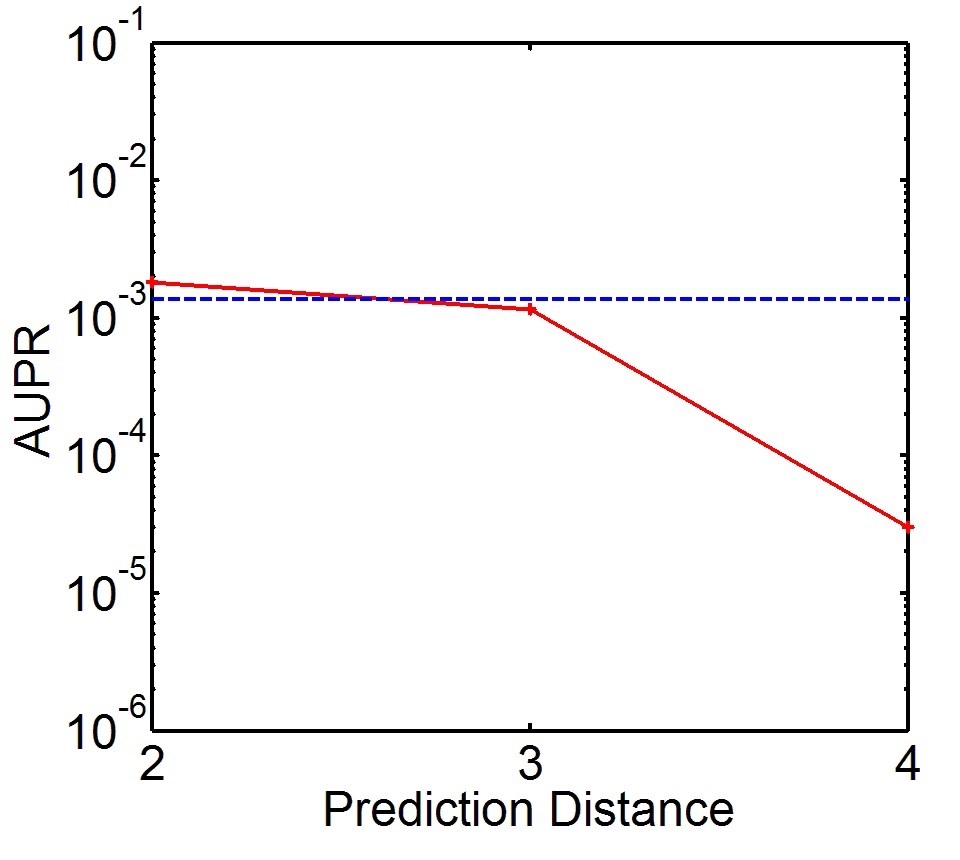}
	}
	\subfloat[PF-\textbf{Facebook}]{
		\includegraphics[width=0.23\linewidth]{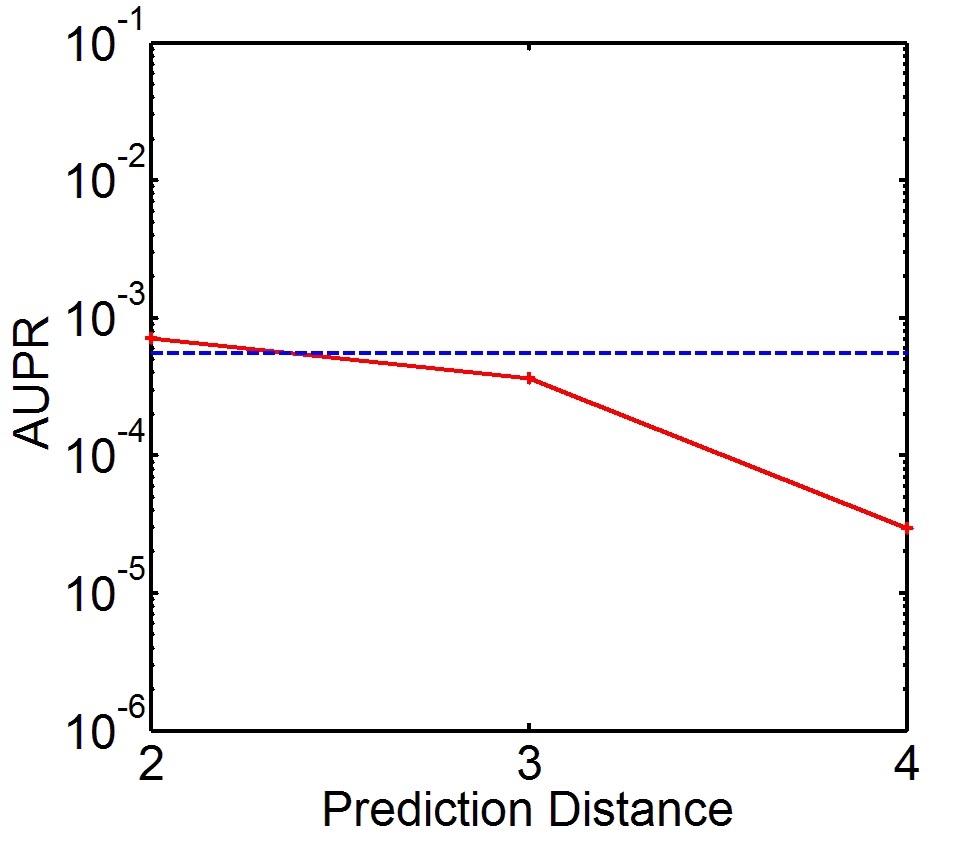}
	}	
	\caption{The precision-recall curve performance of the preferential attachment link predictor and PropFlow link predictor over each neighborhood. The horizontal line represents the performance apparent by considering all potential links as a corpus.}
	\label{fig:neighborhoods2}
		\vspace{-0.4cm}
\end{figure}
Figure \ref{fig:neighborhoods2} shows that the AUPR is higher for $\ell=2$ than it is for $\ell \leq \infty$. This also validates our proposition that increasing $\ell$ increases the difficulty of the prediction sub-problem due to increasing imbalance, which is made in Section~\ref{sec_sub_kaggle}. In the underlying curves, this is exhibited as much higher precisions throughout but especially for low to moderate values of recall. Performance by distance exhibits expected monotonic decline due to increasing baseline difficulty excluding the instabilities in very high distances. Compare this to Figure \ref{fig:neighborhoods} where the AUROC for all potential links was much greater than for any neighborhood individually, and the apparent performance was greatest in the 7-hop distance data set. We can also observe in Figure~\ref{fig:neighborhoods2} that the PR area increases almost monotonically with increasing $\ell$, which differs from the AUROC in Figure~\ref{fig:neighborhoods}.

\subsection{Temporal Effect on Link Prediction Evaluation}
\label{sec:temporal}
Time has remote yet non-negligible effects on the link prediction task. In this section we conduct two kinds of experiments. First, with a fixed training set, we compare AUROC and AUPR overall to AUROC and AUPR achievable in distinct sub-problems created by dividing the task by temporal distance when the test set is sliced into 5 subsets of the same duration. Second, with a fixed training set, we compare AUROC and AUPR overall across agglomerated sub-problems.

\begin{figure}
	\centering
	\includegraphics[width=0.88\linewidth]{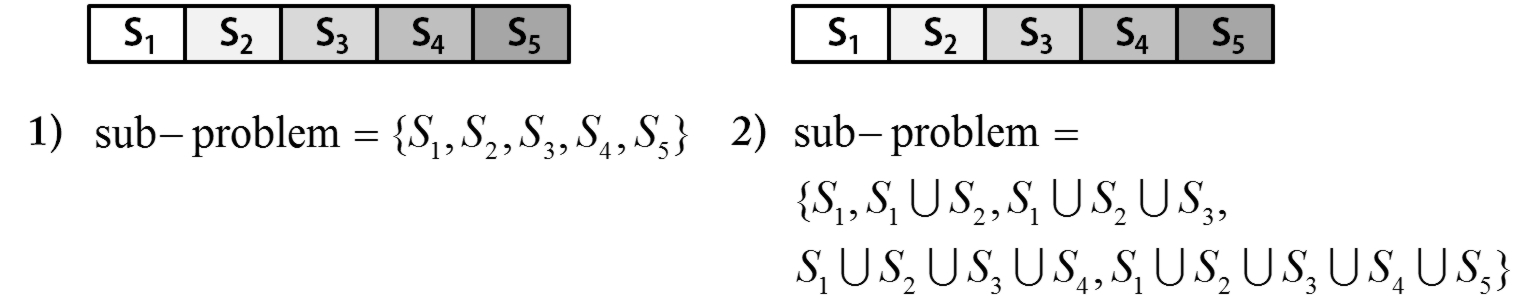}
	\caption{Experimental Configurations}
	\label{fig:experiment_setting}
\end{figure}

First we divide the testing set into 5 subsets of equal temporal duration. For instance, in DBLP there are 5 years of data in the testing set, so each year of data provides one sub-problem. Figure~\ref{fig:experiment_setting}-\textbf{1} gives an example of the experimental setting. As described in Figure~\ref{fig:experiment_setting}-\textbf{1} we denote these sub-problems as $S_{1}$, $S_{2}$, $S_{3}$, $S_{4}$, and $S_{5}$. $S_{i}$ increases in difficulty as $i$ increases, because the time series is not persistent \cite{yang:2012}, and the preponderance of a node to form new links decays exponentially \cite{leskovec:2008}. Based on this hypothesis the performance of a link predictor $\mathcal{P}$ on sub-problem $S_{i}$ should decrease with increasing $i$.

In Figure~\ref{fig:temporal_neighborhoods1} and Figure~\ref{fig:temporal_neighborhoods2} we provide ROC curves and PR curves. We revisit the deceptive nature of AUROCs and demonstrate that AUPRs present a less deceptive view of performance. Based on time series analysis, the performance of a predictor should decline in the presence of high temporal distance, but AUROCs fluctuate with increasing temporal distance. The AUPRs exhibit expected monotonic decline due to increasing baseline difficulty. This finding coincides with our results in Section~\ref{sec:sampling} and Section~\ref{sec:geodesic}, where AUROC values fluctuate with geodesic distance and AUPR values decrease monotonically.

\begin{figure}
	\centering
	\subfloat[PA-\textbf{DBLP}]{
		\includegraphics[width=0.23\linewidth]{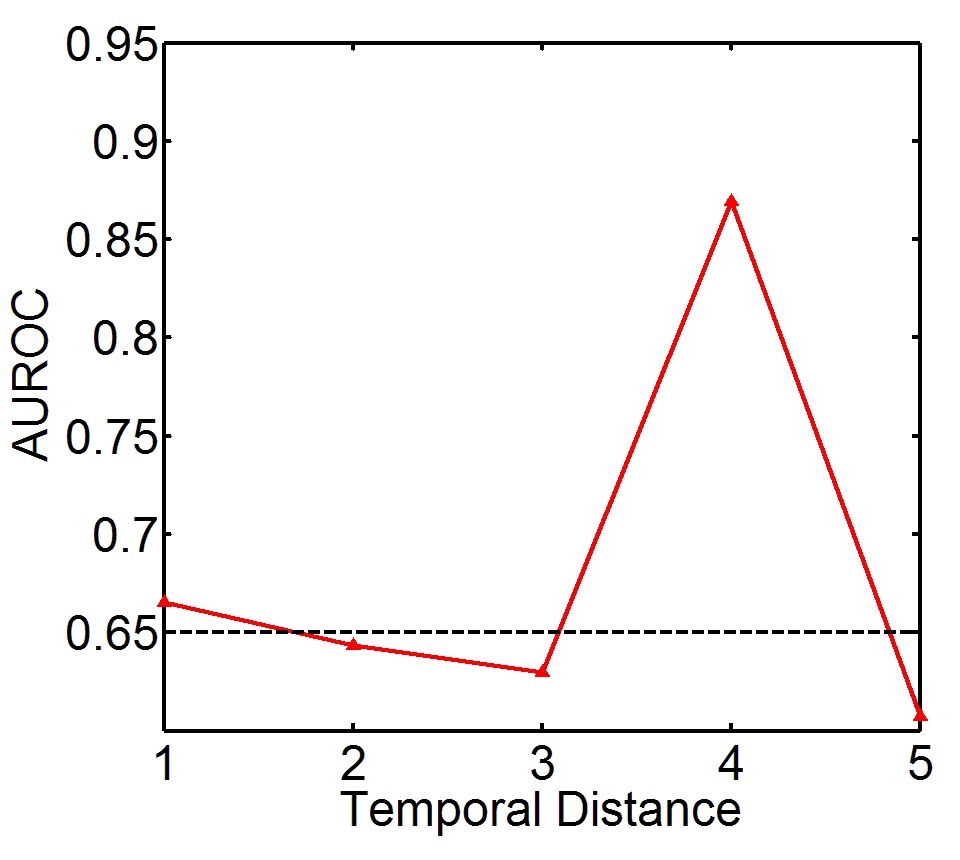}
	}
	\subfloat[PF-\textbf{DBLP}]{
		\includegraphics[width=0.23\linewidth]{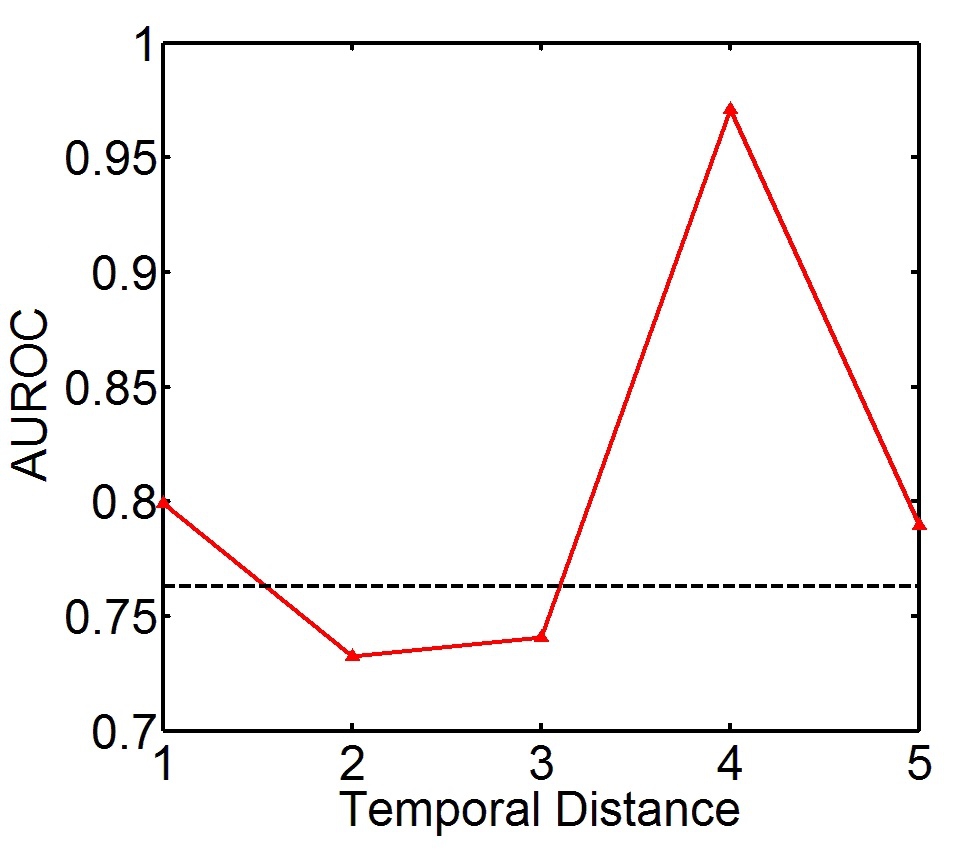}
	}
	\subfloat[PA-\textbf{Enron}]{
		\includegraphics[width=0.23\linewidth]{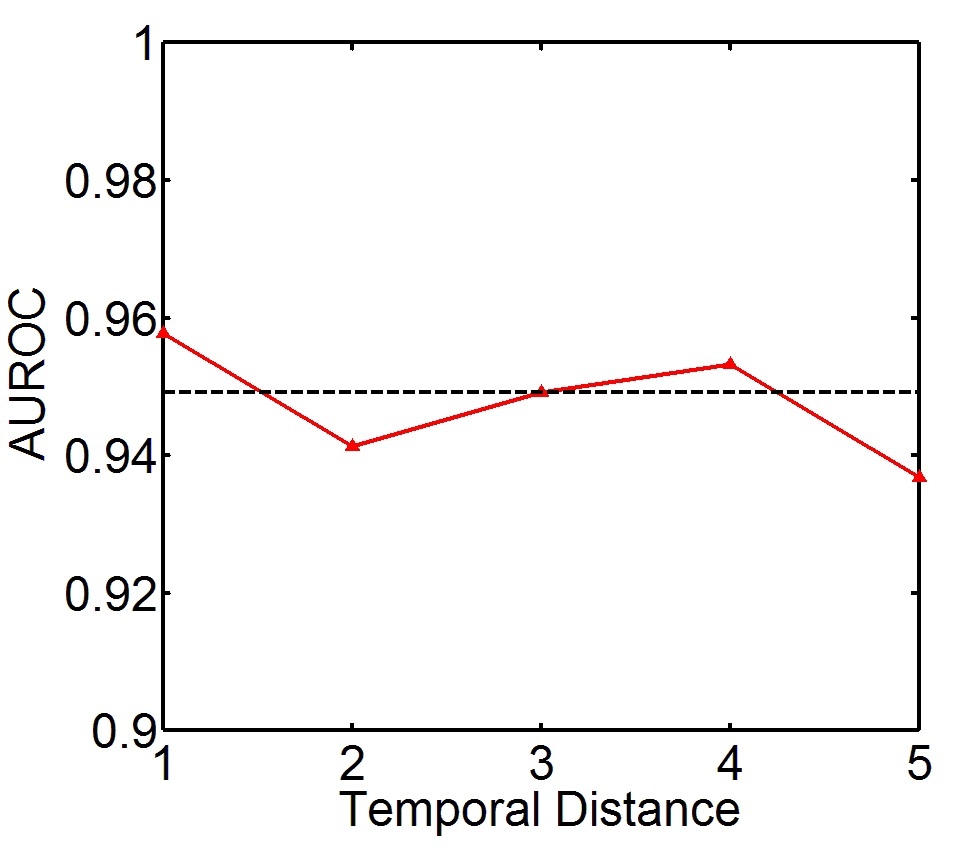}
	}
	\subfloat[PF-\textbf{Enron}]{
		\includegraphics[width=0.23\linewidth]{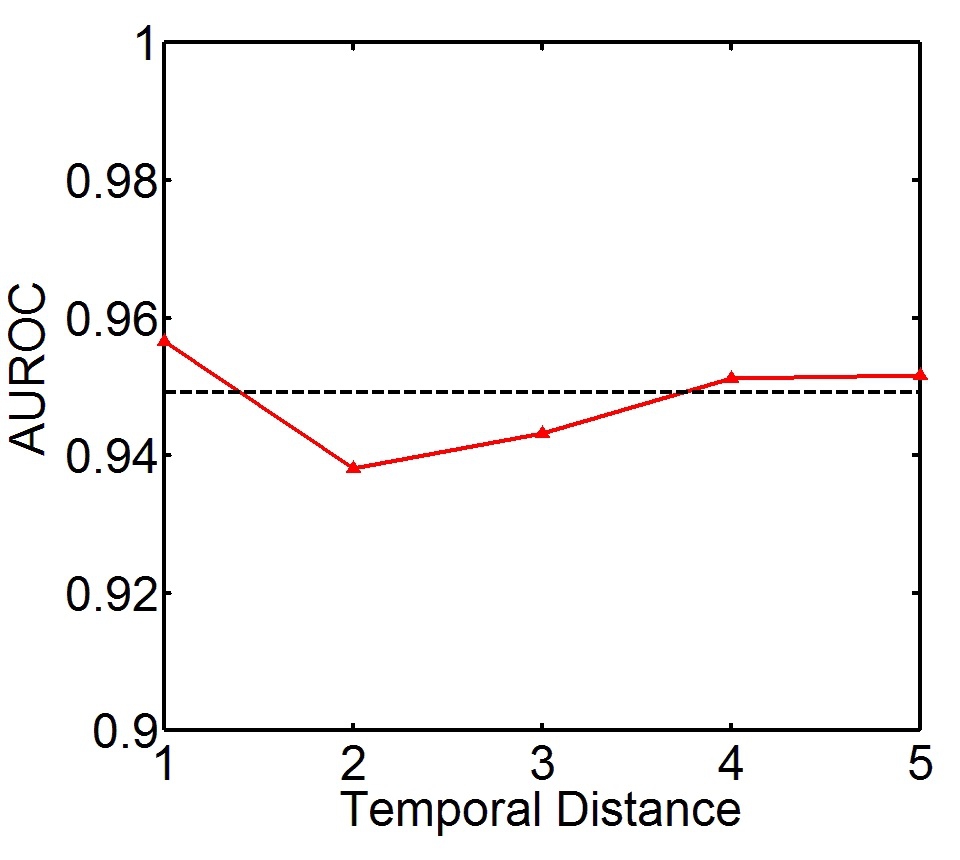}
	}
	\caption{The AUROC performance of preferential attachment and PropFlow over each temporal neighborhood. The horizontal line represents the performance apparent by considering all future potential links.}
	\label{fig:temporal_neighborhoods1}
\end{figure}

\begin{figure}
	\centering
	\subfloat[PA-\textbf{DBLP}]{
		\includegraphics[width=0.23\linewidth]{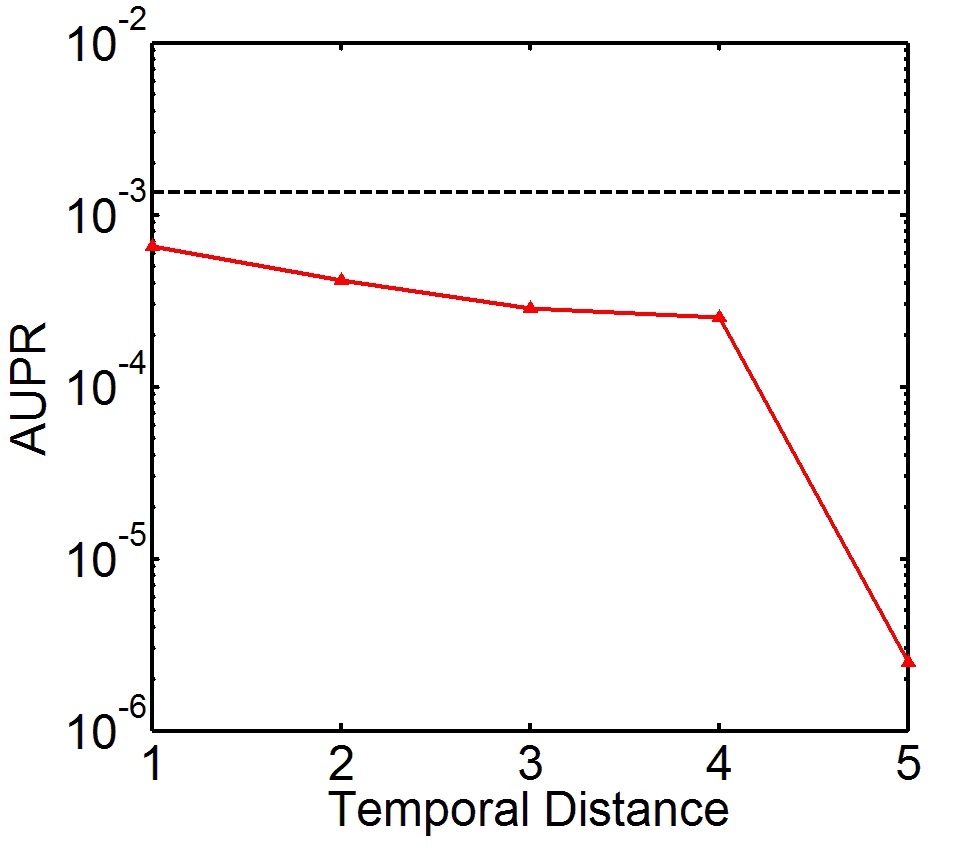}
	}
	\subfloat[PF-\textbf{DBLP}]{
		\includegraphics[width=0.23\linewidth]{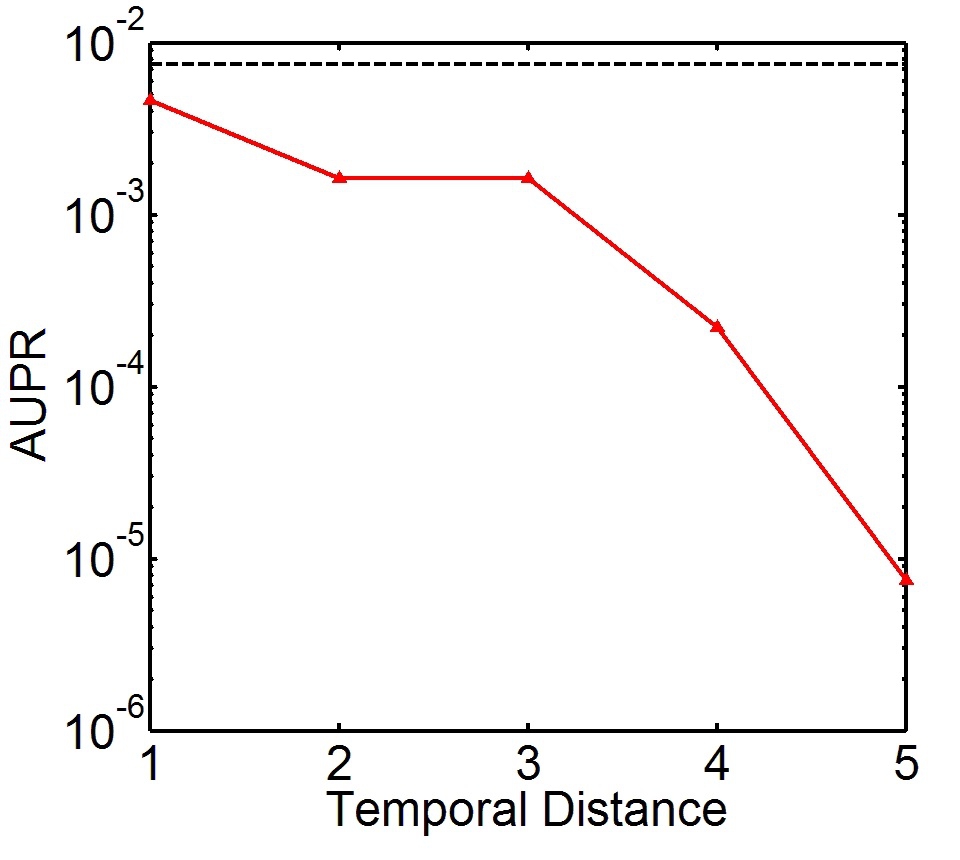}
	}
	\subfloat[PA-\textbf{Enron}]{
		\includegraphics[width=0.23\linewidth]{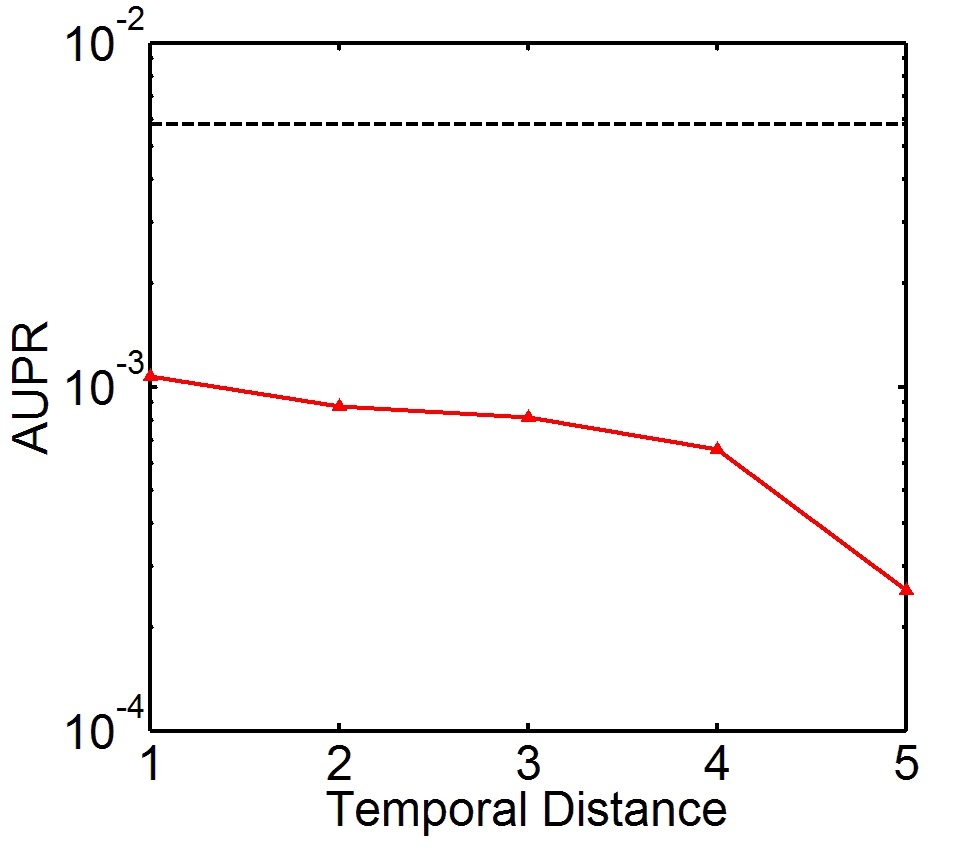}
	}
	\subfloat[PF-\textbf{Enron}]{
		\includegraphics[width=0.23\linewidth]{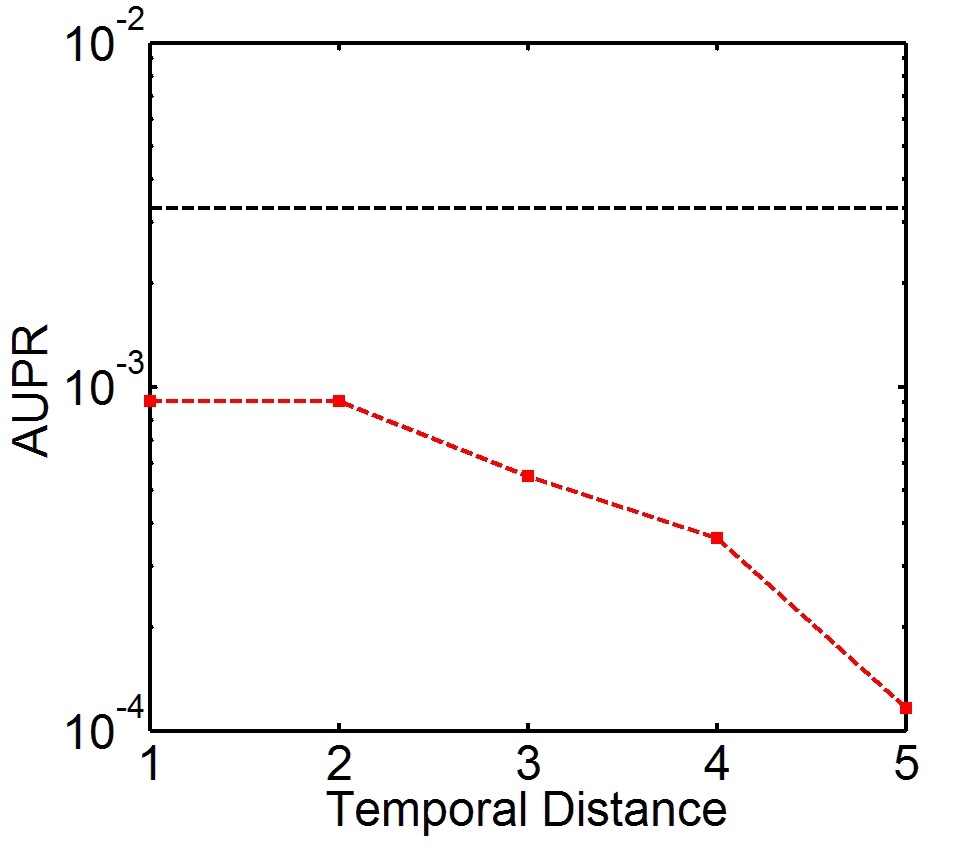}
	}
	\caption{The AUPR performance of preferential attachment and PropFlow over each temporal neighborhood. The horizontal line represents the performance apparent by considering all future potential links.}
	\label{fig:temporal_neighborhoods2}
\end{figure}

To emphasize these conclusions, we agglomerate the sub-problems over time as shown in Figure~\ref{fig:experiment_setting}-\textbf{2}. The sub-problem becomes easier when more years of data are included, because the total number of prediction candidates is fixed while the number of positive instances is increasing. In this way, the performance of a predictor should increase when the sub-problem includes more data. Figure~\ref{fig:temporal_incremental} shows distinct behaviors of AUROCs and AUPRs. The AUROC values decline or are unstable while the AUPR values for all three predictors increase monotonically.

\begin{figure}
	\centering
	\subfloat[ROC-\textbf{DBLP}]{
		\includegraphics[width=0.23\linewidth]{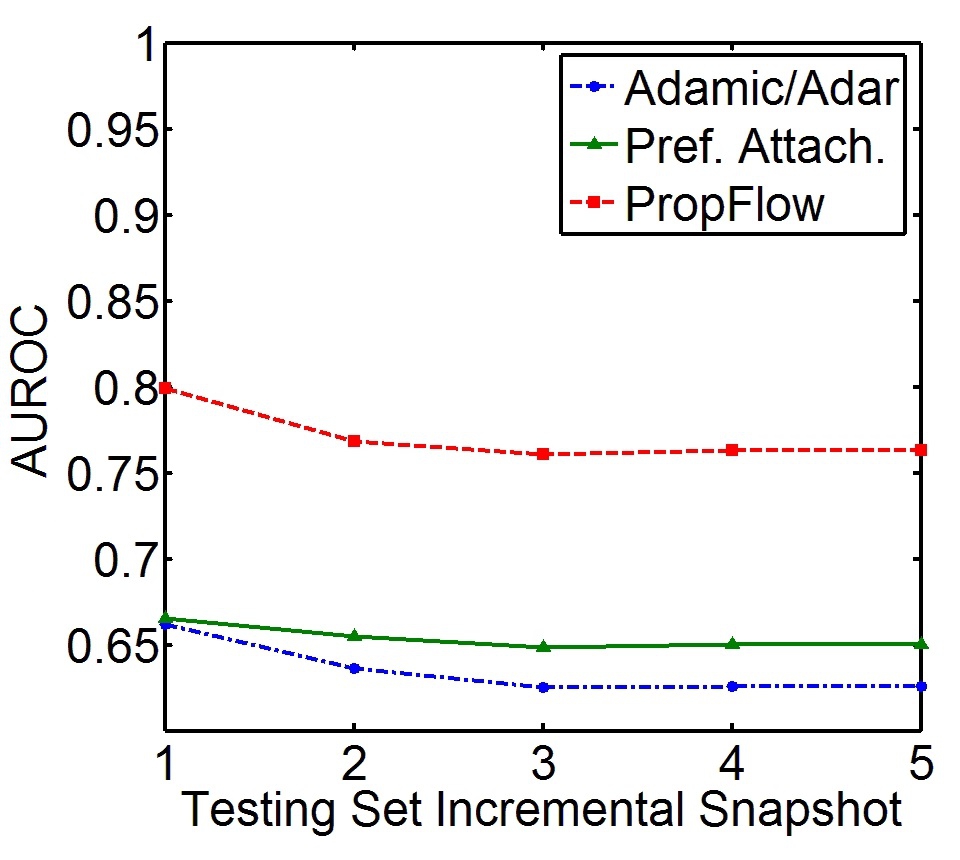}
	}
	\subfloat[PR Curve-\textbf{DBLP}]{
		\includegraphics[width=0.23\linewidth]{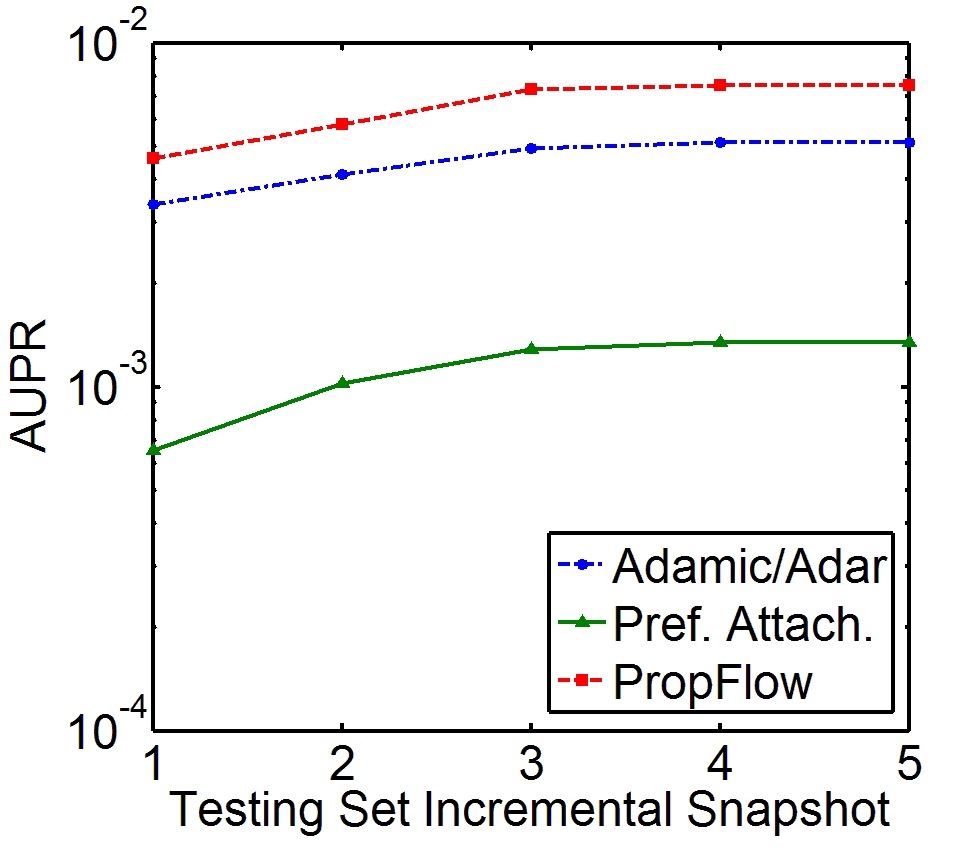}
	}
	\subfloat[ROC-\textbf{Enron}]{
		\includegraphics[width=0.23\linewidth]{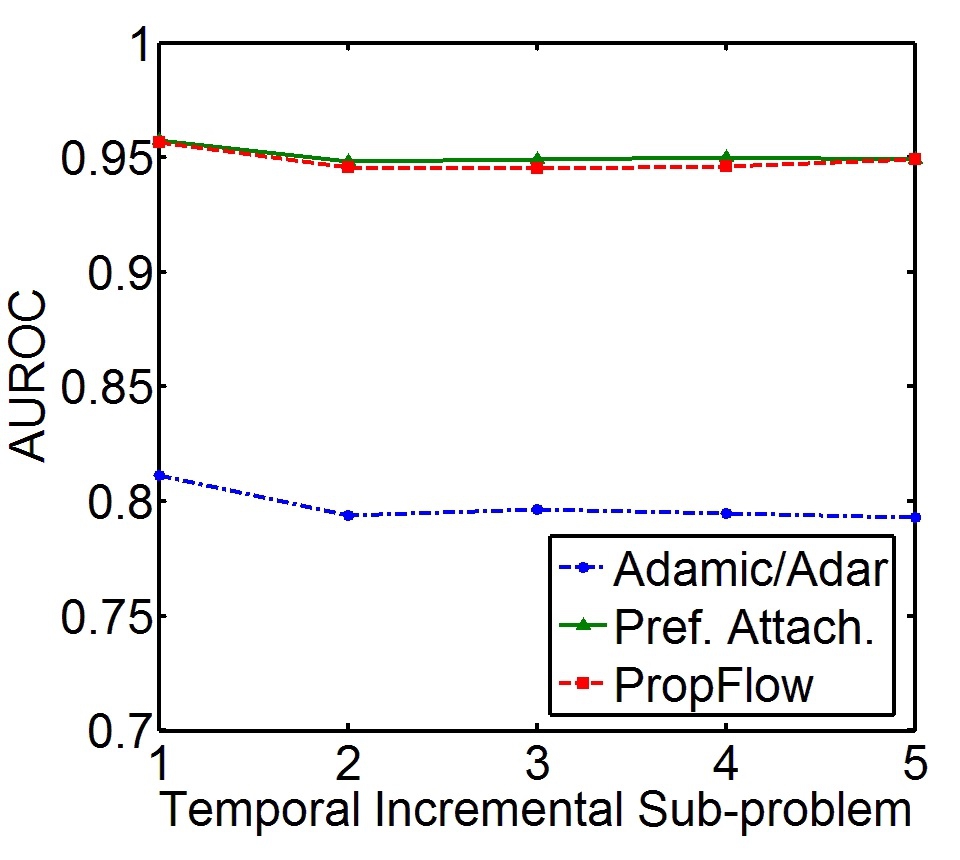}
	}
	\subfloat[PR Curve-\textbf{Enron}]{
		\includegraphics[width=0.23\linewidth]{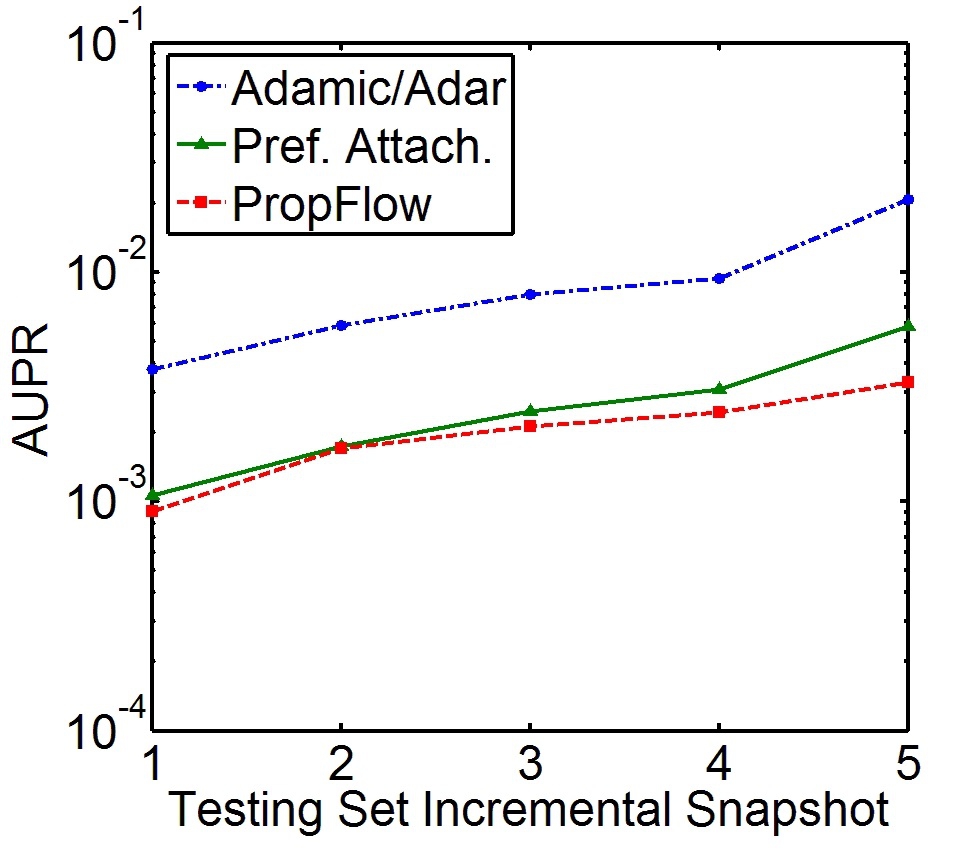}
	}
	\caption{The performance of incremental temporal neighborhood. The test set temporal distance is incremental. The sub-problem becomes easier when more years of data are included, because the total number of prediction candidates is fixed while the number of positive instances is increasing. The performance of a predictor should increase when the sub-problem includes more data.}
	\label{fig:temporal_incremental}
\end{figure}

\section{Conclusion}
\label{sec:conclusion}
To select the best predictor, we must know how to evaluate predictors. Beyond this, we must be sure that readers do not come away from papers with the question of how new methods \emph{actually} perform. It is more difficult to specify and explain link prediction evaluation strategies than with standard classification wherein it is sufficient to fully specify a data set, one of a few evaluation methods, and a given performance metric. In link prediction, there are many potential parameters often with many undesirable values. There is no question that the issues raised herein \emph{can} lead to questionable or misleading results. The theoretical and empirical demonstrations should convince the reader that they \emph{do} lead to questionable or misleading results. 

Much of this paper relies upon the premise that the class balance ratio differs, even differs wildly, across distances. There are certainly rare networks where such an expectation is tenuous, but the premise holds in every network with which the authors have worked including networks from the following families: biology, commerce, communication, collaboration, and citation.

Based on our observations and analysis, we propose the following guidelines:
\begin{enumerate}
	\item Use precision-recall curves and AUPR as an evaluation measure. In our experiments we observe that ROC curves and AUROC can be deceptive, but precision-recall curves yield better precision in evaluating the performance of link prediction (Section~\ref{sec_roc_robustness} and Section~\ref{sec:temporal}).
	\item Avoid fixed thresholds unless they are supplied by the problem domain. We identify limitations and drawbacks of fixed thresholds metrics, such as top $K$ predictive rate (Section~\ref{sec:topk}).
	\item Render prediction performance evaluation by geodesic distance. In Section~\ref{sec:geo_sub_problem}, Section~\ref{sec:prcurve}, and Section~\ref{sec:temporal} we observe that the performance of sub-problems created by dividing the link prediction task by geodesic distance is significantly different from overall link prediction performance. In cases where temporal distance is a significant component of prediction, consider also rendering performance by temporal distance.
	\item Do not undersample negatives from test sets, which will be of more manageable size due to consideration by distance. Experiments and proofs both demonstrate that undersampling negatives from test sets can lead to inaccurately measured performance and incorrect ranking of link predictors (Section~\ref{sec:sampling}).
	\item If negative undersampling \emph{is} undertaken for some reason, it must be based on a purely random sample of edges missing from the test network. It must not modify dimensions in the original distribution. Naturally \emph{any} sampling must be \emph{clearly} reported. Inappropriate methods of sampling will lead to incorrect measures of link prediction performance, a fact demonstrated by Kaggle sampling as analyzed in Section~\ref{sec_sub_kaggle}.
	\item In undirected networks, state if a method is invariant to designations of source and target. If it is not, state how the final output is produced. As we discover in Section~\ref{sec:directionality} different strategies overcoming the directionality issue lead to different judgments about performance.
	\item Always take care to use the same testing set instances regardless of the nature of the prediction algorithm.
	\item In temporal data, the final test set on which evaluation is performed should receive labels from a subsequent, unobserved snapshot of the data stream generating the network (Section~\ref{sec:temporal}).
	\item Consider whether the link prediction task set forth is to solve the recommendation problem or the query problem and construct test sets accordingly (Section~\ref{sec:newnodes}).
\end{enumerate}


\begin{acknowledgements}
Research was sponsored in part by the Army Research Laboratory (ARL) and was accomplished under Cooperative Agreement Number W911NF-09-2-0053, and in part from grant \#FA9550-12-1-0405 from the U.S. Air Force Office of Scientific Research (AFOSR) and the Defense Advanced Research Projects Agency (DARPA). The views and conclusions contained in this document are those of the authors and should not be interpreted as representing the official policies, either expressed or implied, of the ARL, AFOSR, DARPA or the U.S. Government. The U.S. Government is authorized to reproduce and distribute reprints for Government purposes notwithstanding any copyright notation hereon.
\end{acknowledgements}

\label{lastpage}
\end{document}